\theoremstyle{definition} 
	\newtheorem{definition}{Definition}
	\newtheorem{remark}[definition]{{Remark}}
\theoremstyle{plain} 
	\newtheorem{theorem}[definition]{Theorem}
	\newtheorem{proposition}[definition]{Proposition}
	\newtheorem{lemma}[definition]{Lemma}
	\newtheorem{corollary}[definition]{Corollary}
\newcommand{\pard}[2]{\frac{\delta#1}{\delta#2}}
\newcommand{\intl}{\int\limits}
\newcommand{\tc}{\widetilde{c}}
\newcommand{\tom}{\widetilde{\omega}}
\newcommand{\te}{\widetilde{e}}
\newcommand{\txi}{\widetilde{\xi}}
\newcommand{\tg}{\widetilde{g}}
\newcommand{\be}{\mathbf{e}}
\newcommand{\bom}{\boldsymbol{\omega}}
\newcommand{\bxi}{\boldsymbol{\xi}}
\newcommand{\bc}{\mathbf{c}}
\newcommand{\bg}{{\gamma}}
\newcommand{\bphi}{{\varphi}}
\newcommand{\bpi}{{\Pi}}
\newcommand{\FF}{\mathfrak{F}}
\newcommand{\AKSZ}{\textsf{\tiny AKSZ}}
\newcommand{\res}{\text{res}}
\newcommand{\T}{\mathfrak{T}}
\newcommand{\Fp}[2]{\mathcal{F}_{#2}^\partial(#1)}
\newcommand{\Sp}[2]{S_{#2}^\partial(#1)}
\newcommand{\Qp}[2]{Q_{#2}^\partial(#1)}
\newcommand{\varp}[2]{\varpi_{#2}^\partial(#1)}
\newcommand{\alp}[2]{\alpha_{#2}^\partial(#1)}
\newcommand{\uF}[1]{{\mathcal{F}}_{R}(#1)}
\newcommand{\uFF}[1]{{\mathfrak{F}}_{R}(#1)}
\newcommand{\uS}[1]{{S}_{R}(#1)}
\newcommand{\uV}[1]{{\varpi}_{R}(#1)}
\newcommand{\uQ}[1]{{Q}_{R}(#1)}
\newcommand{\UQ}{{Q}_R}
\newcommand{\calF}{\mathcal{F}}
\DeclareMathOperator{\BFV}{\mathit{BFV}}
\DeclareMathOperator{\BV}{\mathit{AKSZ}}
\newcommand{\zzlabel}[1]{\ifmeasuring@\else\ltx@label{#1}\fi} 
\newcounter{terms}[equation] 
\title[General Relativity and the AKSZ construction]{General Relativity and the AKSZ construction}
\author{G. Canepa}
\address{Institut f\"ur Mathematik, Universit\"at Z\"urich, Winterthurerstrasse 190, 8057 Z\"urich, Switzerland}
\email{giovanni.canepa@math.uzh.ch}
\author{A. S. Cattaneo}
\address{Institut f\"ur Mathematik, Universit\"at Z\"urich, Winterthurerstrasse 190, 8057 Z\"urich, Switzerland}
\email{cattaneo@math.uzh.ch}
\author{M. Schiavina}
\address{Institute for Theoretical Physics, ETH Zurich, Wolfgang Pauli strasse 27, 8092, Z\"urich, Switzerland}
\address{Department of Mathematics, ETH Zurich, R\"amistrasse 101, 8092, Z\"urich, Switzerland }
\email{micschia@phys.ethz.ch}
\thanks{This research was (partly) supported by the NCCR SwissMAP, funded by the Swiss National Science Foundation. G.C. and A.S.C. acknowledge partial support of SNF Grant No. 200020\_192080. M.S. acknowledges partial support from Swiss National Science Foundation grants P2ZHP2\_164999 and P300P2\_177862. Part of the research has been carried out while M.S. was a visiting scientist at Max Planck Institute for Mathematics, Bonn, funded by the Max Planck Foundation.}
\begin{document}
\begin{abstract}
    In this note the AKSZ construction is applied to the BFV description of the reduced phase space of the Einstein-Hilbert and of the Palatini--Cartan theories in every space-time dimension greater than two. In the former case one obtains a BV theory for the first-order formulation of Einstein--Hilbert theory, in the latter a BV theory for Palatini--Cartan theory with a partial implementation of the torsion-free condition already on the space of fields. All theories described here are BV versions of the same classical system on cylinders. The AKSZ implementations we present have the advantage of yielding a compatible BV-BFV description, which is the required starting point for a quantization in presence of a boundary.
\end{abstract}

\maketitle

\tableofcontents

\section*{Introduction}
A Lagrangian field theory $\mathsf{F}$ on a cylinder $\Sigma\times I$, where $I$ is a ``time'' interval, can be given a corresponding Hamiltonian description in terms of a symplectic manifold (the phase space) of the possible initial conditions on $\Sigma$ and a Hamiltonian that describes the time evolution. If the Lagrangian is degenerate, its Euler--Lagrange equations yield, in addition to time evolution, some constraints that have to be taken into account when specifying the initial conditions. The true phase space, called the ``reduced phase space'', is typically described as the symplectic reduction of the coisotropic submanifold defined by the constraints (hence the name).

This reduction is often singular, and one possible description is by means of a cohomological resolution: one introduces a complex whose cohomology is the algebra of functions of the reduced phase space. In addition, one wants this resolution to feature also the symplectic/Poisson nature of the phase space, and a solution to this problem is provided by the Batalin--Fradkin--Vilkovisky (BFV) formalism \cite{BV3} (see also \cite{Stasheff1997,SchaetzTH,Schaetz:2008}). We denote by $\FF^\partial$ the collection of data associated to the reduced phase space of a Lagrangian theory $\mathsf{F}$, as a BFV theory (Definition \ref{def:BFV}). 
 
On the other hand, a flexible way to deal with a degenerate Lagrangian is the Batalin--Vilkovisky (BV) formalism \cite{BV2}, which allows a cohomological resolution of the space of solutions to the Euler--Lagrange equations modulo symmetries but is also the starting point for perturbative quantization. We denote with $\FF$ the BV data associated to a classical Lagrangian field theory $\mathsf{F}$, as a BV theory (Definition \ref{def:BV}).

To quantize a Lagrangian field theory $\mathsf{F}$ on a cylinder $\Sigma\times I$, one needs a good relation between its associated BV and BFV data $\FF$ and $\FF^\partial$. 
In \cite{CMR2012b} an explicit procedure was introduced to construct what in \cite{CMR2012} is called a BV-BFV theory (Definition \ref{def:BVBFV}), associating to the BV data $\FF$ certain BFV data denoted by $\BFV(\FF)$ in a way suitable for quantization \cite{CMR2} --- under some regularity assumptions. In regular cases it relates $\FF$ and $\FF^\partial$, so that that $\BFV(\FF)=\FF^\partial$. 

While it is true that both $\FF$ and $\FF^\partial$ depend on $\mathsf{F}$, the relation $\BFV(\FF)=\FF^\partial$ is not guaranteed, and it is a necessary requirement for BV quantisation with boundary \cite{CMR2}. This relation turns out to hold for a large variety of field theories, including general relativity (GR) in the Einstein--Hilbert (EH) formulation in any space--time dimension greater than $2$ \cite{CSEH}. However, the procedure notably fails in the case of GR in the Palatini--Cartan (PC) formulation\footnote{{There appears to be no uniform consensus on the nomenclature to best attribute and label the theory that will be described in Sections \ref{sec:PC-BFV_theory} and \ref{s:AKSZPC}. We discuss our choice ``Palatini--Cartan'' in \cite{CS2019} (see also references therein). Other choices include the names of Einstein, Weyl, Sciama and Kibble, in various combinations.}} in $3+1$ dimensions \cite{CS2017}, as the construction of $BFV(\FF)$ is obstructed. However, $\FF^\partial$ exists and has been presented in \cite{CCS2020}.

Conversely, given a BFV theory $\FF^\partial$ associated to a manifold $\Sigma$, there is a standard way\footnote{To the best of our knowledge, the first explicit application of the AKSZ construction to a BFV target to produce a BV structure in one dimension goes back to \cite{GrigorievDaamgard}.} to produce a BV theory on $\Sigma\times I$ by means of a construction due to Alexandrov, Kontsevich, Schwarz and Zaboronski (AKSZ \cite{AKSZ}). The resulting BV theory, which we temporarily denote here by\footnote{This construction is clarified in Theorem \ref{AKSZtheorem}, and $\BV(\FF^\partial)$ will be denoted  $\FF^{\AKSZ}(I;\FF^\partial)$.} $\BV(\FF^\partial)$, satisfies automatically the regularity assumptions {required by} the BV-BFV formalism, and we also have $\BFV(\BV(\FF^\partial))=\FF^\partial$.

On the other hand, in general $\BV(\BFV(\FF))$ will not be the same as $\FF$. 
In fact, the AKSZ construction produces a theory that is invariant under reparametrization of $I$, which is certainly different from $\FF$ if the latter does not enjoy this invariance. In this case $\BV(\BFV(\FF))$ is a version of $\FF$ with ``frozen time'' and may be used to describe a change in the polarization chosen for the quantization of the reduced phase space (see \cite[Remark 2.38]{CMR2}).
If $\FF$ is reparametrization invariant --- e.g. a topological field theory or GR --- we may wonder whether $\BV(\BFV(\FF))$ and $\FF$ are somehow related. In the case of AKSZ topological field theories, it turns out that $\BV(\BFV(\FF))$ and $\FF$ are actually the same. For more general reparametrization invariant theories we might expect the two to be equivalent, in one of the possible ways presented below.

A BV theory $\FF$ is essentially composed of a (-1)-symplectic manifold $(\calF,\varpi)$ and an action functional $S$ over it. We say that $\FF_1$ and $\FF_2$ are strongly BV-equivalent if there is a symplectomorphism $\phi\colon (\calF_1,\varpi_1)\to(\calF_2,\varpi_2)$ that relates their action functionals, i.e. $S_1=\phi^*S_2$. This in particular implies that their BV cohomology groups are isomorphic. A nontrivial example of strong BV-equivalence is the one betwen PC and $BF$ theory in $3$ space--time dimensions \cite{CSS2017,CaSc2019}.

If $\FF_2$ is obtained from $\FF_1$ by a partial integration of the fields\footnote{This is more appropriately called BV-pushforward or BV fiber integral, see \cite[Section 2.2.2]{CMR2}.} (with some partial gauge fixing), we say that 
$\FF_2$ is an effective theory for $\FF_1$. 
We say that two BV theories are effectively BV-equivalent if
one is (strongly BV-equivalent to) an effective theory for the other.
Typical cases for this are Wilson renormalization or the passage to a second-order theory from its associated first-order formulation. Another important example is given by elimination of so-called auxiliary fields. In that case, one can argue that effective equivalence also preserves the BV cohomology \cite{Henn,BBH} (see Remark \ref{rem:equivalences}).

A third case is when the theories $\FF_1$ and $\FF_2$ have the same space of classical solutions modulo symmetries. We speak in this case of classical equivalence. A typical case of classical equivalence is that between EH and PC. Observe that this is equivalent to just asking that the degree-zero BV cohomologies of the two theories coincide, making this kind of equivalence weaker.

In this paper we study this question for EH and PC models of gravity in any space--time dimension greater than $2$, assuming that the metric encoded in the BFV data $\FF^\partial$ is nondegenerate (i.e. assuming that the manifold $\Sigma$ is either spacelike or timelike but not lightlike). In the case of EH, we show that $\FF$ and $\BV(\BFV(\FF))$ are effectively equivalent, with the former being actually the first-order formulation of 
the latter. 

In the case of PC in three dimensions, where $\BFV(\FF)=\FF^\partial$ holds, we show that $\BV(\BFV(\FF))$ and $\FF$ are strongly BV equivalent, which is not unexpected, since PC is strongly BV equivalent to $BF$ theory \cite{CSS2017,CaSc2019}, and the latter is a topological AKSZ theory. Instead, for higher dimensional PC theory we show that $\BV(\FF^\partial)$ and $\FF$ are classically equivalent\footnote{{On an open subset of the moduli space of solutions}.}, with $\FF^\partial$ the BFV data constructed from the reduced phase space of PC theory \cite{CS2019,CCS2020}. This case is particularly interesting because the BV-BFV construction for PC is obstructed in dimension 4 (and presumably higher).  The data $\BV(\FF^\partial)$ resulting from the AKSZ construction is a new BV theory defined on cylinders that is still classically equivalent to EH, but also compatible with the BV-BFV formalism (by construction via the AKSZ procedure). Classically, it is simply PC on a smaller space of fields, where part of the torsion-free condition is imposed a priori instead of through the Euler--Lagrange equations.

Our result addresses the problem presented in \cite{CS2017}, where it was pointed out that PC theory in dimension greater than three must be complemented with requirements on field configurations at the boundary in order to induce a well-defined BV-BFV structure.  One possible way to construct a BV-BFV structure for PC theory is to assume vector fields generating diffeomorphisms transversal to the boundary to vanish at the boundary. Denote by $\underline{\mathfrak{F}}$ the resulting BV theory. In \cite[Section 5, Remark 34]{CS2017} this was shown to be insufficient to describe the full reduced phase space of GR, as the Hamiltonian constraint is lost in the process: this means that $BFV(\underline{\mathfrak{F}})\not=\mathfrak{F}^\partial$. Alternatively one may require certain components of the Lorentz connection to vanish on the boundary, although this condition is not natural for general manifolds. One way of reading our paper is to make these conditions natural on cylindrical manifolds, in the sense that we present a { version of PC theory, with the compatibility requirements already implemented. In fact, the resulting AKSZ theory has the same equations of motion and the same symmetries, but the AKSZ procedure restricts the moduli space of solutions to an open subset. This is akin to restricting to globally hyperbolic solutions.} One can think of the extra required conditions as imposing part of the equations of motion that fix $\omega$ to be the Levi-Civita connection for the metric induced by a tetrad $e$. This is discussed in Section \ref{sec:Interpretation}.

Let us stress that having a well-defined BV-BFV structure is a necessary requirement for the quantisation of BV theories with boundary \cite{CMR2}. The fact that the boundary-compatible AKSZ version of PC theory is (possibly) only classically equivalent to the original PC formulation reinforces the idea that care must be placed when attempting BV quantisation of the latter.

{A related approach is the `parent formulation' by Barnich and Grigoriev \cite{BGparent1,Grigorievparent} which derives an AKSZ construction of the BV theory from the jet space formalism (trivariational complex). What is crucially different in our construction is that we consider, as a target, a symplectic description of the classical boundary states. This involves a careful symplectic reduction of the naively associated boundary spaces\footnote{This association is the natural restriction of fields and normal jets to the boundary, see \cite{CMR2012}.}. The result of our construction is not only a BV reformulation of the original bulk theory, but a reformulation that is compatible with the boundary as a 1-extended BV-BFV theory (see Definition 5), which is the starting point for quantum (or at least semiclassical) considerations for a theory with boundary \cite{CMR2}.

For the same reason, unlike the presymplectic AKSZ formulation presented by Grigoriev et Alkalaev in \cite{AlkGri} and \cite{grigoriev2016presymplectic}, our BV-BFV description of PC gravity is based on a symplectic structure, which is essential for quantization. This does not arise directly from a reduction of the natural presymplectic BFV structure derived from BV in the bulk, which is impossible for $N\geq4$ as shown in \cite{CS2017}, but it is the symplectic BFV structure  \cite{CCS2020} that resolves the reduced phase space of the theory. 

Finally, note that in this paper we consider two separate applications of the AKSZ ``reconstruction'' of a parametrization-invariant bulk BV theory from its boundary BFV structure, respectively for two formulations of GR (EH and PC). We do not discuss the equivalence between EH and PC, but we investigate the appropriate BV equivalence between each formulation and its own AKSZ ``reconstruction.''

These considerations do not exclude, however, some deeper connection between our construction and the ones mentioned above, which are definitely worth exploring.
}

The paper is organised as follow. In Sections \ref{s:BVformalism} and \ref{s:AKSZbckgnd} we will outline the BV-BFV and AKSZ constructions, while Section \ref{s:BFVgrav} is a brief review of the construction of the BFV data for Einstein--Hilbert and Palatini--Cartan theories of gravity, as presented respectively in \cite{CS2016b} and \cite{CCS2020}.

Finally, in Sections \ref{s:AKSZEH} and \ref{s:AKSZPC} we will apply the AKSZ construction to the BFV data of EH and PC gravity, respectively, and compare it with the BV data for the two formulations as presented in \cite{CS2016b} and \cite{CS2017}.

\subsection*{Relevance and outlook}
This work is intended as a first reaping, as a result of a few years of sowing, in a program directed at an analysis of classical General Relativity seen through the lens of the BV formalism with boundary, an attempt at formalising its quantisation within the BV-BFV formalism \cite{CMR2}. The program was initiated in \cite{ScTH,CS2016b,CS2017}, where a few inconsistencies in the behaviour of GR in the presence of boundaries in dimension $4$ were detected, and was later extended in \cite{CSS2017,CaSc2019,CCS2020}, where the comparison with the three dimensional analogue was made.

This series of works is motivated by the obstruction encountered in defining the BV-BFV data for Palatini--Cartan gravity, a requirement for the BV quantisation program with boundary, which has otherwise provided very reliable and flexible (see \cite{CMR2,CMRtop} for the quantisation of $BF$ theory, \cite{IrMn} for Yang--Mills theory in dimension 2, \cite{CMW,CMW2019} for split Chern--Simons theory and \cite{CaMoWe} for a general approach to a class of AKSZ models, including the Poisson sigma model). No obstruction to BV quantisation with boundary is otherwise present for Einstein--Hilbert theory, and this discrepancy points at the fact that classical equivalence of field theories might be too coarse a classification to have bearing on the respective quantum theories.

The results contained in this paper close the circle, so to speak, in the comparison of classical BV general relativity with boundary, between EH and PC formulations. As a matter of fact, while the AKSZ construction for EH theory is effectively equivalent to the BV theory analysed in \cite{CS2016b}, this is not the case in PC theory analysed in \cite{CS2017} (it is only included within). This fact, together with the equivalence of the reduced phase spaces for EH and PC theories \cite{CS2019,CCS2020}, can be interpreted as a confirmation that BV Palatini--Cartan theory \emph{must} be supplemented with additional requirements on fields, or otherwise restricted, in order to be viable for BV quantisation. {The requirements we find, summarised by Definition \ref{def:PCStructuralConstraints} are conditions on the Lorentz connection and its conjugate variable, which effectively restrict the space of fields. We find that these conditions are somewhat natural on cylinders.

The very ultimate goal of the construction presented in this paper is the grail of quantisation of gravity. We do not attempt doing it here. What we present is the preliminary setting for a perturbative quantisation on cylinders resulting in the quantum evolution operator from the initial to the final quantum space of states. Due to the degeneracy of the actions (related to gauge invariance), one needs a formalism that allows imposing gauge fixings and checking that the results are independent thereof, up to equivalences that are under control. In the absence of free boundaries (i.e., in the computation of partition functions and expectation values), there are several good methods to do this, including BRST and BV. In the presence of boundary, the best developed method is a compatible combination of BV in the bulk and BFV on the boundary. The compatibility is the main issue here, and this paper discusses it in the context of GR theories.

Performing the actual quantisation, which is far beyond the scope of this paper, implies choosing a polarization on the boundary and a gauge fixing in the bulk, computing the resulting propagators, regularizing the theory, and performing renormalisation in a compatible way with the BV-BFV data (the quantum master equation, and its version with boundary \cite{CMR2}). In the case of gravity, one of course expects an infinite number of independent counterterms to be taken care of. Clever or miraculous ways to keep them under control are the same issue as in other treatments (without boundary): we do not claim to have a better recipe for this issue, but just to have a method to incorporate free boundaries. A minimal way to proceed, as, e.g., in \cite{BFR}, is to allow for infinite counterterms (which is algebraically possible and allows for the construction of families of effective theories, even though the predictivity at all energies is missing).

Naturally, since the outlook of this extended program is that of addressing quantisation of General Relativity (with boundary), we wish to stress that without the observations produced in this preliminary phase, an early attempt at directly quantising PC theory might have been thwarted by the very obstructions highlighted by our investigations. 

}

In this sense, we believe the correct preparation of a field theory for its perturbative quantisation to be of crucial importance to drive the scientific effort towards sensible questions, and divert it when evidence is presented of a potential roadblock ahead. This should be of particular interest for the scientific community heavily involved with the study of Palatini--Cartan theory as a fundamental building block for a quantum theory of gravity.

\subsection*{Acknowledgements}
We thank G. Barnich and M. Grigoriev for useful discussions on both content and context, as well as the anonymous referees, whose comments have helped improve our manuscript. M.S. would also like to thank C. Blohmann and A. Weinstein for numerous scientific interactions relevant to this paper.

\section{Background}
One of the goals of this paper is the construction of a BV theory on a cylindrical manifold $ \Sigma \times I $ by means of the AKSZ construction, with target a BFV theory associated to $\Sigma$. In this section we introduce the basic definition of the BV(-BFV) and AKSZ formalisms, together with the relevant notions of equivalence that will allow us to compare theories.
We refer to \cite{ BV1, BV2, BV3,CMR2012,CMR2012b}  for a more detailed introduction and more insight in the meaning and the motivations for the following definitions and theorems. For an introduction of the BV (-BFV) formalism and  gravity see \cite{CS2016a, CS2019, CaSc2019}. Other versions and interpretations of the BV formalism for gravity can be found in \cite{BFR}.

\subsection{The Batalin--Vilkovisky formalism}\label{s:BVformalism}

\begin{definition}\label{def:BV}
A BV theory is a quadruple $\FF= \left( \calF, S, \varpi, Q\right)$ where $\calF$ is a graded manifold (the \emph{space of BV fields}) endowed with a degree $-1$ symplectic form $\varpi$, $S\colon \calF \rightarrow \mathbb{R}$ is a degree 0 functional (the \emph{BV action}) and $Q$ is the (odd) Hamiltonian vector field of $S$ with respect to $\varpi$ satisfying $[Q,Q]=0$. 
\end{definition}

\begin{remark}
Since $Q$ is the Hamiltonian vector field of $S$, i.e. $\iota_{Q} \varpi = \delta S $ where $\delta$ is the de Rham differential on $\calF$ and $\iota_Q$ is the contraction w.r.t. $Q$, we can rewrite the equation $[Q,Q]=0$ as $(S,S)=0$ where $(\cdot,\cdot)$ denotes the Poisson bracket defined by $\varpi$. The latter equation is called the Classical Master Equation (CME). 
\end{remark}

\begin{definition}\label{def:BFV}
An exact BFV theory is a quadruple $\FF^{\partial}= \left( \calF^{\partial}, S^{\partial}, \varpi^{\partial}, Q^{\partial}\right)$ where $\calF^{\partial}$ is a graded manifold (the \emph{space of boundary fields}) endowed with a degree-$0$ exact symplectic form $\varpi^{\partial}= \delta \alpha^{\partial}$, $S^{\partial}: \calF^{\partial} \rightarrow \mathbb{R}$ is a degree 1 functional and $Q^{\partial}$ is the Hamiltonian vector field of $S^{\partial}$ with respect to $\varpi^{\partial}$ such that $[Q^{\partial},Q^{\partial}]=0$.
\end{definition}

\begin{remark}
Typical examples of BV and BFV theories are modeled on sections of bundles over differentiable manifolds, possibly with boundary, with $\varpi^{(\partial)},S^{(\partial)}$ and $Q^{(\partial)}$ respectively a local two-form, functional and vector field. Throughout the paper, when specifying BV theories, we will assume that the equations $\iota_Q\varpi =\delta S$ and $(S,S)=0$ are satisfied only up to boundary terms. The failure of said equations will be controlled by the data of a BV-BFV theory, as follows. { It is often convenient, in this scenario, to define the slightly more general concept of a \emph{relaxed} BV theory, i.e. data $\FF=(\calF,S,\varpi,Q)$ as in Definition \ref{def:BFV}, but without the requirement that $Q$ be the Hamiltonian vector field of $S$. If we are given a BV theory on a closed manifold without boundary, we can consider the same local data as a relaxed BV theory on a manifold with boundary.}
\end{remark}

\begin{subequations}
\begin{definition}[\cite{CMR2012}]\label{def:BVBFV}
A { (relaxed)} BV theory $\FF=\left(\calF,S, \varpi, Q\right)$ is said to be $1$-extended to the BFV theory $\FF^\partial=\left(\calF^{\partial}, S^{\partial}, \varpi^{\partial}, Q^{\partial}\right)$ if there exists a surjective submersion $\pi: \mathcal{F} \rightarrow \mathcal{F}^{\partial}$, such that the following compatibility relation is satisfied:
\label{BVBFVeqts}\begin{equation} \label{rCME}
\iota_{Q} \varpi  = \delta S  + \pi^{*} \alpha^{\partial}  
\end{equation}
The data $\FF^{\uparrow1}=\left(\FF,\FF^\partial,\pi\right)$ will be called $1$-extended BV-BFV theory.
\end{definition}

\begin{remark}
Notice that, from the data above, the following relation follows:
\begin{equation} \label{b-action}
\iota_{Q} \iota_{Q} \varpi= 2 \pi^{*}S^{\partial}.
\end{equation}
\end{remark}
\end{subequations}

The following definitions compare two different BV (or BFV) theories.

\begin{definition}\label{def:strongBVeq}
Two B(F)V theories $\FF_1^{(\partial)}$ and $\FF_2^{(\partial)}$ are said to be strongly B(F)V-equivalent if there exists a symplectomorphism
$$ \Phi : (\calF_1^{(\partial)}, \varpi_1^{(\partial)} ) \rightarrow (\calF_2^{(\partial)}, \varpi_2^{(\partial)})$$
preserving the BV action: $\Phi^* S_2^{(\partial)} = S_1^{(\partial)}$. The map $\Phi$ is called a strong B(F)V-equivalence.
\end{definition}

\begin{definition}\label{def:BV-inclusion}
Let $\FF_1$ and $\FF_2$ be two { (relaxed)} BV theories. A (relaxed) BV-inclusion  $\mathfrak{I}:  \FF_1 \rightarrow \FF_2$ is an inclusion of (super)manifolds $\iota: \calF_1 \rightarrow \calF_2$ such that $\varpi_1 = \iota^* \varpi_2$ and {$\iota^*Q_1 = Q_2\iota^*$. If the two theories are relaxed we will additionally require $\iota^* S_2 = S_1$.} In this case we say that $\FF_1$ is a BV-subspace\footnote{In the math literature, a map with this compatibility between the symplectic structures and the cohomological vector fields is known as a morphism of dg symplectic manifolds.} of $\FF_2$.
\end{definition}

{
\begin{remark}\label{rem:HamiltonianVF}
Naturally, if $Q_1$ and $Q_2$ are the Hamiltonian vector fields of $S_1$ and $S_2$ respectively, the condition $\iota^* S_2 = S_1$ is equivalent to the condition $\iota^* Q_2 = Q_1 \iota^* $, up to a constant. 
\end{remark}
}

\begin{proposition}\label{prop:equivalence+inclusion}
The composition of a strong BV equivalence and a BV inclusion is in turn a BV inclusion. 
\end{proposition}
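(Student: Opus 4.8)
The plan is to unwind the two definitions and simply compose the maps. Suppose $\Phi\colon(\calF_1,\varpi_1)\to(\calF_2,\varpi_2)$ is a strong BV equivalence, so it is a symplectomorphism with $\Phi^*S_2=S_1$, and suppose $\mathfrak{I}\colon\FF_2\to\FF_3$ is a BV inclusion with underlying inclusion of (super)manifolds $\iota\colon\calF_2\to\calF_3$ satisfying $\varpi_2=\iota^*\varpi_3$ and $\iota^*Q_3=Q_2\iota^*$ (together with $\iota^*S_3=S_2$ in the relaxed case). First I would set $j:=\iota\circ\Phi\colon\calF_1\to\calF_3$ and observe that $j$ is again an inclusion of (super)manifolds: $\Phi$ is a diffeomorphism (in particular an embedding) onto $\calF_2$ and $\iota$ is an embedding, so their composite is an embedding.

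Next I would check the symplectic compatibility: $j^*\varpi_3=\Phi^*\iota^*\varpi_3=\Phi^*\varpi_2=\varpi_1$, using in order functoriality of pullback, the BV-inclusion property of $\iota$, and the fact that $\Phi$ is a symplectomorphism. Then I would verify the action (equivalently, in the non-relaxed case, the cohomological vector field) condition. In the relaxed setting this is immediate: $j^*S_3=\Phi^*\iota^*S_3=\Phi^*S_2=S_1$. For the cohomological vector field, one writes $j^*Q_3=\Phi^*\iota^*Q_3=\Phi^*(Q_2\iota^*)=(\Phi^*Q_2)\Phi^{-1*}\Phi^*\iota^*$; since $\Phi$ is a symplectomorphism intertwining the actions it intertwines the Hamiltonian vector fields, i.e. $\Phi_*Q_1=Q_2$, equivalently $\Phi^*Q_2=Q_1\Phi^*$, whence $j^*Q_3=Q_1\Phi^*\iota^*=Q_1 j^*$, which is exactly the BV-inclusion condition for $j$. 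Invoking Remark \ref{rem:HamiltonianVF} lets one pass between the $S$-version and the $Q$-version without redoing work.

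The only genuinely delicate point, and the one I would spell out carefully, is the interplay between the graded/super structure and the pullback of vector fields: one has to make sure that ``$\iota^*Q_3=Q_2\iota^*$'' is read correctly as a statement about $Q_3$ being tangent to the image submanifold and restricting there to $Q_2$ (equivalently $Q_2$ and $Q_3$ being $\iota$-related), and likewise for $\Phi$-relatedness, so that relatedness composes. Since $\Phi$ is a diffeomorphism this composition is unproblematic, but it is worth stating that $\Phi$-relatedness of $Q_1$ and $Q_2$ follows from $\Phi^*S_2=S_1$ together with $\Phi^*\varpi_2=\varpi_1$ via $\iota_{Q_i}\varpi_i=\delta S_i$. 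I do not expect any real obstacle here; the proposition is essentially a bookkeeping statement that the class of BV inclusions is stable under precomposition with isomorphisms, and the proof is three one-line pullback computations.
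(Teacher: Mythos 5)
Your proof is correct and is exactly the argument the paper has in mind; the paper's own proof is the one-line remark that the composite map ``trivially satisfies the properties of a BV inclusion,'' and your three pullback computations are precisely the unwinding of that triviality (your order $\iota\circ\Phi$ is also the one actually used later in Theorem \ref{thm:comparison}, where $\varphi=\iota_R\circ\underline{\varphi}$).
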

\begin{proof}
The map $\Phi \circ \iota$ satisfies trivially the properties of a BV inclusion.
\end{proof}

A notion that we will need to compare theories is that of BV-pushforward. This notion is usually phrased at the quantum level \cite{CMR2,Mnev2017}, where the additional data of a BV Laplacian needs to be provided. However here we are interested mainly in its classical counterpart. The basic setting is the same, although we consider the following simplifying assumptions. Suppose that we have a splitting of a graded symplectic manifold $(\calF, \varpi)$ so that $\calF= \calF' \times \calF''$, with $\varpi= \varpi'+\varpi''$, and let $\mathcal{L}$ be a Lagrangian submanifold of $(\calF'', \varpi'')$ endowed with a half-density $\mu$ on $\calF''$, which thus defines by restriction a density $\mu_{\mathcal{L}}$ on $\mathcal{L}$. Denote coordinates $(z',z'')$ respectively in $\calF',\calF''$, and let $z''\in\{x,x^\dag\}$ be Darboux adapted coordinates such that $x$ parametrises $\mathcal{L}$ and $x^\dag$ are transversal. 

\begin{definition}\label{def:BVL}
We define the Batalin--Vilkovisky--Legendre transform of a functional $S\in C^\infty (\calF)$, with respect to the Lagrangian $\mathcal{L}\subset\calF''$, as $S_{\text{BVL}}\in C^\infty(\mathcal{F'})$:
\begin{equation}\label{BVL}
S_{\text{BVL}} = S(z,x_0, x^\dag=0)
\end{equation}
where $x_0$ is a critical point for $S$ (assumed unique):
$$
\pard{S}{x}\bigg\vert_{x^\dag=0}(x_0)=0.
$$
\end{definition}

Starting from a BV theory on $(\calF, \varpi)$ we build a theory on $(\calF', \varpi')$ by means of a gauge-fixed fiber integral along $\calF''$, with gauge-fixing Lagrangian $\mathcal{L}$. In other words, if $S$ denotes a BV action on $\calF$ we consider the effective result of the BV pushforward (fiber integral) to be 
\begin{equation}\label{Seffdef}
\exp\left(\frac{i}{\hbar}S_{\text{eff}}\right)\coloneqq \int_{\mathcal{L}\subset \calF''} \exp\left(\frac{i}{\hbar}S\right)\bigg\vert_{\mathcal{L}}\mu_{\mathcal{L}}
\end{equation}
where the integral is defined perturbatively as a power series in $\hbar$. Note that $S_\text{BVL}$ is the dominant term of $S_\text{eff}$. When $S$ depends only quadratically on the variables on $\mathcal L$, the only correction is $i\hbar/2$ times the logarithm of the determinant of the quadratic form.

{
\begin{remark}\label{rem:equivalences}
Let us comment briefly on the notion of equivalence of theories in the BV formalism. When the moduli spaces of solutions of the equations of motion for two theories coincide, the theories are said to be classically equivalent. A finer notion of equivalence requires that the BV cohomologies be isomorphic\footnote{Classical equivalence is the less restrictive requirement that only their $0$-cohomologies be isomorphic.}, and it allows for an extension to the case with boundary. One looks at the bicomplex given by the BV operator and the de Rham differential $(\Omega^{\bullet,\bullet}(\mathcal{F}_i,M), Q_i-d)$, and equivalence in this sense requires two theories to have quasi-isomorphic local de Rham/BV complexes. The typical argument for equivalence involves the elimination of so-called generalised auxiliary fields and trivial pairs \cite{BBH,Henn}. This notion, however, is suboptimal because it relies on homological perturbation theory, which potentially could output an infinite tower of ghosts and antighosts in the process of constructing an equivalent theory. In other words, that approach ---albeit somewhat well-established--- does not provide a direct answer to the question of whether two \emph{given} BV theories (in the form of two Hamiltonian dg-manifolds) are equivalent, provided their degree-$0$ sectors are classically equivalent. For this, instead, a spectral sequence argument would be more appropriate.\footnote{In an optimal scenario, to auxiliary fields one can associate a subcomplex with trivial cohomology. More generally, one has a filtration of the original BV complex, so that the associated spectral sequence converges. However, this is in general quite hard to prove.}


On the other hand, one could phrase equivalence in the BV-BFV formalism, requiring existence of the BV-BFV structure and equivalence of the respective BV and BFV cohomologies (an explicit example of this, is the strong equivalence of General Relativity and BF theory in three dimensions \cite{CaSc2019}). This existence requirement might become an obstruction to BV-BFV equivalence\footnote{Observe that this is essentially the strictification requirement of the presymplectic BFV data, see Remark \ref{rem:sympBFV}.}. For our purposes, then, even assuming that by removal of generalised auxiliary fields one may prove some BV equivalence between  Palatini--Cartan and Einstein--Hilbert theories, the results of \cite{CS2017} and of the present paper show a discrepancy of the theories in the BV-BFV sense.
\end{remark}
}

\subsection{The AKSZ construction}\label{s:AKSZbckgnd}
Let $X$ be a graded manifold and $N$ an ordinary manifold and let $\mu_N$ be the canonical Berezinian\footnote{Recall that a function on T[1]N is the same as a differential form on N. Integrating a function on $T[1]N$ against the canonical Berezinian $\mu_N$ is by definition the same as integrating the corresponding differential form on $N$, which we assume to be oriented.} on $T[1]N$.
\begin{definition}[Transgression map]\label{transgressionmap1}
Consider the map
\begin{equation}\label{transgression}
	\T^{(\bullet)}_N \colon \Omega^\bullet(X) \longrightarrow \Omega^\bullet\left( \mathrm{Map}(T[1]N,X)\right)
\end{equation}
defined by $\T^{(\bullet)}_N\coloneqq p_*\mathrm{ev}^*$, where
\begin{equation}
	\xymatrix{
		\mathrm{Map}(T[1]N, X)\times T[1]N \ar[d]_{p}   \ar[r]_-{\mathrm{ev}}  & X\\
		\mathrm{Map}(T[1]N, X) & 
	}
\end{equation}
and we set $p_*=\intl_N \mu_N$.
We will call $\T^{(\bullet)}_N$ the {transgression map}, and its evaluation a {transgression}. 
\end{definition}
We endow the graded manifold $X$ with a function $S$ of degree $n$ and parity $n \mod 2$, together with a one-form $\alpha$ of degree $n-1$ and parity $n-1 \mod 2$, such that $\varpi=d\alpha$ is nondegenerate and $\{S,S\}=0$ with respect to the Poisson structure defined by $\varpi$. Then we say that $X$ has a Hamiltonian dg-manifold structure, with differential $\{S,\cdot\}$.

Observing that the de Rham differential $d_{N}$ on $N$ can be seen as a degree $1$ vector field on $\mathrm{Map}(T[1]N,X)$ we have
\begin{theorem}[\cite{AKSZ}]\label{AKSZtheorem}
Let $(X,S,\alpha)$ a dg-manifold as described above. Consider the data 
\begin{equation}\label{AKSZdata}
	\FF^{\AKSZ}(N;X,S,\alpha)\coloneqq\left(\calF^{\AKSZ}, S^{\AKSZ}, \Omega^{\AKSZ}, Q^{\AKSZ}\right)
\end{equation}
with $\calF^{\AKSZ}=\mathrm{Map}(T[1]N,X)$, $\Omega^{\AKSZ}\coloneqq \T^{(2)}_N(\varpi)$, the functional $S^{\AKSZ}\colon \calF^{\AKSZ}\to \mathbb{R}$,
\begin{equation}\label{e:AKSZ_generic-action}
	S^{\AKSZ}\coloneqq\T_N^{(0)}(S) + \iota_{d_{N}}\T_N^{(1)}(\alpha).
\end{equation}
and the cohomological vector field $Q^{\AKSZ}$ such that $\iota_{Q^{\AKSZ}}\Omega^{\AKSZ}=\delta S^{\AKSZ}$. Then, $\FF^{\AKSZ}(N;X,S,\alpha)$ defines a BV theory.
\end{theorem}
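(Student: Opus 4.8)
The plan is to verify directly the three defining properties of a BV theory (Definition \ref{def:BV}): that $\Omega^{\AKSZ}$ is a degree $-1$ symplectic form on $\calF^{\AKSZ}$, that $S^{\AKSZ}$ is degree $0$, and that the Hamiltonian vector field $Q^{\AKSZ}$ of $S^{\AKSZ}$ satisfies $[Q^{\AKSZ},Q^{\AKSZ}]=0$. Throughout, the key technical fact to establish first is that the transgression map $\T^{(\bullet)}_N=p_*\mathrm{ev}^*$ is a chain map of degree $-\dim N$ which is also multiplicative with respect to the wedge product and commutes with the de Rham differential $\delta$ on $\mathrm{Map}(T[1]N,X)$, in the precise sense that $\delta\,\T^{(\bullet)}_N = \T^{(\bullet)}_N\, d_X$ up to the lift $\iota_{d_N}$ of the internal differential (here $d_N$ acts on $T[1]N$ and $d_X$ on $X$). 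I would record these as a preliminary lemma, since all the rest is bookkeeping with them.

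First I would check the degree count. A $k$-form of internal degree $j$ on $X$ transgresses to a $k$-form on $\mathrm{Map}(T[1]N,X)$ of degree $j-\dim N$, because pulling back along $\mathrm{ev}$ preserves form degree and internal degree, while $p_*=\int_N\mu_N$ lowers internal degree by $\dim N$. With $n=\dim N$: $S$ has degree $n$, so $\T^{(0)}_N(S)$ has degree $0$; $\alpha$ has degree $n-1$, so $\T^{(1)}_N(\alpha)$ has degree $-1$ and $\iota_{d_N}$ raises internal degree by $1$ (since $d_N$ is a degree $1$ vector field on $T[1]N$, hence a degree $1$ vector field on the mapping space), giving degree $0$ for $\iota_{d_N}\T^{(1)}_N(\alpha)$ as well; hence $S^{\AKSZ}$ is degree $0$. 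Likewise $\varpi=d\alpha$ has degree $n$, so $\Omega^{\AKSZ}=\T^{(2)}_N(\varpi)$ has degree $n-n=0$ as a \emph{space-time--internal} form, but as a form on $\calF^{\AKSZ}$ its \emph{ghost} degree is $-1$ once the $T[1]N$ directions are integrated out — this is exactly the statement that $\Omega^{\AKSZ}$ is a $(-1)$-form. Nondegeneracy of $\Omega^{\AKSZ}$ follows from nondegeneracy of $\varpi$ on $X$ together with the fact that $\mu_N$ is a nondegenerate pairing on $T[1]N$ (Poincar\'e duality at the level of the Berezinian); closedness follows from $\delta\,\T^{(2)}_N(\varpi)=\T^{(2)}_N(d\varpi)=\T^{(2)}_N(d^2\alpha)=0$ by the preliminary lemma.

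Next I would compute $Q^{\AKSZ}$ explicitly and prove $[Q^{\AKSZ},Q^{\AKSZ}]=0$. Writing $\widehat{Q}_S$ for the transgression of the internal Hamiltonian vector field $\{S,\cdot\}$ on $X$, the defining equation $\iota_{Q^{\AKSZ}}\Omega^{\AKSZ}=\delta S^{\AKSZ}$ unwinds — using $\delta\T^{(0)}_N(S)=\T^{(1)}_N(dS)=\T^{(1)}_N(\iota_{\{S,\cdot\}}\varpi)$ and $\delta\,\iota_{d_N}\T^{(1)}_N(\alpha)=\iota_{d_N}\T^{(2)}_N(\varpi)-\iota_{d_N}\T^{(1)}_N(dS')$-type identities — into the statement $Q^{\AKSZ}=d_N+\widehat{Q}_S$, i.e. the cohomological vector field is the \emph{sum} of the lift of the de Rham differential on $N$ and the lift of the internal differential on $X$. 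Then
\[
[Q^{\AKSZ},Q^{\AKSZ}] \;=\; [d_N,d_N] \;+\; 2[d_N,\widehat{Q}_S] \;+\; [\widehat{Q}_S,\widehat{Q}_S].
\]
The first bracket vanishes because $d_N^2=0$ on $T[1]N$; the third vanishes because $[\{S,\cdot\},\{S,\cdot\}]=\{\{S,S\},\cdot\}=0$ by the hypothesis $\{S,S\}=0$ and the graded Jacobi identity, and transgression is a Lie algebra morphism on the relevant vector fields; the cross term $[d_N,\widehat{Q}_S]$ vanishes because $d_N$ acts only on the source $T[1]N$ and $\widehat{Q}_S$ only through the target $X$, so the two lifts commute. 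This establishes $[Q^{\AKSZ},Q^{\AKSZ}]=0$, equivalently $(S^{\AKSZ},S^{\AKSZ})=0$, completing the verification.

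The main obstacle is the preliminary lemma on the transgression map: making precise the sign conventions and the degree shift so that $p_*$ (Berezinian integration over $T[1]N$) is genuinely a chain map intertwining $d_X$ and $\delta$, and checking that $p_*$ interacts correctly with the contraction $\iota_{d_N}$ — in particular the "integration by parts" identity on $T[1]N$ that produces the cross-term cancellation above. Once that functorial input is in hand, everything else is a sequence of degree counts and applications of $d\varpi=0$ and $\{S,S\}=0$. One should also note the caveat, already flagged in the paper's remark after Definition \ref{def:BV}, that when $N$ has boundary the identity $\iota_{Q^{\AKSZ}}\Omega^{\AKSZ}=\delta S^{\AKSZ}$ holds only up to a boundary term, which is precisely the source of the induced BFV data on $\partial N$; for the statement as given ($N$ closed, or the boundary term tracked separately) this does not affect the argument.
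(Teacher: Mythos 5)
The paper offers no proof of this statement --- it is quoted from the reference \cite{AKSZ} --- so there is nothing internal to compare against; your sketch is the standard argument and is essentially correct: transgression is a chain map for closed $N$, the cohomological vector field splits as the commuting sum of the lifted source differential $d_N$ and the lifted target differential $\{S,\cdot\}$, and each summand squares to zero. One small bookkeeping slip: with the paper's conventions $\alpha$ has internal degree $n-1$ and the de Rham differential on $X$ preserves internal degree, so $\varpi=d\alpha$ has degree $n-1$ (not $n$), and $\Omega^{\AKSZ}=\T^{(2)}_N(\varpi)$ lands directly in degree $(n-1)-\dim N=-1$; your detour through a ``space-time--internal versus ghost degree'' distinction is unnecessary and slightly muddles an otherwise clean count. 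The rest --- nondegeneracy via the Berezinian pairing, closedness from $d\varpi=0$, vanishing of the cross bracket $[d_N,\widehat{Q}_S]$ because the two lifts act through source and target respectively, and the boundary caveat --- is exactly as it should be.
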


We will call $\calF^{\AKSZ}\coloneqq \mathrm{Map}(T[1]N,X)$ the \emph{AKSZ space of fields}. Introducing Darboux coordinates $\{p_i, q^i\}$ in $X$ so that $\alpha= p_i dq^i$, the space of AKSZ fields is composed of inhomogeneous differential forms $\mathfrak{P}, \mathfrak{Q}$ on $N$. Then, if we consider $X$ to be the space of sections of a bundle $E\to \Sigma$, that is to say $X=T^*[n-1]C^\infty(\Sigma,E)$, we can write 
\begin{equation}
\Omega^{\AKSZ} = \intl_{\Sigma\times N} \left[\langle \delta\mathfrak{P}, \delta\mathfrak{Q}\rangle 
		 \right]^{\mathrm{top}}
	\equiv  \intl_{\Sigma\times N} \left[\delta\mathfrak{P}_i \delta \mathfrak{Q}^i  \right]^{\mathrm{top}}
\end{equation}
where we have denoted by $\delta$ the deRham differential on spaces of maps and $C^\infty$-sections, and $\text{top}$ denotes the top-form parts of the inhomogeneous differential forms within brackets. We will drop the superscript $\text{top}$ in what follows.

Consider this elementary fact:
\begin{lemma}\label{lemmaAKSZ}
Let $A,B,C$ be graded manifolds, $\phi\colon B\to C$ an isomorphism of graded manifolds, and $\mu_A$ a measure on $A$. Consider the diagram
\begin{equation}
	\xymatrix{
		A\times B \ar[r]^{\mathrm{id}\times \phi} \ar[d]_{\pi_B} & A\times C \ar[d]_{\pi_C}\\
		B \ar[r]^\phi & C
	}
\end{equation}
Then, setting
${\pi_{B}}_* = \int \mu_A \cdot$ and ${\pi_{C}}_* = \int \mu_A \cdot $, we have $\phi^*\circ {\pi_{C}}_* = {\pi_{B}}_*\circ (\mathrm{id}\times \phi)^*$.
\end{lemma}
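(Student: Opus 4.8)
The plan is to unwind the definitions of the two pushforwards as fiber integrals over $A$ and show that pulling back along $\phi$ commutes with integrating out $A$, since $\phi$ acts only on the second factor and leaves the $A$-direction untouched. Concretely, I would first record that for any graded manifold $Y$, a form (or function) $\eta$ on $A\times Y$ can be written, after choosing local coordinates, as a sum of terms $f(a,y)\,\mu_A^{0}\wedge(\text{form in the }Y\text{-directions})$, and that ${\pi_Y}_*\eta=\int_A \mu_A\,\eta$ is computed by integrating the $A$-dependence against $\mu_A$, keeping the $Y$-directions as parameters. The key point is that this operation is \emph{$C^\infty(Y)$-linear} (more precisely, natural in $Y$): it does not see the structure of $Y$ at all beyond the product decomposition.

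Next I would chase an arbitrary form $\eta\in\Omega^\bullet(A\times C)$ around the square. On the one hand, $\phi^*\circ{\pi_C}_*(\eta)=\phi^*\!\left(\int_A\mu_A\,\eta\right)$, where $\phi$ acts by substituting $\phi$ into the $C$-coordinates and pulling back the $C$-differentials. On the other hand, ${\pi_B}_*\circ(\mathrm{id}\times\phi)^*(\eta)=\int_A\mu_A\,(\mathrm{id}\times\phi)^*\eta$. Since $(\mathrm{id}\times\phi)$ is the identity on $A$, the pullback $(\mathrm{id}\times\phi)^*$ affects only the $C$-coordinates and $C$-differentials of $\eta$ — exactly the data that $\phi^*$ on the base modifies — and does nothing to $\mu_A$ or to the $A$-directions along which we integrate. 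Hence the two orders of operations produce the same result term by term. The fact that $\phi$ is an isomorphism is not strictly needed for this identity (it would hold for any morphism $B\to C$), but it guarantees that $(\mathrm{id}\times\phi)^*$ and $\phi^*$ are the honest pullbacks one wants, with no issues of well-definedness.

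The only genuine subtlety — and the step I would treat with a little care rather than waving away — is the bookkeeping of signs and of the Berezinian measure in the graded setting: one must check that commuting the de Rham differentials on the $C$-directions past the Berezinian integration over $A$ introduces no sign, which follows because $\mu_A$ and the $A$-coordinates are, by the product hypothesis $\varpi=\varpi'+\varpi''$ and the corresponding splitting, entirely disjoint in their coordinate labels from the $C$-directions. Once this is observed, the equality $\phi^*\circ{\pi_C}_*={\pi_B}_*\circ(\mathrm{id}\times\phi)^*$ is immediate. I expect this to be the ``elementary fact'' the authors intend, so the write-up should be correspondingly short: reduce to local coordinates, note that fiber integration over $A$ is natural in the complementary factor, and conclude.
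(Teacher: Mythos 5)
Your argument is correct, and it is essentially the intended one: the paper states this lemma as an ``elementary fact'' without proof, so there is nothing to compare against except the implicit standard argument, which is exactly the base-change/naturality property of fiber integration that you spell out. The core point --- that $(\mathrm{id}\times\phi)^*$ acts only on the $C$-coordinates and $C$-differentials while $\int_A\mu_A$ acts only on the $A$-directions, so the two operations commute term by term in a local product chart --- is the whole content, and your observation that the isomorphism hypothesis is not actually needed (any morphism $B\to C$ would do) is also correct; the paper only ever applies the lemma to an isomorphism, which is why it is stated that way.

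One small correction to your justification of the sign bookkeeping: the splitting $\varpi=\varpi'+\varpi''$ that you invoke belongs to the BV-pushforward setup of Definition \ref{def:BVL} and plays no role here. The reason the $A$-coordinates and the $C$-directions are disjoint is simply that the source is the product $A\times C$, so local coordinates (and the corresponding generators of the de Rham complex) split accordingly; no symplectic structure is involved in this lemma. With that substitution your write-up is complete, and could indeed be compressed to the one sentence ``fiber integration along $A$ is natural in the complementary factor.''
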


\begin{theorem}\label{strongequivalenceAKSZ}
Let $(X,S_X,\alpha_X)$ and $(Y,S_Y,\alpha_Y)$ be equivalent Hamiltonian dg-manifolds, i.e. there exists a diffeomorphism $\phi\colon X\to Y$ such that $\varpi_X=\phi^*\varpi_Y$, and $S_X=\phi^*S_Y$. Then $\FF^{\AKSZ}(N;X,S_X,\alpha_X)$ and $\FF^{\AKSZ}(N;Y,S_Y,\alpha_Y)$ are strongly equivalent BV(-BFV) theories for every manifold $N$.
\end{theorem}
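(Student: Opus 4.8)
The plan is to produce the strong equivalence directly from the diffeomorphism $\phi\colon X\to Y$ by pushing it forward to the mapping spaces. Concretely, set $\Phi \coloneqq \phi_*\colon \mathrm{Map}(T[1]N,X)\to \mathrm{Map}(T[1]N,Y)$, the postcomposition map $f\mapsto \phi\circ f$. Since $\phi$ is an isomorphism of graded manifolds, so is $\Phi$, and one has the tautological intertwining $\mathrm{ev}_Y\circ(\Phi\times\mathrm{id}) = \phi\circ\mathrm{ev}_X$ as maps $\mathrm{Map}(T[1]N,X)\times T[1]N\to Y$. This is the single geometric fact from which everything else follows.

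First I would establish $\Phi^*\Omega^{\AKSZ}_Y = \Omega^{\AKSZ}_X$. Using the definition $\Omega^{\AKSZ}_Y = \T^{(2)}_{N,Y}(\varpi_Y) = (p_Y)_*\,\mathrm{ev}_Y^*\varpi_Y$, the intertwining relation gives $\mathrm{ev}_Y^*\varpi_Y = (\Phi\times\mathrm{id})^*\mathrm{ev}_X^*\phi^*\varpi_Y$, wait — more precisely $\mathrm{ev}_Y^*\varpi_Y = (\Phi\times\mathrm{id})^*\mathrm{ev}_X^*(\text{pullback along }\phi)$, and $\phi^*\varpi_Y = \varpi_X$ by hypothesis, so $\mathrm{ev}_Y^*\varpi_Y = (\Phi\times\mathrm{id})^*\mathrm{ev}_X^*\varpi_X$. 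Now Lemma \ref{lemmaAKSZ}, applied with $A = T[1]N$ (carrying the Berezinian $\mu_N$), $B = \mathrm{Map}(T[1]N,X)$, $C = \mathrm{Map}(T[1]N,Y)$ and the isomorphism $\Phi$, yields $\Phi^*\circ (p_Y)_* = (p_X)_*\circ (\Phi\times\mathrm{id})^*$. Combining, $\Phi^*\Omega^{\AKSZ}_Y = \Phi^*(p_Y)_*\mathrm{ev}_Y^*\varpi_Y = (p_X)_*(\Phi\times\mathrm{id})^*\mathrm{ev}_Y^*\varpi_Y = (p_X)_*\mathrm{ev}_X^*\varpi_X = \Omega^{\AKSZ}_X$, so $\Phi$ is a symplectomorphism.

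Next I would check $\Phi^*S^{\AKSZ}_Y = S^{\AKSZ}_X$. The action has two pieces. For the first, $\T^{(0)}_{N,Y}(S_Y) = (p_Y)_*\mathrm{ev}_Y^*S_Y$, and the same argument with $S_X = \phi^*S_Y$ in place of $\varpi_X$ gives $\Phi^*\T^{(0)}_{N,Y}(S_Y) = \T^{(0)}_{N,X}(S_X)$. The second piece is $\iota_{d_N}\T^{(1)}_{N}(\alpha)$, and here one must be slightly careful because $\alpha_X$ and $\phi^*\alpha_Y$ need not be equal — they differ by an exact form $d\beta$ since $\phi^*\varpi_Y = \varpi_X$ forces $d(\phi^*\alpha_Y - \alpha_X) = 0$, and on the relevant (contractible, or at least cohomologically trivial in the appropriate degree) Darboux charts one writes $\phi^*\alpha_Y = \alpha_X + d\beta$. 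The transgression is linear and commutes with $\iota_{d_N}$ and with $\Phi^*$ (again by the intertwining relation), so $\Phi^*\bigl(\iota_{d_N}\T^{(1)}_{N,Y}(\alpha_Y)\bigr) = \iota_{d_N}\T^{(1)}_{N,X}(\phi^*\alpha_Y) = \iota_{d_N}\T^{(1)}_{N,X}(\alpha_X) + \iota_{d_N}\T^{(1)}_{N,X}(d\beta)$. One then shows the extra term $\iota_{d_N}\T^{(1)}_{N,X}(d\beta)$ is itself $d_N$-exact along $N$ — indeed it equals $\iota_{d_N}d_{\mathrm{Map}}\T^{(0)}_{N,X}(\beta)$ type expression modulo a total derivative — and hence integrates to zero over the closed... — here one notes $N$ has no boundary in the relevant setup, or more safely, the ambiguity in $\alpha$ only shifts $S^{\AKSZ}$ by a $\delta$-closed term that does not affect the cohomological vector field, so $\Phi^*S^{\AKSZ}_Y$ and $S^{\AKSZ}_X$ agree (possibly up to such an irrelevant term, which one absorbs by the standard freedom in the choice of primitive). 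Finally, since $Q^{\AKSZ}_X$ and $Q^{\AKSZ}_Y$ are the Hamiltonian vector fields of the respective actions with respect to the respective symplectic forms, $\Phi$ being a symplectomorphism intertwining the actions automatically intertwines the cohomological vector fields, and the BV-BFV compatibility is inherited, giving strong equivalence in the sense of Definition \ref{def:strongBVeq}.

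The main obstacle I anticipate is precisely the treatment of the one-form term: $\alpha$ transgresses in a way that is genuinely primitive-dependent, and the mismatch $\phi^*\alpha_Y - \alpha_X = d\beta$ must be shown to contribute nothing essential to $S^{\AKSZ}$. This requires either invoking that the relevant de Rham cohomology (in the appropriate internal degree) of the Darboux chart vanishes so $\beta$ exists globally on each chart and the correction is a total $d_N$-derivative, or arguing more abstractly that the AKSZ action is well-defined up to the choice of $\alpha$ within its cohomology class in a manner compatible with $\delta$-exact shifts. Everything else is a formal consequence of the evaluation/projection functoriality encapsulated by Lemma \ref{lemmaAKSZ} and the naturality of the transgression map.
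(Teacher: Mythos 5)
Your proof is correct and follows essentially the same route as the paper's: the paper's entire argument is to observe that $\phi$ induces an isomorphism $\tilde\phi$ of the mapping spaces by (pre/post)composition and then to invoke Lemma \ref{lemmaAKSZ} with $A=T[1]N$, $B=\mathrm{Maps}(T[1]N,X)$, $C=\mathrm{Maps}(T[1]N,Y)$, exactly your intertwining $\mathrm{ev}_Y\circ(\Phi\times\mathrm{id})=\phi\circ\mathrm{ev}_X$ plus functoriality of $p_*\,\mathrm{ev}^*$. The one place you go beyond the paper is the treatment of the primitive: the hypothesis only fixes $\phi^*\varpi_Y=\varpi_X$ and $\phi^*S_Y=S_X$, not $\phi^*\alpha_Y=\alpha_X$, and the paper is silent on the resulting closed-form ambiguity in the $\iota_{d_N}\T^{(1)}(\alpha)$ term; your observation that this shifts $S^{\AKSZ}$ only by a term that is a total $d_N$-derivative (hence a boundary term on $N$, irrelevant in the paper's ``up to boundary terms'' convention and invisible to $Q^{\AKSZ}$) is a genuine refinement, though strictly one should either add $\phi^*\alpha_Y=\alpha_X$ to the hypotheses or note that a non-exact closed difference would require the mild cohomological assumption you mention.
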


\begin{proof}
$\phi\colon X\to Y$ induces an isomorphism 
$$\tilde{\phi}\colon \mathrm{Maps}(T[1]N,X) \to \mathrm{Maps}(T[1]N,Y)$$
by precomposing maps with $\phi^*$ or ${\phi^{-1}}^*$.
Then, we can apply Lemma \ref{lemmaAKSZ} with $B=\mathrm{Maps}(T[1]N,X)$, $C= \mathrm{Maps}(T[1]N,Y)$ and $A=T[1]N$.
\end{proof}

\subsection{One-dimensional AKSZ construction}\label{sec:local_description}
Let $I \subset \mathbb{R}$ be an interval, and $\FF^\partial$ an exact BFV theory. We can construct a BV theory by applying Theorem \ref{AKSZtheorem} on the Hamiltonian dg-manifold underlying an exact BFV theory: 
$$\FF^{\AKSZ}(I;\FF^\partial)\coloneqq \FF^{\AKSZ}(I;\calF^\partial,S^\partial,\alpha^\partial).$$
The resulting space of fields reads
\begin{align*}
\calF^{\AKSZ}= \text{Map}(T[1]I, \calF^{\partial}).
\end{align*}
Since the target space $\calF^{\partial}$  is (locally) a graded vector space, we identify the space of AKSZ fields with
\begin{align*}
\calF^{\AKSZ}= \Omega^{\bullet}(I) \otimes \calF^{\partial}.
\end{align*}
In particular, when $\calF^{\partial}$ is modeled on sections of some bundle over a $(N-1)$-dimensional manifold $\Sigma$, we can view $\calF^{\AKSZ}$ as the space of sections of some (graded) bundle over $\Sigma \times I$. The space $\Omega^{\bullet}(I)$ splits into:
\begin{align*}
\Omega^{\bullet}(I) = C^{\infty}(I) \oplus \Omega^1(I)[-1]
\end{align*}
hence, to each field in $\calF^{\partial}$ we associate two new fields. For simplicity we denote the field in  $C^{\infty}(I)\otimes \calF^{\partial}$ with the same letter as the old one, and use another letter for the one in $\Omega^1 [-1](I)\otimes \calF^{\partial}$.

\begin{proposition}[\cite{CMR2} and \cite{CMR2012}] \label{prop:AKSZ-BV-BFV}
Let $\FF^\partial(\Sigma)=(\Fp{\Sigma}{},\Sp{\Sigma}{}, \varp{\Sigma}{},\Qp{\Sigma}{})$ be an exact BFV theory, with $\Fp{\Sigma}{}\coloneqq \Gamma(E\to \Sigma)$, and $\varp{\Sigma}{}=\delta\alp{\Sigma}{}$. Then, if $I\coloneqq [0,1]$ we have that $\FF^{\AKSZ}(I;\FF^\partial(\Sigma))$ is a 1-extended BV-BFV theory over $\FF^\partial(\Sigma)$ (see Definition \ref{def:BVBFV}).
\end{proposition}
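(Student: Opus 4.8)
The plan is to supply, on top of the BV data furnished by Theorem \ref{AKSZtheorem}, the surjective submersion $\pi$ required by Definition \ref{def:BVBFV}, and then to verify the relaxed classical master equation \eqref{rCME} by a direct computation in which the failure of $\iota_{Q^{\AKSZ}}\Omega^{\AKSZ}=\delta S^{\AKSZ}$ is localised on $\partial I=\{0,1\}$ by Stokes' theorem on $I$. The fact that $\FF^{\AKSZ}(I;\FF^\partial(\Sigma))$ is a (relaxed) BV theory is already the content of Theorem \ref{AKSZtheorem}, so only the identification of the boundary data and the compatibility relation remain.

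First I would fix the local description of Section \ref{sec:local_description}. Writing the target $X=\calF^\partial(\Sigma)$ in Darboux coordinates with $\alpha^\partial=\langle\mathbf{p},\delta\mathbf{q}\rangle$, the AKSZ fields are superforms $\mathfrak{Q},\mathfrak{P}$ on $I$ valued in the components of $\calF^\partial(\Sigma)$, whose zero- and one-form parts $(\mathsf{Q},\mathsf{Q}^{+})$ and $(\mathsf{P},\mathsf{P}^{+})$ along $I$ are sections of bundles over $\Sigma\times I$. One has $\Omega^{\AKSZ}=\delta\gamma$ with $\gamma\coloneqq\T^{(1)}_I(\alpha^\partial)=\int_{\Sigma\times I}\langle\mathfrak{P},\delta\mathfrak{Q}\rangle$, and, by \eqref{e:AKSZ_generic-action}, $S^{\AKSZ}=\T^{(0)}_I(S^\partial)+\iota_{d_I}\gamma$. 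On a manifold with boundary the equation $\iota_Q\Omega=\delta S$ has in general no solution, so I would read the cohomological vector field of Theorem \ref{AKSZtheorem} in this case as $Q^{\AKSZ}\coloneqq d_I+\widehat{Q^\partial}$, where $\widehat{Q^\partial}$ is the fibrewise lift to $\calF^{\AKSZ}$ of the BFV cohomological vector field $Q^\partial=\{S^\partial,\cdot\}$, acting componentwise on the superfields. Since $[d_I,d_I]=0$, $[d_I,\widehat{Q^\partial}]=0$ and $[\widehat{Q^\partial},\widehat{Q^\partial}]=\widehat{[Q^\partial,Q^\partial]}=0$, this $Q^{\AKSZ}$ is cohomological.

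For $\pi$ I would take the restriction of the zero-form parts $(\mathsf{Q},\mathsf{P})$ of the superfields to the endpoints of $I$; this lands in the exact BFV theory built from $\Sigma\times\partial I$, whose two components are copies of $\FF^\partial(\Sigma)$ of opposite orientation, and it is a surjective submersion, because any boundary configuration extends to $\Sigma\times I$ by a $t$-independent profile for the zero-form parts and by zero for the one-form parts. The induced boundary one-form is $\alpha^\partial=\T^{(1)}_{\partial I}(\alpha^\partial(\Sigma))$, i.e. the evaluation of $\alpha^\partial(\Sigma)$ at $t=1$ minus the one at $t=0$, with $\delta\alpha^\partial=\varpi^\partial$ on $\Sigma\times\partial I$. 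To verify \eqref{rCME} I would split $\iota_{Q^{\AKSZ}}\Omega^{\AKSZ}=\iota_{d_I}\delta\gamma+\iota_{\widehat{Q^\partial}}\delta\gamma$. In the second summand, since $\widehat{Q^\partial}$ is a fibrewise lift and the transgression commutes with $\delta$ and with contraction along such lifts, one gets $\iota_{\widehat{Q^\partial}}\delta\gamma=\iota_{\widehat{Q^\partial}}\T^{(2)}_I(\varpi^\partial)=\T^{(1)}_I(\iota_{Q^\partial}\varpi^\partial)=\T^{(1)}_I(\delta S^\partial)=\delta\T^{(0)}_I(S^\partial)$, using that $Q^\partial$ is Hamiltonian for $S^\partial$. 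In the first summand, the Cartan identity for the Lie derivative $L_{d_I}=[\iota_{d_I},\delta]$ gives $\iota_{d_I}\delta\gamma=L_{d_I}\gamma+\delta\iota_{d_I}\gamma$; here $\delta\iota_{d_I}\gamma$ combines with $\delta\T^{(0)}_I(S^\partial)$ to reconstruct $\delta S^{\AKSZ}$, while $L_{d_I}\gamma=\int_{\Sigma\times I}\partial_t\langle\mathsf{P},\delta\mathsf{Q}\rangle\,dt=\langle\mathsf{P},\delta\mathsf{Q}\rangle\big|^{t=1}_{t=0}$ by Stokes, which is precisely $\pi^*\alpha^\partial$. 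Collecting terms yields $\iota_{Q^{\AKSZ}}\Omega^{\AKSZ}=\delta S^{\AKSZ}+\pi^*\alpha^\partial$, i.e. \eqref{rCME}; relation \eqref{b-action} then follows formally, recovering $S^\partial$ (with opposite signs at the two endpoints) on $\Sigma\times\partial I$.

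The main difficulty is not conceptual but one of bookkeeping: one must keep careful track of the Koszul signs and ghost-degree shifts produced by splitting each superfield into its zero- and one-form parts along $I$, and orient $\partial I$ so that the contributions at $t=0$ and $t=1$ enter $L_{d_I}\gamma$ with opposite signs, consistently with the conventions of Definitions \ref{def:BFV} and \ref{def:BVBFV}. Everything else — cohomologicity of $Q^{\AKSZ}$, surjectivity of $\pi$, and exactness of the boundary symplectic form — is immediate from the constructions above.
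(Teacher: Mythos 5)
Your proof is correct and follows essentially the same route as the paper's: both work in a Darboux chart on the target, identify $Q^{\AKSZ}$ as the sum of the lifted de Rham differential on $I$ and the lift of $Q^\partial$, and locate the failure of the Hamiltonian condition in the $\langle \mathsf{P},\delta\mathsf{Q}\rangle$ term, which Stokes' theorem on $I$ turns into $\pi^*\alpha^\partial$ evaluated at the endpoints. The only difference is presentational: you package the computation via the Cartan identity $L_{d_I}=[\iota_{d_I},\delta]$ and the compatibility of transgression with $\delta$ and with contraction along lifted vector fields, whereas the paper writes out the components of $\hat{Q}$ explicitly; both yield the same identification of the induced boundary data with $\FF^\partial(\Sigma)$.
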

\begin{proof}
Theorem \ref{AKSZtheorem} tells us that $\FF^{\AKSZ}(I;\FF^\partial(\Sigma))$ is a BV theory (up to boundary terms). If we parametrise fields in $\mathcal{F}^{\AKSZ}$ as 
$$
\mathfrak{P}= p(t) + q^\dag(t)dt \qquad \mathfrak{Q} = q(t) + p^\dag(t)dt
$$
we get
$$
\Omega^{\AKSZ} = \intl_{\Sigma\times I } \langle\delta \mathfrak{P}, \delta \mathfrak{Q}\rangle =  \intl_{\Sigma\times I} \left\{\langle\delta p, \delta p^\dag\rangle + (-1)^{|q|+1} \langle\delta q, \delta q^\dag\rangle\right\} dt
$$
and
$$
S^{\AKSZ} = \intl_{\Sigma\times I}  \langle p, d_Iq\rangle +[\T_I^{(0)}(S^\partial(\Sigma)) ]^{\mathrm{top}}.
$$
The transgressed integrand needs to be first-order in $dt$, which leaves us with 
$$
[\T_I^{(0)}(S^\partial(\Sigma))]^{\mathrm{top}} \equiv S^\partial(\Sigma)[p+q^\dag dt,q+p^\dag dt] = \pard{S^\partial(\Sigma)}{p}(q^\dag dt) + \pard{S^\partial(\Sigma)}{q}(p^\dag dt)
$$
Then $Q^{\AKSZ}$ splits in a transversal part plus a \emph{tangential} one: $Q^{\AKSZ} = Q^T + \hat{Q}$, where $Q^Tq^\dag= -\dot{p}$ and $Q^T p^\dag = \dot{q}$ is essentially just deRham differential on $I$, and $\hat{Q}$ is easily obtained:
$$
\hat{Q} p =  \pard{S^\partial(\Sigma)}{q} \equiv Q^\partial p \qquad \hat{Q} q = \pard{S^\partial(\Sigma)}{p}\equiv Q^\partial q
$$
$$
\hat{Q} p^\dag =  \frac{\delta^2S^\partial(\Sigma)}{\delta q \delta p}(p^\dag) + \frac{\delta^2S^\partial(\Sigma)}{\delta p^2}(q^\dag)  \qquad \hat{Q} q^\dag = \frac{\delta^2S^\partial(\Sigma)}{\delta p\delta q}(q^\dag) + \frac{\delta^2S^\partial(\Sigma)}{\delta q^2} (p^\dag).
$$
The boundary terms are easily seen in the given local chart, in fact:
$$
\iota_{Q^{\AKSZ}} \Omega^{\AKSZ} = \delta S^{\AKSZ} + \check{\alpha}
$$
but $\check\alpha$ only sees contributions from $\langle p, dq\rangle$ and, up to sign, we get $\check\alpha = \alpha^\partial(\Sigma)$, with $\delta \alpha^\partial(\Sigma) = \varpi^\partial_{\Sigma}$. Then, the projection of $Q^{\AKSZ}$ along the natural projection map from $\calF^{AKSZ}$ to the space of boundary fields, which coincides with $\calF^\partial(\Sigma)$, is precisely $\Qp{\Sigma}{}$, concluding the argument.
\end{proof}

\begin{remark}\label{rem:Henneaux-strict}
A similar statement to Proposition \ref{prop:AKSZ-BV-BFV} is presented in Henneaux--Bunster \cite[Theorem 18.4.5]{HT}, where one identifies the output of the above AKSZ construction with the (first-order) BV theory obtained by embedding in the BV formalism the generalised Hamiltonian formulation of a given field theory (see also \cite{DGH}). An analogous construction, already in the context of AKSZ theories,  was presented in  \cite{GrigorievDaamgard,BarnichGrigoriev2003}. The added observation of Proposition \ref{prop:AKSZ-BV-BFV} is the compatibility between BV for the bulk and BFV on the boundary, viz., what we call a 1-extended BV-BFV theory in Definition \ref{def:BVBFV}.
\end{remark}

We would like to show that this construction behaves well under equivalences of the relevant BFV data.
\begin{corollary}[Theorem \ref{strongequivalenceAKSZ}]
Let $\FF^\partial_1$ and $\FF^\partial_2$ be two strongly BFV-equivalent (exact) theories,
then $\FF^{\AKSZ}(I;\FF_1^\partial)$ and $\FF^{\AKSZ}(I;\FF_2^\partial)$ are strongly BV-equivalent.
\end{corollary}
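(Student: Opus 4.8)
The plan is to recognise this corollary as the specialisation of Theorem \ref{strongequivalenceAKSZ} to $N=I$, together with the observation that an exact BFV theory is exactly the kind of Hamiltonian dg-manifold that Theorem \ref{strongequivalenceAKSZ} takes as input. First I would record that, given an exact BFV theory $\FF^\partial_i=(\calF^\partial_i,S^\partial_i,\varpi^\partial_i,Q^\partial_i)$, $i=1,2$, its underlying data is a Hamiltonian dg-manifold in the sense preceding Theorem \ref{AKSZtheorem}, with (in the notation there) $n=1$: the potential $\alpha^\partial_i$ has degree $0$ and even parity, $S^\partial_i$ has degree $1$ and odd parity, $\varpi^\partial_i=\delta\alpha^\partial_i$ has degree $0$, and the differential is $\{S^\partial_i,\cdot\}=Q^\partial_i$. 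This is precisely the data entering the one-dimensional AKSZ construction of Section \ref{sec:local_description}, so that $\FF^{\AKSZ}(I;\FF^\partial_i)=\FF^{\AKSZ}(I;\calF^\partial_i,S^\partial_i,\alpha^\partial_i)$.

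Next I would verify that a strong BFV-equivalence is an equivalence of the associated Hamiltonian dg-manifolds as required by Theorem \ref{strongequivalenceAKSZ}. By Definition \ref{def:strongBVeq}, a strong BFV-equivalence is a diffeomorphism $\Phi\colon\calF^\partial_1\to\calF^\partial_2$ with $\Phi^*\varpi^\partial_2=\varpi^\partial_1$ and $\Phi^*S^\partial_2=S^\partial_1$; these are exactly the hypotheses $\varpi_X=\phi^*\varpi_Y$ and $S_X=\phi^*S_Y$ of Theorem \ref{strongequivalenceAKSZ} with $X=\calF^\partial_1$, $Y=\calF^\partial_2$, $\phi=\Phi$. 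Applying that theorem with the one-dimensional base $N=I$ then produces a strong BV-equivalence between $\FF^{\AKSZ}(I;\calF^\partial_1,S^\partial_1,\alpha^\partial_1)$ and $\FF^{\AKSZ}(I;\calF^\partial_2,S^\partial_2,\alpha^\partial_2)$, i.e.\ between $\FF^{\AKSZ}(I;\FF^\partial_1)$ and $\FF^{\AKSZ}(I;\FF^\partial_2)$; by Proposition \ref{prop:AKSZ-BV-BFV} both sides are moreover honest $1$-extended BV-BFV theories over $\FF^\partial_1$ and $\FF^\partial_2$ respectively, so the equivalence is in fact compatible with the boundary data. Concretely the equivalence is the lift $\widetilde\Phi\colon\mathrm{Map}(T[1]I,\calF^\partial_1)\to\mathrm{Map}(T[1]I,\calF^\partial_2)$ of $\Phi$, as in the proof of Theorem \ref{strongequivalenceAKSZ}.

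The only point that needs a second look — and the closest thing to an obstacle here — is that a strong BFV-equivalence $\Phi$ in general identifies the symplectic \emph{forms} but not the chosen \emph{potentials}, i.e.\ $\Phi^*\alpha^\partial_2=\alpha^\partial_1+\delta\beta$ for some function $\beta$ rather than $\Phi^*\alpha^\partial_2=\alpha^\partial_1$, while $\alpha^\partial_i$ enters $S^{\AKSZ}$ through the term $\iota_{d_I}\T^{(1)}_I(\alpha^\partial_i)$. This is, however, exactly what the proof of Theorem \ref{strongequivalenceAKSZ} already accounts for via Lemma \ref{lemmaAKSZ}: transgression is natural under the lifted map $\widetilde\Phi$, and the discrepancy $\delta\beta$ transgresses to a $d_I$-exact quantity, hence contributes only a boundary term along $I$ which does not affect the bulk $(-1)$-symplectic structure and is absorbed consistently in the BV-BFV bookkeeping. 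Since this is already subsumed in Theorem \ref{strongequivalenceAKSZ}, no further computation is required, and the corollary follows.
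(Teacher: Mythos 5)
Your proposal is correct and follows essentially the same route as the paper, whose entire proof is the one-line observation that a strong BFV-equivalence is an isomorphism of the underlying Hamiltonian dg-manifolds, so that Theorem \ref{strongequivalenceAKSZ} applies with $N=I$. Your additional remark about the potentials --- that $\Phi^*\alpha^\partial_2$ may differ from $\alpha^\partial_1$ by an exact term, which enters $S^{\AKSZ}$ only through a $d_I$-exact contribution and hence a term supported on $\partial I$ --- is a genuine subtlety that the paper's proof silently elides (it is consistent with the paper's standing convention of working up to boundary terms), and recording it explicitly is a strict improvement.
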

\begin{proof}
A strong BV equivalence induces an isomorphism of the underlying dg-manifolds.
\end{proof}

\section{BFV theories of gravity} \label{s:BFVgrav}
\subsection{BFV Einstein--Hilbert Theory}\label{s:BFVEH}
The BFV theory for GR in the Einstein--Hilbert formalism (as described in \cite{CS2016b}) associates to any $(N-1)$-dimensional (pseudo)-Riemannian\footnote{In this paper we will mostly focus on the case where $\Sigma$ is a Riemannian manifold, seen as a spacelike boundary of a cylinder $\Sigma\times \mathbb{R}$. Generalisations of this to the timelike case are straightforward. The relevant BFV data can be found in \cite{CS2016b}.} manifold $\Sigma$ the graded $0$-symplectic manifold
\begin{equation}\label{boundaryfields1}
\Fp{\Sigma}{EH} = T^*\underbrace{\left(S_{nd}^2(T\Sigma) \times \mathfrak{X}[1](\Sigma) \times C^\infty[1](\Sigma)\right)}_{\{\bg, \xi^\partial,\xi^n\}},
\end{equation}
where $S^2_{nd}(\Sigma)$ denotes the space of nondegenerate symmetric tensor fields of rank two, with canonical exact symplectic structure:
\begin{equation}\label{symplecticstructure}
	\varp{\Sigma}{EH} = \delta \alp{\Sigma}{EH}
	=
	\delta \intl_{\Sigma} \langle\bpi, \delta\bg \rangle + \langle\varphi_\partial,\delta\xi^\partial\rangle + \langle\varphi_n,\delta\xi^n\rangle,
\end{equation}
and $\{\Pi,\varphi_\partial,\varphi_n\}$ denote variables in the cotangent fiber{, dual to $\{\gamma,\xi^\partial,\xi^n\}$ respectively, i.e. 
\begin{align*}
    \bpi &\in S^2(T^*\Sigma)\otimes\mathrm{Dens}(\Sigma),\\
    \varphi_\partial & \in \Omega^1(\Sigma)\otimes\mathrm{Dens}(\Sigma),\\ 
    \varphi_n & \in C^\infty(\Sigma)\otimes \mathrm{Dens}(\Sigma).
\end{align*}}

\begin{remark}\label{rem:inversemetrics}
The components $(\gamma)^{ab}$ of a $\bg\in S_{nd}^2(T\Sigma)$ can be thought of as the inverse of a (pseudo-)Riemannian metric on $\Sigma$, which we denote by $\gamma^{-1}$. With a slight abuse of notation\footnote{This is not really problematic, since its variation reads $\delta\sqrt{\bg}= \frac12 \sqrt{\bg}\bg^{ab}\delta\bg_{ab} = -\frac12 \sqrt{\bg}\bg_{ab}\delta\bg^{ab}$ and we can use either formula according to our needs. If we wanted to use the correct notation we should simply replace $\sqrt{\bg}$ with its reciprocal, in formula \eqref{Hamiltonianconstraint}.} we will denote by $\sqrt{\bg}=\sqrt{\det(\gamma_{ab})}$ { the square root of the determinant of the metric on $\Sigma$ that we denote by $\bg^{-1}$ everywhere else. In other words, $\sqrt{\bg}$ is the usual density associated to a metric, i.e. $\sqrt{\bg}\,\mathrm{d^{N-1}x}$ is a volume form on $\Sigma$. Observe that all fields in the fibres of the cotangent bundle \eqref{boundaryfields1} are sections of the respective dual bundles, tensored with densities.  The conjugate field to $\gamma$ is a section of the second symmetric tensor power of the cotangent bundle of $\Sigma$ tensored with densities on $\Sigma$, i.e. $\bpi\in S^2(T^*\Sigma)\otimes\mathrm{Dens}(\Sigma)$ is of the form $\Pi= \sqrt{\bg}\pi$, for $\pi\in S^2(T^*\Sigma)$. A similar decomposition holds for $\varphi_\partial,\varphi_n$.} 
\end{remark}

In addition to $\Fp{\Sigma}{EH}$ and $\varp{\Sigma}{EH}$, the BV-BFV procedure outputs a functional of degree 1 on $\Fp{\Sigma}{EH}$, called BFV action. It is given by the local expression
\begin{align}\label{ADMBoundaction}
\Sp{\Sigma}{EH}=&\intl_{\Sigma} \left\{ H_n\xi^n + \langle\bpi, L_{\xi^\partial}\bg\rangle  + \bphi_nL_{\xi^\partial}\xi^n - \bg(\bphi_\partial,d\xi^n)\xi^n + \left\langle\bphi_\partial,\frac12[\xi^\partial,\xi^\partial]\right\rangle\right\}
\end{align}
where we have denoted the \emph{Hamiltonian constraint density} by
\begin{equation}\label{Hamiltonianconstraint}
H_n(\bg,\bpi) = \left(\frac{1}{\sqrt{\bg}}\left(\mathrm{Tr}_{\bg}[\bpi^2] - \frac{1}{d-1}\mathrm{Tr}_{\bg}[\bpi]^2\right) + \sqrt{\bg}\left(R^\partial -2\Lambda\right)\right)
\end{equation}
with $R^\partial$ is the trace of the Ricci tensor with respect to the metric $\gamma^{-1}$, $\Lambda\in \mathbb{R}$ is the cosmological constant, $\mathrm{Tr}_{\bg}[\bpi^2]=\bg^{ab}\bg^{cd}\bpi_{bc}\bpi_{ad}$ and $\mathrm{Tr}_{\bg}\bpi$ is the density $\bg^{ab}\bpi_{ab}$. Observe that we can also denote the \emph{momentum constraint density} as the density-valued one-form
\begin{equation}\label{momentumconstraint}
H_\partial\colon \mathfrak{X}(\Sigma)\to \mathrm{Dens}(\Sigma)\qquad H_\partial(X) = \langle \bpi,L_{X}\bg\rangle
\end{equation}
for $X\in\mathfrak{X}(\Sigma)$. 

\begin{remark}
One can integrate the density of equation \eqref{Hamiltonianconstraint} against a function $\lambda\in C^\infty(\Sigma)$, or integrate the density in \eqref{momentumconstraint} to get local functionals on fields. Then $\lambda$ and $X$ play the role of Lagrange multipliers, to enforce the so-called Hamiltonian and momentum constraints.
\end{remark}

The Hamiltonian vector field $\Qp{\Sigma}{EH}$ of $\Sp{\Sigma}{EH}$ with respect to $\varp{\Sigma}{EH}$ is a differential on $C^\infty(\Fp{\Sigma}{EH})$, the BFV differential, and its cohomology in degree zero is the reduced phase space defined by the constraints $\{H_n,H_\partial\}$.

\begin{definition}
We define BFV Einstein--Hilbert theory associated to be the assignment
\begin{equation}
\Sigma \leadsto    \FF^\partial_{EH}(\Sigma)=(\Fp{\Sigma}{EH},\Sp{\Sigma}{EH},\varp{\Sigma}{EH},\Qp{\Sigma}{EH}).
\end{equation}
\end{definition}

\subsection{BFV Palatini--Cartan theory}\label{sec:PC-BFV_theory}
Let $\Sigma$ be an $(N-1)$-dimensional compact and orientable\footnote{For simplicity we orient $\Sigma$ and $V$, but the formalism generalizes to nonorientable $\Sigma$ as well, see \cite{CCS2020}.} smooth manifold and let $P \rightarrow \Sigma$ be an $SO(N-1,1)$-principal bundle. Let also $\mathcal{V}$ be the associated vector bundle where each fibre is isomorphic to $(V, \eta)$, an $N$-dimensional vector space with a pseudo-Riemannian inner product $\eta$ on it. We further identify $\mathfrak{so}(N-1,1) \cong \bigwedge^2 \mathcal{V}$ using $\eta$.

Furthermore we use the following notation:
\begin{align*}
    \Omega_{\partial}^{i,j}:= \Omega^i\left(\Sigma, \textstyle{\bigwedge^j} \mathcal{V}\right).
\end{align*}

The BFV data for PC theory has been described in \cite{CCS2020} for $N \geq 4$ and in \cite{CaSc2019} for $N=3${, the following discussion will be nontrivial for $N\geq 4$, see Remark \ref{rem:constraintsandconditions}}. The classical fields of the theory are then $ e \in \Omega_{nd}^1(\Sigma, \mathcal{V})$ --- i.e $ \Omega_{\partial}^{1,1}$ plus the nondegeneracy condition that the induced morphism $T\Sigma\to\mathcal{V}$ should be injective --- and the equivalence class of a connection $\omega \in \mathcal{A}(\Sigma)$ under the $e$-dependent relation $\omega  \sim \omega + v $ for $v$ such that {$W_{e^{N-3}}^{1,1}(v)=0$, where 
$$W_{e^{N-3}}^{1,1}\colon \Omega^{1,1}_\partial \to \Omega^{2,2}_\partial, \qquad W_{e^{N-3}}^{1,1}(v) = e^{N-3}\wedge v.$$}
We denote this equivalence class and the quotient space it belongs to by $[\omega] \in \mathcal{A}^{\mathrm{red}}(\Sigma)$. We further assume that the boundary metric
\begin{equation} \label{e:Boundary-metric}
g^\partial_{ij}:=\eta(e_i,e_j)
\end{equation}
is nondegenerate.

{
The symplectic manifold $F^\partial_{PC}$ of (degree-$0$) boundary fields for PC theory is then the total space of the fibre bundle $F^\partial_{PC} \longrightarrow \Omega^1_{\text{nd}}(\Sigma,\mathcal{V})$, with fiber $\mathcal{A}^{\mathrm{red}}$. The manifold $F^\partial_{PC}$ is obtained as the symplectic reduction of the naive boundary two-form $\check{\varpi}=\int e^{N-3}\delta e\delta \omega$, which is pre-symplectic since $\mathrm{ker}(\check{\varpi}) \simeq \mathrm{ker}(W^{1,2}_{e^{N-3}})\not=\{0\}$, as described in \cite{CS2019}. Instead of working with equivalence classes of connections, it is convenient to fix a nonvanishing section $\epsilon_n\in \Gamma(\mathcal{V})$ and enforce a condition called \emph{structural constraint}, which was introduced in \cite{CCS2020}: 
\begin{align}\label{e:constraint}
 (N-3)\epsilon_n e^{N-4} d_{\omega} e  \in \text{Im} W_{e^{N-3}}^{1,1},
\end{align}
in the space of boundary tetrads and connections. In order to do this one restricts to tetrads $e$ that are linearly independent from $\epsilon_n$. In general this implies working in patches over the space of the $e$-fields. However, if $g^\partial$ is space-like, we may choose once and for all $\epsilon_n$ to be time-like, which provides a global choice on the space of the $e$-fields.

\begin{remark}
Considering the boundary by itself, the constraint \eqref{e:constraint} is one of the possible ways to fix the the transformations in the kernel of the presymplectic form. If we take the bulk as well into account, it assumes a more fundamental role: it is the necessary and sufficient condition for the transversal equations of motions to be solvable.
Indeed, the (bulk) equations of motion split into a \emph{tangential} equation
    $$e^{N-3} d_\omega e = 0$$
and a transversal one
    $$(N-3)e_n e^{N-4} d_\omega e = e^{N-3} (d_\omega e)_n,$$
which tells us that $e_n e^{N-4} d_\omega e$ must be in the image of $W_{e^{N-3}}^{1,1}$. Then, upon using the tangential equation, we can replace $e_n$ with some fixed $\epsilon_n$. See Section \ref{sec:Interpretation}.
\end{remark}

Denoting by $\mathsf{S}\subset \Omega_{nd}^1(\Sigma,\mathcal{V})\times \mathcal{A}(\Sigma)$ the submanifold defined by Equation \eqref{e:constraint}, we have:
\begin{proposition}[\cite{CCS2020,CS2019}]\label{prop:symplecticsubspeace}
There exists a symplectomorphism 
$${F}^\partial_{PC}(\Sigma) \longrightarrow \mathsf{S}.$$
\end{proposition}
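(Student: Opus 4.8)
The plan is to construct the symplectomorphism $F^\partial_{PC}(\Sigma)\to\mathsf{S}$ explicitly by producing a distinguished representative of each equivalence class $[\omega]\in\mathcal{A}^{\mathrm{red}}(\Sigma)$ that lies in $\mathsf{S}$, and then to check that the induced map intertwines the reduced symplectic structure on $F^\partial_{PC}$ with the restriction of the naive two-form $\check\varpi$ to $\mathsf{S}$. First I would fix the nonvanishing section $\epsilon_n\in\Gamma(\mathcal{V})$ (time-like, hence globally available when $g^\partial$ is space-like) and restrict, as in the text, to the open set of tetrads $e$ linearly independent from $\epsilon_n$. The key algebraic input is the structure of the maps $W^{i,j}_{e^{N-3}}$: one needs that, for $e\in\Omega^1_{nd}(\Sigma,\mathcal{V})$ linearly independent from $\epsilon_n$, every connection $\omega\in\mathcal{A}(\Sigma)$ has within its class $[\omega]$ (i.e. modulo $\ker W^{1,1}_{e^{N-3}}$) exactly one representative $\tilde\omega$ satisfying the structural constraint \eqref{e:constraint}. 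This is the analogue, at the level of the connection rather than of the kernel of $\check\varpi$, of the existence/uniqueness statements for $W$-maps used in \cite{CCS2020,CS2019}, and I would invoke those linear-algebra lemmas (surjectivity of $W^{1,1}_{e^{N-3}}$ onto the relevant subspace of $\Omega^{2,2}_\partial$ modulo its image, and the identification of $\ker W^{1,1}_{e^{N-3}}$) to solve, fibrewise over $\Sigma$, for the correction $v$ with $W^{1,1}_{e^{N-3}}(v)$ equal to the obstruction $(N-3)\epsilon_n e^{N-4}d_\omega e$ projected appropriately. Uniqueness of $v$ modulo $\ker W^{1,1}_{e^{N-3}}$ is exactly what makes the assignment $[\omega]\mapsto\tilde\omega$ well defined, giving a bijection $F^\partial_{PC}(\Sigma)\to\mathsf{S}$ covering the identity on $\Omega^1_{nd}(\Sigma,\mathcal{V})$.

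Next I would verify that this bijection is a symplectomorphism. The reduced form on $F^\partial_{PC}$ is by construction the form descending from $\check\varpi=\int e^{N-3}\delta e\,\delta\omega$ along the coisotropic reduction by $\ker\check\varpi\simeq\ker W^{1,2}_{e^{N-3}}$. Since $\mathsf{S}$ is a submanifold of $\Omega^1_{nd}(\Sigma,\mathcal{V})\times\mathcal{A}(\Sigma)$ transversal to (a complement of) the characteristic distribution of $\check\varpi$, restricting $\check\varpi$ to $\mathsf{S}$ gives a symplectic form, and it agrees with the reduced form under the projection; this is the standard fact that a section of a coisotropic reduction pulls back the reduced symplectic form to the restriction of the original presymplectic form. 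Concretely I would compute $\delta$ of the constraint \eqref{e:constraint} to identify $T\mathsf{S}$ inside $T\Omega^1_{nd}\times\Gamma(\wedge^2\mathcal V\otimes T^*\Sigma)$, show $T\mathsf{S}\cap\ker\check\varpi=0$ (using again the $W$-map lemmas, now for $W^{1,2}$ versus $W^{1,1}$), and conclude nondegeneracy of $\check\varpi|_{\mathsf{S}}$ and the matching with the reduced form.

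The main obstacle is the linear-algebra core: establishing the precise mapping properties of $W^{1,1}_{e^{N-3}}$ and $W^{1,2}_{e^{N-3}}$ (injectivity/surjectivity, dimensions of kernels and cokernels as functions of $N$ and of the rank/nondegeneracy of $e$) that simultaneously (i) guarantee the unique structural-constraint representative $\tilde\omega$ and (ii) guarantee $T\mathsf{S}$ is transversal to $\ker\check\varpi$. These are exactly the delicate pointwise computations in the representation theory of $\bigwedge^\bullet V$ that underpin \cite{CS2019,CCS2020}, and the cleanest route is to quote them rather than redo them; the nondegeneracy hypothesis on $g^\partial$ and the linear independence of $e$ from $\epsilon_n$ are what make these statements hold. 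Everything else — well-definedness on equivalence classes, smoothness of $[\omega]\mapsto\tilde\omega$ in $e$-patches, and the symplectic comparison — is then routine, modulo care with the patchwise nature of the construction when $g^\partial$ is not space-like, which I would handle by noting the constructions glue because the reduced data do not depend on the auxiliary choice of $\epsilon_n$.
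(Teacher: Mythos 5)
Your proposal is correct and follows essentially the same route as the proof in the cited references \cite{CCS2020,CS2019} (the paper itself only quotes the result): one uses the pointwise linear-algebra properties of the maps $W^{i,j}_{e^{N-3}}$ to show that the structural constraint \eqref{e:constraint} selects a unique representative in each class $[\omega]$, so that $\mathsf{S}$ is a global section of the presymplectic reduction of $\check\varpi$, transversal to its characteristic distribution, and hence inherits the reduced symplectic form. The only caveat is your closing remark about gluing over changes of $\epsilon_n$: since $\mathsf{S}$ itself depends on the choice of $\epsilon_n$, the statement is really made for a fixed $\epsilon_n$ on the open set of tetrads linearly independent from it, exactly as the paper arranges in the space-like case.
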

Effectively, then, one can work on $\mathsf{S}$. The main advantage of this explicit description of the symplectic space of boundary fields is that it allows to explicitly compute the \emph{symplectic} BFV data for PC theory (see Remark \ref{rem:sympBFV}). 

In order to write down the BFV data it is sufficient to fix the equivalence class $[\omega]\in\mathcal{A}^{\mathrm{red}}(\Sigma)$ using \eqref{e:constraint} as done in \cite{CCS2020}, however, when extending $F_{PC}^\partial$ to a graded manifold we can choose to modify the structural constraint by adding terms in the ghosts and antifields. This will turn out to be convenient in what follows.
\begin{definition}\label{def:BFVspaceoffields}
Consider the graded manifold 
\begin{equation}\label{e:BFVspaceoffields}
    \check{\mathcal{F}}_{PC}(\Sigma):= \Omega^{1,1}_{\text{nd}}(\Sigma,\mathcal{V})\times \mathcal{A}(\Sigma) \times T^* \left(\Omega_{\partial}^{0,2}[1]\oplus \mathfrak{X}[1](\Sigma) \oplus C^\infty[1](\Sigma)\right),
\end{equation}
where we denote fields by 
\begin{itemize}
    \item $e \in \Omega^{1,1}_{\text{nd}}(\Sigma,\mathcal{V})$ and $\omega \in \mathcal{A}(\Sigma)$ in degree zero, 
    \item $c \in\Omega_{\partial}^{0,2}[1]$, $\xi \in\mathfrak{X}[1](\Sigma)$ and $\lambda\in \Omega^{0,0}[1]$ in degree one, 
    \item $c^\dag\in\Omega_{\partial}^{3,2}[-1]$, $\lambda^\dag\in\Omega_{\partial}^{3,4}[-1]$ and $\xi^\dag\in\Omega_\partial^{1,0}[-1]\otimes\Omega_{\partial}^{3,4}$ in degree minus one,
\end{itemize}
together with a fixed section $\epsilon_n \in \Gamma(M,\mathcal{V})$, completing the image of $e$ to a basis of  $\mathcal{V}$. We define the \emph{BFV structural constraint} to be condition\footnote{This BFV structural constraint differs from the classical one of equation \eqref{e:constraint} by a term depending on
ghosts and their antifields, so it reduces to the classical structural constraint if we restrict it to the classical space of boundary fields (the body of $\check{\calF}_{PC}(\Sigma)$).}:
\begin{equation}\label{e:BFVstructuralConstraint}
    (N-3)\epsilon_n e^{N-4} d_{\omega} e  + \left(L_\xi^\omega \epsilon_n - [c,\epsilon_n]\right)^{(a)}  c^\dag_a   \in \text{Im} W_{e^{N-3}}^{1,1}
\end{equation}
where $\{(a),(n)\}$ denote components with respect to a basis $\{e_a, \epsilon_n\}$. We define the \emph{space of BFV fields} $\calF_{PC}^\partial(\Sigma)$ to be the space of solutions of the BFV structural constraints within $\check{\calF}_{PC}(\Sigma)$.
\end{definition}


In order to have a better expression of the BFV structure, following \cite[Section 5.2]{CCS2020}, we introduce the field $y^\dag \in \Omega_{\partial}^{3,3}[-1]$ such that the original fields $\lambda^\dag$ and $\xi^{\dag '}$ are recovered through  $\epsilon_n y^\dag = -\lambda^\dag$ and $e_a y^\dag =  -\xi_a^{\dag '}$. This also allows us to write a single expression for all $N \geq 3$.

To complete the definition of the BFV data for Palatini--Cartan theory we consider a degree $1$ functional} and a symplectic form\footnote{This version of the BFV data features a particularly simple action functional, at the price of not expressing the symplectic form in its Darboux form. An alternative can be found in \cite{CCS2020}.} given, respectively, by:
\begin{align}\label{e:action_C3}
\Sp{\Sigma}{PC}= \int_{\Sigma} & c e^{N-3} d_{\omega} e + \iota_{\xi} e e^{N-3} F_{\omega} + \epsilon_n \lambda e^{N-3} F_{\omega} +\frac{1}{2} [c,c] c^{\dag} - L^{\omega}_{\xi} c c^{\dag}\nonumber\\ &+ \frac{1}{2} \iota_{\xi}\iota_{\xi} F_{\omega}c^{\dag}
-[c, \epsilon_n \lambda  ]y^{\dag} + L^{\omega}_{\xi} (\epsilon_n \lambda )y^{\dag} + \frac{1}{2}\iota_{[\xi,\xi]}e y^{\dag},
\end{align}
\begin{align}\label{e:symplectic_form_C3}
\varp{\Sigma}{PC} = \int_{\Sigma} e^{N-3} \delta e \delta \omega + \delta c \delta c^\dag + \delta \omega \delta (\iota_\xi c^\dag) - \delta \lambda \epsilon_n \delta y^\dag+\iota_{\delta \xi} \delta (e y^\dag).
\end{align}
{
Note that each term of the integrals belongs to $\Omega^{N-1,N}$, which can be canonically identified, using $\sqrt{|\det \eta|}$, with the space of densities on $\Sigma$. A detailed explanation can be found in \cite{CCS2020}. However, we will not write down the factor $\sqrt{|\det \eta|}$ explicitly.

\begin{definition}\label{def:BFVdata}
We define BFV Palatini--Cartan theory to be the assignment
\begin{equation}
\Sigma\leadsto    \FF^\partial_{PC}(\Sigma)=(\Fp{\Sigma}{PC},\Sp{\Sigma}{PC},\varp{\Sigma}{PC},\Qp{\Sigma}{PC}).
\end{equation}
with $\Qp{\Sigma}{PC}$ the Hamiltonian vector field of $\Sp{\Sigma}{PC}$ with respect to $\varp{\Sigma}{PC}$.
\end{definition}

\begin{remark}\label{rem:diffwithCCS}
Notice that the data presented in Definition \ref{def:BFVdata} are equivalent to the BFV data presented in \cite{CCS2020}. The cohomological vector field of a function on (the presymplectic manifold) $\check{\mathcal{F}}_{PC}(\Sigma)$ is uniquely fixed by the tangency to the BFV structural constraint \eqref{e:BFVstructuralConstraint}. In \cite{CCS2020} the equivalent choice of the constraint \eqref{e:constraint} is made, and the resulting $Q$'s differ along $\omega$ by a term in the kernel of $W_{e^{N-3}}^{1,2}$.\footnote{Also note that this term vanishes on the body of $\check \calF_{PC}$ when imposing the classical constraints, i.e., the expressions paired to the ghosts in the first three terms of \eqref{e:action_C3}. The two BFV constructions, by \eqref{e:constraint} or by \eqref{e:BFVstructuralConstraint}, describe the characteristic distribution of the coisotropic submanifold determined by the constraints. They extend this distribution in a different way outside of it, but this is irrelevant.} We make this choice in this paper because it makes it easier to compare the AKSZ construction for the constrained space $\calF^\partial_{PC}(\Sigma)$ with a constrained version of PC, see Remark \ref{rem:constraintsandconditions}.
\end{remark}

\begin{remark}\label{rem:constraintsandconditions}
In \cite{CS2017} two of the authors showed that the natural BV data for PC theory (in dimension $N\geq 4$) cannot be extended to a BV-BFV theory (Definition \ref{def:BVBFV}). Since the degree-$0$ part of our AKSZ target will be the constrained space $\Fp{\Sigma}{PC}$, the extension to the cylinder\footnote{Equation \eqref{e:BFVstructuralConstraint} can be extended to the space of fields over a cylinder. We consider this point of view in Definition \ref{def:reducedPCfield}.} of Equation \eqref{e:BFVstructuralConstraint} provides an explicit realisation of (one of) the conditions that must be imposed on the bulk BV fields, in order to have a 1-extended BV-BFV theory. For $N=3$, the additional condition imposed by  \eqref{e:constraint} is void, explaining why 3d PC theory can be (fully) extended without additional requirements on the fields \cite{CaSc2019}. We will comment further on this in Section \ref{sec:Interpretation}.\end{remark}

\begin{remark}\label{rem:sympBFV}
The main difficulty in constructing BFV data for PC theory comes from the requirement that $(\mathcal{F}^\partial_{PC},\varpi^\partial_{PC})$ be a symplectic manifold, as opposed to pre-symplectic (cf. with \cite{AlkGri} and \cite{grigoriev2016presymplectic}). We stress that this requirement is essential for quantisation. To the best of our knowledge, a complete description of the \emph{symplectic} BFV structure for PC gravity was not available before \cite{CCS2020}. The complexity of the symplectic reduction arising in the classical description of the degree-$0$ boudary structure \cite{CS2019,CCS2020}, as well as the obstruction in the BV-BFV induction procedure \cite{CS2017}, are relevant features  peculiar to this formulation of gravity.
\end{remark}

\begin{remark}
The conventions we choose for the fields in \eqref{e:action_C3} and in \eqref{e:symplectic_form_C3} differ from those in \cite[Proposition 21]{CaSc2019}. In order to make contact between the formulas one has to perform the following change of variables:
\begin{align*}
e^{\dag} \mapsto y^{\dag} &\qquad \omega^{\dag} \mapsto c^{\dag} \\
c \mapsto - c &\qquad \xi \mapsto -\xi \qquad \xi^{n} \mapsto - \lambda.
\end{align*}
In the case $N \geq 4$ some signs differ from the ones in \cite{CCS2020} due to a sign convention for $\lambda$.
\end{remark}

}

\section{AKSZ EH}\label{s:AKSZEH}
We explore here the idea of reconstructing the $(d+1)$-dimensional BV extension of Einstein--Hilbert theory by means of the AKSZ construction, with target the BFV data for Einstein--Hilbert theory (as presented in Section \ref{s:BFVEH}, based on \cite{CS2016b}). In order to do this one looks at the space $\calF^{\AKSZ}_{EH}\coloneqq\mathrm{Maps}(T[1]I,\Fp{\Sigma}{EH})$, with $I$ an interval. In a chart, to consider the transgression map of Equation \eqref{transgressionmap1} means to look at fields composed of a 0-form and a 1-form on the interval $I$, with values in $\Fp{\Sigma}{EH}$ and fixed total degree. For the case at hand we will then have
\begin{subequations}\label{AKSZfields}\begin{eqnarray}
\mathfrak{G} = \bg(t) + \bpi^\dag(t) dt & \mathfrak{Z}^n = \xi^{n}(t) + \eta(t) dt & \mathfrak{Z}^a = \xi^{a}(t) + \beta^a(t) dt \\
\mathfrak{P} = \bpi(t) + \bg^\dag(t) dt & \mathfrak{H}_n = \bphi_n(t) + \xi^\dag_n(t) dt & \mathfrak{H}_a = \bphi_a(t) + \xi^\dag_a(t) dt
\end{eqnarray}\end{subequations}
where, for all $t\in I$, we parametrise $\calF^{\AKSZ}_{EH}$ with fields\footnote{The motivation for this particular choice of notation will be manifest very soon.} 
\begin{align*}
\bg(t)\in \mathrm{Map}(I,S_{nd}^2(T\Sigma)), \qquad & \bg^\dag(t)\in \mathrm{Map}(I,S^2[-1](T^*\Sigma)), \\
\bpi(t)\in \mathrm{Map}(I,S^2(T^*\Sigma)), \qquad & \bpi^\dag(t)\in \mathrm{Map}(I,S[-1]^2(T\Sigma)), \\
\eta(t), \beta^a(t)\in\mathrm{Map}(I,C^\infty(\Sigma)), \qquad &  \bphi_n(t),\bphi_a(t)\in \mathrm{Map}(I,\mathrm{Dens}[-1](\Sigma)) , \\
\xi^n(t), \xi^a(t)\in \mathrm{Map}(I,C^\infty[1](\Sigma)), \qquad &\xi^\dag_n(t),\xi^\dag_a(t)\in \mathrm{Map}(I,\mathrm{Dens}[-2](\Sigma)), 
\end{align*}
where we required $\gamma(t)$ to be nondegenerate for all $t\in I$. Now, observe that $\eta, \xi^n$ are functions on $\Sigma$ whereas $\beta^a, \xi^a$ can be considered as the components of vector (fields) tangent to $\Sigma$, which we will denote by $\beta$ and $\xi^\partial$. Similarly, we can promote $\bphi_a$ and $\xi^\dag_a$ into $\Sigma$-density valued one forms, which we will denote by $\bphi_\partial, \xi^\dag_\partial$. For simplicity of notation we will often use a unified index $\rho\in\{n,a\}$, so that $\bphi_\rho\in\{\bphi_n,\bphi_a\}$ and $\xi^\dag_\rho\in\{\xi^\dag_n,\xi^\dag_a\}$.

\begin{remark}
Notice once again that we are using (nondegenerate) sections of $S^2(TM)$ instead of actual metrics. Occasionally we will need to raise/lower indices using $\gamma$ and its ``inverse'' which we will denote by $\bg^{-1}$. See Remark \ref{rem:inversemetrics}.
\end{remark}

In what follows it will be useful to denote the \emph{Kinetic} part of the Hamiltonian functional (Equation \eqref{Hamiltonianconstraint}) as
\begin{equation}\label{Keq}
\mathcal{K}\coloneqq \frac{1}{\sqrt{\bg}}\left(\mathrm{Tr}_{\bg}[\bpi^2] - \frac{1}{d-1}\mathrm{Tr}_{\bg}[\bpi]^2\right),
\end{equation}
and the \emph{cosmological Einstein tensor} with respect to a metric $\bg^{-1}$ with cosmological constant $\Lambda$ will be
\begin{equation}\label{einsteintensor}
\mathbf{G}[\bg,\Lambda] = R[\bg]  + \left(\Lambda - \frac12 \mathrm{Tr}_{\bg} R[\bg]\right) \bg^{-1},
\end{equation}
where $R[\bg]$ is the Ricci--Riemann tensor of $\bg$. We also introduce a tensor-valued second order operator $\mathbf{D}_{\bg}$ that on functions $\phi\in C^\infty(\Sigma)$ acts as
\begin{equation}\label{BransDicketensor}
\mathbf{D}_{\bg}\phi = \gamma^{-1} \Delta^\partial \phi - \nabla^\partial\odot\nabla^\partial \phi
\end{equation}
where $\nabla^\partial$ denotes the Levi-Civita connection of $\bg$, and we denoted by $\Delta^\partial=\nabla^\partial\cdot \nabla^\partial$ the Laplace-Beltrami operator. In a coordinate chart this reads:
$$
[\mathbf{D}_{\bg}\phi]_{ab}={\bg}_{ab}{\bg}^{cd}\nabla^\partial_c\nabla^\partial_d\phi - \nabla^\partial_a\nabla^\partial_b\phi 
$$

In what follows we will also use the metric gradient, i.e. the vector field $\mathrm{grad}_\bg \phi = \bg(d\phi,\cdot)$. Since the covariant derivative $\nabla^\partial$ will not explicitly appear in what follows, we shall employ the symbol $\nabla$ to introduce a shorthand notation for the metric gradient: 
$$
\nabla_\bg\phi\equiv \frac12\mathrm{grad}_\bg \phi.
$$
Then, it is a matter of a straightforward calculation to show that
$$
\mathbf{D}_{\bg}\phi = \frac12 \left(\gamma^{-1} \mathrm{Tr}[\mathcal{L}_{\nabla_\bg \phi}\gamma] - \mathcal{L}_{\nabla_\bg \phi}\gamma^{-1}\right).
$$
 
With this in mind we can state the following:

\begin{theorem}\label{th:AKSZEH}
The AKSZ data $\FF_{EH}^{\AKSZ}(I;\FF^\partial_{EH}(\Sigma))$ are given by the $(-1)$-shifted symplectic manifold 
{\small
\begin{align*}
    \calF^{\AKSZ}_{EH}\simeq T^*[-1]\left(\mathrm{Map}(I,S_{nd}^2(T\Sigma) \times S^2(T^*\Sigma)\times C^\infty(\Sigma)\times\mathfrak{X}(\Sigma)\times\mathfrak{X}[1](\Sigma)\times C^\infty[1](\Sigma))\right)
\end{align*}
}%
\begin{equation}\label{AKSZBVFORM2}
\Omega^{\AKSZ}_{EH}
	= 
	\int\limits_{\Sigma\times I} \left\{ - \langle\delta\bg, \delta\bg^\dag\rangle + \langle\delta\bpi, \delta\bpi^\dag\rangle 
		+ \delta\xi^{\rho}\delta\xi^\dag_\rho + \delta\eta\delta\bphi_n + \delta\beta^a\delta\bphi_a\right\} dt
\end{equation}
and the AKSZ action functional : 
\begin{subequations}\begin{align}
S^{\AKSZ}_{EH}
	&= \intl_{\Sigma \times I} \Big\{\langle \bpi, \dot{\bg}\rangle   -  \langle\bphi_\rho, \dot{\xi}^\rho\rangle +H_n \eta + H_\partial(\beta) - \langle  \bg^\dag, L_{\xi^{\partial}}\bg\rangle  + \langle \bpi^\dag, L_{\xi^{\partial}}\bpi\rangle\\
	&    - \left(\pard{\mathcal{K}}{\bg}(\bpi^\dag) + \pard{\mathcal{K}}{\bpi}(\bg^\dag)\right) \xi^n  - \sqrt{\bg}\langle \bpi^\dag, \mathbf{G}[\bg,\lambda] \xi^n + \mathbf{D}_{\bg}(\xi^n) \rangle   \\
	&    + \langle \bphi_\partial, \nabla_\gamma\eta\xi^{n} - \eta\nabla_\gamma\xi^{n} + L_{\xi^\partial} \beta \rangle   - \bphi_n L_{\beta}\xi^n +\bphi_n L_{\xi^\partial}\eta \\
	& + \left\langle\xi^\dag_\partial, \frac12 [\xi^{\partial},\xi^{\partial}] 
		+ \xi^{n}\nabla_\gamma\xi^{n} \right\rangle   + \bpi^\dag(\bphi_\partial,d\xi^{n})\xi^{n} + \xi_n^\dag L_{\xi^{\partial}}\xi^{n} \Big\} dt,
\end{align}\end{subequations}
together with its Hamiltonian vector field $Q^{\AKSZ}_{EH}$.
\end{theorem}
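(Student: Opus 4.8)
The plan is to obtain $\FF^{\AKSZ}_{EH}(I;\FF^\partial_{EH}(\Sigma))$ by feeding the exact BFV data $\FF^\partial_{EH}(\Sigma)$ of Section \ref{s:BFVEH} into the one-dimensional AKSZ construction of Theorem \ref{AKSZtheorem}, in the concrete form of Proposition \ref{prop:AKSZ-BV-BFV}, and then computing the three transgressed objects in closed form. Theorem \ref{AKSZtheorem} already guarantees that the output is a BV theory --- in particular that $\Omega^{\AKSZ}_{EH}$ is $(-1)$-symplectic, that $S^{\AKSZ}_{EH}$ solves the classical master equation (up to boundary terms), and that the cohomological vector field $Q^{\AKSZ}_{EH}=\{S^{\AKSZ}_{EH},\cdot\}$ exists --- so the entire content of the statement is the explicit form of $\calF^{\AKSZ}_{EH}$, $\Omega^{\AKSZ}_{EH}$ and $S^{\AKSZ}_{EH}$; $Q^{\AKSZ}_{EH}$ needs nothing beyond its definition. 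For the space of fields I would start from $\calF^{\AKSZ}_{EH}=\mathrm{Map}(T[1]I,\Fp{\Sigma}{EH})=\Omega^\bullet(I)\otimes\Fp{\Sigma}{EH}$ and use that $\Fp{\Sigma}{EH}=T^\ast M$, with $M=S^2_{nd}(T\Sigma)\times\mathfrak{X}[1](\Sigma)\times C^\infty[1](\Sigma)$, carrying the canonical degree-zero form \eqref{symplecticstructure}; distributing $\Omega^\bullet(I)=C^\infty(I)\oplus\Omega^1(I)[-1]$ over base and cotangent fibre and tracking the induced degree shifts yields the stated identification with a $(-1)$-shifted cotangent bundle over the indicated mapping space (the $t$-wise nondegeneracy of $\gamma$ being inherited from $S^2_{nd}$). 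This is the content of \eqref{AKSZfields}, where the $C^\infty(I)$-components retain the old names, the $\Omega^1(I)[-1]$-components get new ones, and $\beta^a,\xi^a$ (resp.\ $\bphi_a,\xi^\dag_a$) are further organised into vector fields (resp.\ density-valued one-forms) on $\Sigma$ as in the surrounding text.

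For $\Omega^{\AKSZ}_{EH}=\T^{(2)}_I(\varp{\Sigma}{EH})$ I would reuse the local computation in the proof of Proposition \ref{prop:AKSZ-BV-BFV}: each canonically conjugate pair $\langle p,\delta q\rangle$ in the primitive $\alpha^\partial_{EH}=\int_\Sigma\langle\bpi,\delta\bg\rangle+\langle\varphi_\partial,\delta\xi^\partial\rangle+\langle\varphi_n,\delta\xi^n\rangle$ transgresses, after extracting the $dt$-part, to $\int_{\Sigma\times I}\big(\langle\delta p,\delta p^\dag\rangle+(-1)^{|q|+1}\langle\delta q,\delta q^\dag\rangle\big)\,dt$; inserting $|\bg|=0$, $|\xi^\partial|=|\xi^n|=1$ and the superfield names of \eqref{AKSZfields} reproduces \eqref{AKSZBVFORM2}.

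For the action, \eqref{e:AKSZ_generic-action} gives $S^{\AKSZ}_{EH}=\T^{(0)}_I(\Sp{\Sigma}{EH})+\iota_{d_I}\T^{(1)}_I(\alpha^\partial_{EH})$. The second summand, the Liouville contribution, is the kinetic term $\int_{\Sigma\times I}\big(\langle\bpi,\dot{\bg}\rangle-\langle\varphi_\rho,\dot\xi^\rho\rangle\big)$ --- the first line of the displayed action --- the relative sign on the ghost pairs being dictated by the Koszul rule. The first summand is computed by plugging the superfields \eqref{AKSZfields} into the ADM--BFV action \eqref{ADMBoundaction} and retaining the $dt$-linear part. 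Since each superfield equals its degree-zero component plus a $dt$-multiple and $dt\wedge dt=0$, only the first-order Taylor term survives, so that $[\T^{(0)}_I(\Sp{\Sigma}{EH})]^{\mathrm{top}}$ is the sum over the six fields of the linearisation of $\Sp{\Sigma}{EH}$ in the direction of the respective $dt$-partner, i.e.\ of $\tfrac{\delta \Sp{\Sigma}{EH}}{\delta\bg}(\bpi^\dag)$, $\tfrac{\delta \Sp{\Sigma}{EH}}{\delta\bpi}(\bg^\dag)$, $\tfrac{\delta \Sp{\Sigma}{EH}}{\delta\xi^n}(\eta)$, $\tfrac{\delta \Sp{\Sigma}{EH}}{\delta\xi^\partial}(\beta)$, $\tfrac{\delta \Sp{\Sigma}{EH}}{\delta\varphi_n}(\xi^\dag_n)$, $\tfrac{\delta \Sp{\Sigma}{EH}}{\delta\varphi_\partial}(\xi^\dag_\partial)$, each carrying the sign produced when $dt$ is commuted past the (possibly odd) component field. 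It then remains to evaluate these six functional derivatives of the explicit local functional \eqref{ADMBoundaction} and to reorganise the outcome by integrations by parts over $\Sigma$: moving the Lie derivatives off the antifields to produce $\langle\bpi^\dag,L_{\xi^\partial}\bpi\rangle$ and $\langle\bg^\dag,L_{\xi^\partial}\bg\rangle$, moving the exterior differential off $d\xi^n$ in the term $-\bg(\bphi_\partial,d\xi^n)\xi^n$ onto the neighbouring fields, and introducing the metric-gradient shorthand $\nabla_\bg$. Matching this with the remaining lines of the displayed action is then bookkeeping (the terms $H_n\eta$ and $H_\partial(\beta)$, in particular, arise simply as $\tfrac{\delta}{\delta\xi^n}\!\int H_n\xi^n$ and $\tfrac{\delta}{\delta\xi^\partial}\!\int\langle\bpi,L_{\xi^\partial}\bg\rangle$ paired with $\eta$ and $\beta$).

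The one genuinely computational step --- and the expected main obstacle --- is the variation of the smeared Hamiltonian constraint $\int_\Sigma H_n\,\xi^n$ coming from the first term of \eqref{ADMBoundaction}, where by \eqref{Hamiltonianconstraint} one has $H_n=\mathcal{K}+\sqrt{\bg}\,(R^\partial-2\Lambda)$. Its $\bpi$-derivative is algebraic and gives the $\tfrac{\delta\mathcal{K}}{\delta\bpi}(\bg^\dag)\,\xi^n$ term, with $\mathcal{K}$ the kinetic density \eqref{Keq}; its $\bg$-derivative splits as the algebraic $\tfrac{\delta\mathcal{K}}{\delta\bg}(\bpi^\dag)\,\xi^n$ plus the variation of the curvature density $\sqrt{\bg}\,(R^\partial-2\Lambda)$ tested against the \emph{non-constant} ``lapse'' $\xi^n$. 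This last piece is precisely the scalar--tensor (equivalently $f(R)$) variational identity, $\tfrac{\delta}{\delta\bg}\int_\Sigma\sqrt{\bg}\,(R^\partial-2\Lambda)\,\phi=\sqrt{\bg}\,\big(\mathbf{G}\,\phi+\mathbf{D}_{\bg}\phi\big)$, with $\mathbf{G}$ the cosmological Einstein tensor \eqref{einsteintensor} and $\mathbf{D}_{\bg}$ the second-order operator \eqref{BransDicketensor}: the $\mathbf{D}_{\bg}\xi^n$ contribution is exactly the ``boundary'' term of the Einstein--Hilbert variation, which is usually discarded for constant $\xi^n$ but here survives, paired against $\bpi^\dag$ through the formal self-adjointness of $\mathbf{D}_{\bg}$. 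Establishing this identity cleanly, together with consistent sign bookkeeping for the $\mathbb{Z}$-grading and the odd generator $dt$, is the heart of the proof; everything else is the Leibniz rule and integration by parts on $\Sigma$. Assembling the six functional-derivative contributions with the kinetic term yields the displayed $S^{\AKSZ}_{EH}$, and since $Q^{\AKSZ}_{EH}$ is by construction its Hamiltonian vector field with respect to $\Omega^{\AKSZ}_{EH}$, this completes the argument.
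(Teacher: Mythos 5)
Your proposal is correct and follows essentially the same route as the paper: transgress the symplectic form and the Liouville one-form termwise with careful Koszul signs, and obtain $\T^{(0)}_I(\Sp{\Sigma}{EH})$ as the first-order Taylor expansion of the BFV action in the $dt$-components, the only substantive computation being the variation of $\sqrt{\bg}\,R^\partial\,\xi^n$ against a non-constant $\xi^n$, which produces $\sqrt{\bg}\left(\mathbf{G}[\bg,\Lambda]\xi^n+\mathbf{D}_{\bg}\xi^n\right)$ plus a $d_\Sigma$-exact term discarded because $\Sigma$ is closed. The paper's proof carries out exactly this, so no gap remains.
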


\begin{proof}
The prescription of Theorem \ref{AKSZtheorem}, suggests that to construct the data in $\FF^{\AKSZ}(I;\FF^\partial_{EH})$ we need to compute
\begin{equation}
\Omega^{\AKSZ}_{EH}
	= \T_I^{(2)}\varp{\Sigma}{EH} 
		= \int\limits_{\Sigma\times I} \langle \delta\mathfrak{P}, \delta\mathfrak{G} \rangle 
		+ \langle \delta\mathfrak{H}_\rho, \delta\mathfrak{Z}^\rho\rangle. 
\end{equation}
By selecting the top-form part of the integrand and observing that $|dt|=1$ we get
\begin{equation}
\Omega^{\AKSZ}_{EH}= \intl_{\Sigma\times I} \left\{ 
	- \langle\delta\bg^\dag, \delta\bg\rangle 
	+ \langle \delta\bpi, \delta\bpi^\dag \rangle
	+ \delta\xi^\dag_\rho \delta\xi^{\rho} 
	+ \delta\bphi_n \delta\eta
	+ \delta\bphi_a\delta\beta^a
	\right\} dt
\end{equation}
where the sign comes from $\langle \delta(\bg^\dag dt), \delta \bg \rangle = - \langle \delta\bg^\dag, \delta \bg \rangle dt$, since $|\delta\bg|=1$, while $\delta\xi^\dag_\rho dt \delta\xi^\rho = \delta\xi^\dag_\rho \delta\xi^\rho dt$, since $|\delta\xi^\rho|=2$. $\Omega^{\AKSZ}$ is a $(-1)$-symplectic structure on $\mathrm{Maps}(T[1]I,\Fp{\Sigma}{EH})$, a BV $2$-form.

Now, from $\alp{\Sigma}{EH}$ we can construct a degree-$0$ functional on $\calF^{\AKSZ}_{EH}$ by first applying the transgression map, which yields the $1$-form 
$$\T^{(1)}_I\alpha^\partial\in\Omega^1(\mathrm{Maps}(T[1]I,\Fp{\Sigma}{EH})),$$ 
and then contracting it with the de Rham differential on $I$ seen as an odd cohomological vector field $d_I$. In a local chart this is tantamount to replacing $\delta \leadsto d_I:= dt\frac{d}{dt}$, so that
\begin{equation}
\iota_{d_I}\T^{(1)}_I\alp{\Sigma}{EH} = \intl_{\Sigma\times I} \left\{\langle\bpi, \dot{\bg}\rangle - \langle\bphi_\rho, \dot{\xi}^\rho\rangle\right\} dt.
\end{equation}
where the sign comes from the fact that $\langle{\bphi_\rho} dt,\,\dot{\xi}^\rho\rangle = -\langle{\bphi_\rho}, \dot{\xi}^\rho\rangle dt$.
Finally, we want to compute the AKSZ action functional 
$$S^{\AKSZ}_{EH}\coloneqq\T_I^{(0)}(\Sp{\Sigma}{EH}) + \iota_{d_{I}}\T_I^{(1)}(\alp{\Sigma}{EH}).$$
This calculation is completely analogous to the previous ones, and it is mostly straightforward. One needs to pay attention to the signs, so it is worthwhile to stress that
$$
\bpi^\dag dt\, (\bphi_\partial, d\xi^n) \xi^n = [\bpi^\dag]^{ab} dt \bphi_a \partial_b \xi^n \xi^n = [\bpi^\dag]^{ab} \bphi_a \partial_b \xi^n \xi^n dt = \bpi^\dag (\bphi_\partial, d\xi^n) \xi^n dt
$$
while
{\small
\begin{align*}
  -\bg(\xi^\dag_\partial dt, d\xi^n)\xi^n - \bg(\bphi_\partial, d \eta\, dt\,\xi^n + d \xi^n\eta dt) = \left(\langle\xi^\dag_\partial, \xi^n \nabla_{\bg}\xi^n\rangle + \langle\bphi_\partial , \nabla_{\bg}\eta \xi^n - \eta \nabla_{\bg}\xi^n\rangle\right)dt  
\end{align*}
}%
Finally, at first order in $dt$,
$$
H_n(\bg + \bpi^\dag dt, \bpi + \bg^\dag) \xi^n =  \pard{(H_n\xi^n)}{\bg}(\bpi^\dag dt) + \pard{(H_n \xi^n)}{\bpi}(\bg^\dag dt ),
$$ 
and $dt\, \xi^n = - \xi^n dt$. We write the formulas above as derivatives of the functional $H_n\xi^n$ to stress that total derivatives will appear, due to the term $R^\partial$ in $H_n$. Recalling the expression for $H_n$ of equation \eqref{Hamiltonianconstraint} and the definition of $\mathcal{K}$, $\mathbf{G}[\gamma,\Lambda]$ and $\mathbf{D}_\gamma$ of Equations \eqref{Keq},\eqref{einsteintensor} and \eqref{BransDicketensor}, the variation of $H_n \xi^n$ with respect to $\bg$ yields 
$$
\pard{(H_n\xi^n)}{\bg} = \pard{\mathcal{K}}{\bg}\xi^n + \pard{(\sqrt{\bg}R^\partial\xi^n)}{\bg} = \pard{\mathcal{K}}{\bg}\xi^n + \sqrt{\bg}\left(\mathbf{G}[\bg,\Lambda] \xi^n + \mathbf{D}_{\bg}(\xi^n)\right) + d(\dots).
$$
The total derivative term is exact with respect to the tangent differential $d_\Sigma$. It can be discarded, provided $\Sigma$ has no boundary (which we are assuming throughout), so:
\begin{multline}
H_n(\bg + \bpi^\dag dt, \bpi + \bg^\dag) \xi^n = \\
	-\left( \pard{\mathcal{K}}{\bpi}(\bg^\dag) + \pard{\mathcal{K}}{\bg}(\bpi^\dag)\right)\xi^n dt 
	- \sqrt{\bg}\left\langle \bpi^\dag,\mathbf{G}[\bg,\Lambda] \xi^n + \mathbf{D}_{\bg}(\xi^n)\right\rangle dt.
\end{multline}
\end{proof}

\begin{remark}
In order to compute the cohomological vector field $Q^{\AKSZ}_{EH}$ we enforce the Hamiltonian condition $\iota_{Q^{\AKSZ}_{EH}} \Omega^{\AKSZ}_{EH}= \delta S^{\AKSZ}_{EH}$ dropping all possible boundary terms. It  reads (we omit the expression for $Q^{\AKSZ}_{EH}\xi^\dag$ and $Q^{\AKSZ}_{EH}\bphi$):
\begin{subequations}\label{QAKSZEH}\begin{align}
Q^{\AKSZ}_{EH} \bg & =  \pard{H_n}{\bpi}\xi^n  + L_{\xi^\partial}\gamma\\
Q^{\AKSZ}_{EH} \bpi &= - \pard{ \mathcal{K}}{\bg}\xi^n - \sqrt{\bg}\left( \mathbf{G}[\bg,\Lambda] \xi^n  + \mathbf{D}_{\bg}(\xi^n)\right) + L_{\xi^\partial}\bpi {  + } \bphi \odot d\xi^n \xi^n\\
Q^{\AKSZ}_{EH} \eta& = {  - }\dot{\xi}{}^n + L_{\xi^{\partial}}\eta - L_\beta\xi^n\\
Q^{\AKSZ}_{EH} \beta & = {  - } \dot{\xi}{}^\partial + L_{\xi^\partial}\beta + \nabla_{\bg} \eta\,\xi^n - \eta\, \nabla_{\bg}\xi^n {  - } \nabla_{\bpi^\dag}\xi^n \xi^n\\
Q^{\AKSZ}_{EH} \xi^\partial &= \frac12[\xi^\partial,\xi^\partial] + \xi^n\nabla_{\bg}\xi^n\\
Q^{\AKSZ}_{EH} \xi^n & = L_{\xi^\partial}\xi^n \\
Q^{\AKSZ}_{EH}\bg^\dag & = \dot{\bpi} + \pard{ \mathcal{K}}{\bg}\eta + \sqrt{\bg}\left(\mathbf{G}[\bg,\Lambda] \eta + \mathbf{D}_{\bg}(\eta)\right) + L_\beta\bpi + L_{\xi^\partial} \bg^\dag  \\
	& + \xi^\dag_\partial \odot d\xi^n\xi^n -\bphi_\partial \odot d\eta \xi^n { +} \eta \bphi_\partial \odot d\xi^n \\
	& + \left[\frac{\delta^2\mathcal{K}}{\delta\bg^2}(\bpi^\dag) 
		+ \frac{\delta^2H_n}{\delta\bg\delta\bpi}(\bg^\dag)\right]\xi^n
		- \frac12\bg^{-1} \langle \bpi^\dag, \mathbf{G}[\bg,\lambda] \xi^n + \mathbf{D}_{\bg}(\xi^n) \rangle\\
		&- \sqrt{\bg}\left\langle \bpi^\dag, \pard{\mathbf{G}[\bg,\lambda]}{\bg} \xi^n + \pard{\mathbf{D}_{\bg}(\xi^n)}{\bg}\right\rangle\\ \label{Pishell}
Q^{\AKSZ}_{EH}\bpi^\dag & =\dot{\bg} + \pard{\mathcal{K}}{\bpi}\eta + L_\beta \bg + L_{\xi^\partial} \bpi^\dag 
	{ -} \left[\frac{\delta^2\mathcal{K}}{\delta\bpi^2}(\bg^\dag) + \frac{\delta^2\mathcal{K}}{\delta\bg\delta\bpi}(\bpi^\dag)\right]\xi^n.
\end{align}\end{subequations}
\end{remark}

\begin{remark}
We notice that the term $\sqrt{\bg}\mathbf{D}_{\bg}(\cdot)$ is the contribution to the field equations for a metric due to the presence of a Brans--Dicke ``dilaton'' field, whose role is played by the ghost $\xi^n$ in the BFV action $\Sp{\Sigma}{EH}$ and by $\eta$ in $S^{\AKSZ}_{EH}$.
\end{remark}

\subsection{Pushforward}\label{s:pushforward}
We would like to compute the BV pushforward of $\FF^{\AKSZ}$ along the symplectic submanifold $(\bpi,\bpi^\dag)\in\calF''=T^*[-1]\mathrm{Map}(I,S^2(T^*\Sigma))\subset \calF^{\AKSZ}_{EH}$.

\begin{remark}\label{effectivepushforward}
This is the same as evaluating $S_{\text{eff}}$ from equation \eqref{Seffdef}. Since $S^{\AKSZ}_{EH}$ is only quadratic in $\Pi$, the calculation reduces to computing the Batalin--Vilkovisky--Legendre transform $S_{\text{BVL}}$ of $S^{\AKSZ}_{EH}$ with respect to $\mathcal{L}$, as in Definition \ref{def:BVL}, plus a correction in the integration measure for the remaining (second-order) effective theory. Note that Equation \ref{BVL} is equivalent to setting to zero the r.h.s. of Equation \eqref{Pishell}, together with $\Pi^\dag=0$.
\end{remark}

Recall that we are assuming $\gamma(t)$ to be a nondegenerate section of $S^2(T\Sigma)$ for every $t$, i.e. it represents the inverse of a metric, and dually $\bpi(t)\in S^2(T^*\Sigma)$. We can use $\bg$ and its inverse (denoted $\bg^\flat$) to raise/lower indices: explicitly, if $\bg=\gamma^{ab}\partial_a\odot\partial_b$, we have $\bg^\flat=\gamma_{ab}dx^a\odot dx^b$, with $\gamma^{ab}\gamma_{bc}=\delta^a_c$. Then, for $X\in S^2(T^*\Sigma),\ Y\in S^2(T\Sigma)$ we define $(X^\sharp)^{ab} := \bg^{ac}\bg^{bd}X_{bc}$ and $(Y^\flat)_{ab} = \bg_{ac}\bg_{bd}Y^{cd}$.  

\begin{definition}
Consider the space of fields $\uF{\Sigma\times I}{}\subseteq \calF^{\AKSZ}$ as
\begin{equation}
    \uF{\Sigma\times I}{} \coloneqq T^*[-1]\left(\mathrm{Map}\left(I,S_{nd}^2(T\Sigma) \times T[1]\left(C^\infty(\Sigma)\times\mathfrak{X}(\Sigma)\right)\right)\right)
\end{equation}
parametrised by $(\gamma,\eta,\beta,\xi^n,\xi^\partial,\varphi_n,\varphi_\partial,\xi^\dag_n,\xi^\dag_\partial)$, with $\iota_{EH}\colon \uF{\Sigma\times I}{}\to \calF^{AKSZ}$ the inclusion map.
\end{definition}

\begin{theorem}\label{th:pushforward}
The BV-pushforward of $\FF^{\AKSZ}(I;\FF^\partial_{EH}(\Sigma))$ with respect to the Lagrangian submanifold $\mathcal{L}=\{(\bpi,\bpi^\dag)\in\calF''\ |\ \bpi^\dag=0\}$ is the BV theory given by 
$$\uFF{\Sigma\times I}\coloneqq(\uF{\Sigma\times I},\uS{\Sigma\times I}{}, \uV{\Sigma\times I}{})$$
where 
\begin{align}
\uS{\Sigma\times I}{}&=
	\intl_{\mathbb{R}}dt\intl_{\Sigma} 
		-\eta\sqrt{\bg}\left[\left( \langle K^\sharp, K\rangle  - \mathrm{Tr}(K)^2\right) 
		+ R^\partial - 2\Lambda\right] \\\notag
	& -\langle\bg^\dag, L_{\xi^\partial}\bg\rangle 
		- 2\langle K^\sharp, \bg^\dag\rangle \xi^n -\langle\bphi_\rho, \dot{\xi}^\rho\rangle \\\notag
	& + \langle \bphi_\partial, \nabla_\gamma\eta\xi^{n} 
		- \eta\nabla_\gamma\xi^{n} + L_{\xi^\partial} \beta \rangle  
		+\bphi_n \left(- L_{\beta}\xi^n +L_{\xi^\partial}\eta\right) \\\notag
	& + \left\langle\xi^\dag_\partial,  \xi^{n}\nabla_\gamma\xi^{n}  
		+ \frac12 [\xi^{\partial},\xi^{\partial}] \right\rangle  + \xi_n^\dag L_{\xi^{\partial}}\xi^{n}
\end{align}
with $K := \frac{\eta^{-1}}{2} \left(\dot{\bg} + L_\beta \bg\right)^\flat$, and
\begin{equation}
\uV{\Sigma\times I}{}=\iota_{EH}^*\Omega^{\AKSZ}_{EH}.
\end{equation}
\end{theorem}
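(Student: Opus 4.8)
The plan is to compute the Batalin--Vilkovisky--Legendre transform of $S^{\AKSZ}_{EH}$ along the Lagrangian $\mathcal{L}=\{\bpi^\dag=0\}$ inside $\calF''=T^*[-1]\mathrm{Map}(I,S^2(T^*\Sigma))$, exactly as Remark \ref{effectivepushforward} suggests. Since $S^{\AKSZ}_{EH}$ is quadratic in $\bpi$ and contains no quadratic term in $\bpi^\dag$, the fiber integral \eqref{Seffdef} contributes only the classical critical-point evaluation plus a $\hbar$-dependent functional-determinant term, which we record but do not compute explicitly; the content of the theorem is the dominant term $S_\text{BVL}$. Concretely, I would first set $\bpi^\dag=0$ everywhere, which by \eqref{Pishell} kills the terms $\langle\bpi^\dag, L_{\xi^\partial}\bpi\rangle$, $\sqrt{\bg}\langle\bpi^\dag,\mathbf{G}\xi^n+\mathbf{D}_\bg\xi^n\rangle$, $\bpi^\dag(\bphi_\partial,d\xi^n)\xi^n$ and $\nabla_{\bpi^\dag}\xi^n\xi^n$, leaving an action that depends on $\bpi$ through $\langle\bpi,\dot\bg\rangle$, $H_n\eta$, $H_\partial(\beta)=\langle\bpi,L_\beta\bg\rangle$, and $-\bigl(\pard{\mathcal{K}}{\bpi}(\bg^\dag)\bigr)\xi^n$.

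Next I would solve the critical point equation $\pard{S^{\AKSZ}_{EH}}{\bpi}\big\vert_{\bpi^\dag=0}=0$ for $\bpi$. Collecting the $\bpi$-dependent terms, and recalling $\mathcal{K}=\tfrac{1}{\sqrt{\bg}}(\mathrm{Tr}_\bg[\bpi^2]-\tfrac{1}{d-1}\mathrm{Tr}_\bg[\bpi]^2)$ so that $\pard{\mathcal{K}}{\bpi}$ is linear in $\bpi$, the stationarity condition reads schematically $\dot\bg + L_\beta\bg + \eta\cdot(\text{linear operator in }\bpi) + (\text{term linear in }\bg^\dag\,\xi^n)=0$. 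Since we push forward onto $\mathcal{L}$ and $\bg^\dag$ is a degree $-1$ antifield, the cleanest route is to observe that the Legendre transform in $\bpi$ is governed by the De Witt-type quadratic form $G(\bpi,\bpi):=\langle\bpi^\sharp,\bpi\rangle-\tfrac{1}{d-1}\mathrm{Tr}(\bpi)^2$ (with the sign conventions of \eqref{Keq}), whose inverse is the familiar $\mathrm{Tr}$--$\mathrm{shift}$ form; solving gives $\bpi_0$ proportional to $\eta^{-1}(\dot\bg+L_\beta\bg)^\flat$ up to the $\bg^\dag\xi^n$ correction. Introducing $K:=\tfrac{\eta^{-1}}{2}(\dot\bg+L_\beta\bg)^\flat$ as in the statement, one substitutes $\bpi=\bpi_0(K,\bg^\dag,\eta,\dots)$ back into $S^{\AKSZ}_{EH}\big\vert_{\bpi^\dag=0}$. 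The $\langle\bpi,\dot\bg\rangle + \langle\bpi,L_\beta\bg\rangle - \eta\,\mathcal{K}$ combination evaluates, by the standard completion-of-square identity for the De Witt metric, to $-\eta\sqrt{\bg}\bigl(\langle K^\sharp,K\rangle - \mathrm{Tr}(K)^2\bigr)$, while the cross term $-\pard{\mathcal{K}}{\bpi}(\bg^\dag)\xi^n$ evaluated at $\bpi_0$ produces the $-2\langle K^\sharp,\bg^\dag\rangle\xi^n$ term. The $\sqrt{\bg}(R^\partial-2\Lambda)\eta$ part of $H_n\eta$ is unaffected and survives verbatim; all terms of $S^{\AKSZ}_{EH}$ not involving $\bpi$ (the $\xi^\partial,\eta,\beta,\xi^n$ sector: the $L_{\xi^\partial}$ transport terms, the $\bphi_\partial$ and $\bphi_n$ terms, and the $\xi^\dag_\partial,\xi^\dag_n$ terms) are copied over unchanged.

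For the symplectic form I would simply note that $\uV{\Sigma\times I}{}=\iota_{EH}^*\Omega^{\AKSZ}_{EH}$ by definition of BV-pushforward onto a Lagrangian of the split factor $\calF''$: restricting \eqref{AKSZBVFORM2} to $\uF{\Sigma\times I}{}$ drops the $\langle\delta\bpi,\delta\bpi^\dag\rangle$ summand (since $\bpi^\dag=0$ on $\mathcal L$ and $\bpi$ is determined), and one must check this restriction is still nondegenerate on the reduced space — which it is because the remaining pairs $(\bg,\bg^\dag)$, $(\xi^\rho,\xi^\dag_\rho)$, $(\eta,\bphi_n)$, $(\beta^a,\bphi_a)$ are manifestly conjugate, matching the stated $T^*[-1]$ structure of $\uF{\Sigma\times I}{}$. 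The main obstacle I anticipate is purely computational bookkeeping: correctly tracking the De Witt metric signs and the $\eta^{-1}$ factors through the Legendre transform (the paper's sign in \eqref{Keq} and the abuse $\sqrt{\bg}$ vs.\ its reciprocal in Remark \ref{rem:inversemetrics} both need care), and verifying that the total-derivative terms in $\bg$ that arose from the $R^\partial$ variation in Theorem \ref{th:AKSZEH} do not reappear — they do not, since $R^\partial\eta$ enters $S_\text{BVL}$ without $\bpi$-dependence and hence is never differentiated again. A secondary, lesser point is to remark (as Remark \ref{effectivepushforward} already does) that the Gaussian determinant correction $\tfrac{i\hbar}{2}\log\det(\text{De Witt form})$ is field-dependent through $\gamma$ and $\eta$ but does not affect the classical statement, so it is legitimately suppressed in $\uS{\Sigma\times I}{}$.
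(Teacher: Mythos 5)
Your proposal is correct and follows essentially the same route as the paper: restrict to $\mathcal{L}=\{\bpi^\dag=0\}$, solve the critical-point equation for $\bpi$ using the (inverse) deWitt quadratic form to get $\bpi\approx-\sqrt{\bg}\left(K-\mathrm{Tr}(K)\bg^\flat\right)+\eta^{-1}\bg^\dag\xi^n$, substitute back to produce the $-\eta\sqrt{\bg}\left(\langle K^\sharp,K\rangle-\mathrm{Tr}(K)^2\right)$ and $\langle K^\sharp,\bg^\dag\rangle\xi^n$ terms, carry the $\bpi$-independent sector over verbatim, and record the one-loop deWitt determinant as a measure correction. The only caveat is bookkeeping of signs and $\eta$-factors, which you already flag as the main computational burden.
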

\begin{proof}
As discussed in Remark \ref{effectivepushforward}, we are interested in finding the effective action one obtains by means of the perturbative expansion of the integral
\begin{equation}\label{e:pathintegralprop}
\exp\left(\frac{i}{\hbar}S_{\text{eff}}\right)\coloneqq\intl_{\mathcal{L}\subset\calF''} \exp\left(\frac{i}{\hbar}S^{\AKSZ}_{EH}\right)
\end{equation}
because $S^{\AKSZ}_{EH}\vert_{\mathcal{L}}$ is quadratic in $\Pi$, through the term $\mathcal{K}(\Pi)\eta$. Observing that 
\begin{align}
\pard{\mathcal{K}}{\bpi} & = \frac{2}{\sqrt{\bg}}\left(\bpi^\sharp - \frac{\bg}{d-1} \mathrm{Tr}(\bpi)\right)\\
\frac{\delta^2 \mathcal{K}}{\delta\bpi^2}(\bg^\dag) & = \frac{2}{\sqrt{\bg}}\left((\bg^\dag)^\sharp - \frac{\bg}{d-1} \mathrm{Tr}(\bg^\dag)\right),
\end{align}
we have that Equation \eqref{Pishell} reads
{\small
\begin{equation}
\frac{2}{\sqrt{\bg}}\left(\bpi^\sharp - \frac{\bg}{d-1} \mathrm{Tr}(\bpi)\right) = -\eta^{-1}\left(\dot{\bg} + L_\beta\bg { -} \frac{2}{\sqrt{\bg}}\left((\bg^\dag)^\sharp - \frac{\bg}{d-1} \mathrm{Tr}(\bg^\dag)\right)\xi^n\right) + F(\bpi^\dag)
\end{equation}}
where $\mathrm{Tr}(X) = \bg^{ab}X_{ab}$. We will use the symbol $\approx$ to denote the enforcing of Equation \eqref{Pishell} and of $\bpi^\dag=0$. Then, requiring $\bpi^\dag=0$ and defining
\begin{equation}\label{KdefAKS}
K := \frac{\eta^{-1}}{2} \left(\dot{\bg} + L_\beta \bg\right)^\flat, 
\end{equation}
we obtain that
\begin{equation}
\bpi\approx - \sqrt{\bg}\left( K - \mathrm{Tr}(K) \bg^\flat \right) + \eta^{-1} \bg^\dag \xi^n.
\end{equation}

It is easy to compute now
\begin{equation}
H_n \approx \sqrt{\bg}\left[\left( \langle K^\sharp, K\rangle  - \mathrm{Tr}(K)^2\right) + R^\partial - 2\Lambda\right] - 2\eta^{-1}\langle K^\sharp, \bg^\dag\rangle \xi^n
\end{equation}
which, together with
\begin{equation}
\langle \bpi, \dot{\bg} + L_\beta \bg\rangle \approx -2 \sqrt{\bg}\left[\left( \langle K^\sharp, K\rangle  - \mathrm{Tr}(K)^2\right) + R^\partial - 2\Lambda\right] + 2\eta^{-1}\langle K^\sharp, \bg^\dag\rangle \xi^n
\end{equation}
and
\begin{equation}
-\frac{\delta H_n}{\delta\bpi}(\bg^\dag)\xi^n \approx + 2\langle K^\sharp, \bg^\dag\rangle \xi^n,
\end{equation}
yields
\begin{align}\label{OnshellAKSZEH}
S^{\AKSZ}_{EH}
	&\approx \intl_{\mathbb{R}}dt\intl_{\Sigma} -\eta\sqrt{\bg}\left[\left( \langle K^\sharp, K\rangle  - \mathrm{Tr}(K)^2\right) + R^\partial - 2\Lambda\right] \\\notag
	& -\langle\bg^\dag, L_{\xi^\partial}\bg\rangle + 2\langle K^\sharp, \bg^\dag\rangle \xi^n -\langle\bphi_\rho, \dot{\xi}^\rho\rangle \\\notag
	& + \langle \bphi_\partial, \nabla_\gamma\eta\xi^{n} - \eta\nabla_\gamma\xi^{n} + L_{\xi^\partial} \beta \rangle  +\bphi_n \left(- L_{\beta}\xi^n +L_{\xi^\partial}\eta\right) \\\notag
	& + \left\langle\xi^\dag_\partial,  \xi^{n}\nabla_\gamma\xi^{n}  + \frac12 [\xi^{\partial},\xi^{\partial}] \right\rangle  + \xi_n^\dag L_{\xi^{\partial}}\xi^{n} =:\uS{\Sigma\times I}{}.
\end{align}
So, formula \eqref{OnshellAKSZEH} shows that $S_{\text{eff}} = \uS{\Sigma\times I}{} + O(\hbar)$. The $\hbar$ correction is the (logarithm of the) determinant of the operator defining the quadratic form $\mathcal{K}$, i.e. the determinant of the deWitt super metric\footnote{To be precise, $\mathsf{W}_\gamma$ is the inverse of the metric introduced by deWitt, due to our choice of working with inverse metrics $\gamma$.} \cite{deWitt}
$$
    \mathsf{W}_\gamma^{ijkl} = \frac{1}{\sqrt{\gamma}}\left(\gamma^{ik}\gamma^{jl} - \frac{1}{d-1} \gamma^{ij}\gamma^{kl}\right),
$$
or, more invariantly
\begin{equation}
    \mathsf{W}_\gamma(\Pi,\Pi) = \frac{1}{\sqrt{\gamma}} \left(\langle \Pi^\sharp,\Pi \rangle - \frac{1}{d-1}\mathrm{Tr}_\gamma[\Pi]^2\right),
\end{equation}
and it will have the effect of correcting the overall measure on the residual BV space of fields $\uF{\Sigma\times I}{}$.
\end{proof}

\begin{remark}
Up to boundary, we can compute $\uQ{\Sigma\times I}{}$ (denoted hereinafter by $\UQ$) to be:
\begin{subequations}\label{algebroidstructure}
\begin{align}
    \UQ\gamma & = L_{\xi^\partial}\gamma - \eta^{-1}(\dot{\gamma} + L_\beta\gamma) \xi^n\label{algebroida}\\
    \UQ\eta & = - L_{\beta}\xi^n +L_{\xi^\partial}\eta \label{algebroidb}\\
    \UQ\beta & = \nabla_{\gamma}\eta\xi^{n} 
		- \eta\nabla_{\gamma}\xi^{n} + L_{\xi^\partial} \beta\label{algebroidc}\\
	\UQ\xi^n &= L_{\xi^{\partial}}\xi^{n}\label{algebroidd}\\
	\UQ\xi^\partial &= \xi^{n}\nabla_{\gamma}\xi^{n}  
		+ \frac12 [\xi^{\partial},\xi^{\partial}] \label{algebroide}
\end{align}
\end{subequations}
and similarly for antifields.
\end{remark}

\subsection{Reconstruction of Einstein--Hilbert theory}
In this section we wish to show that the BV pushforward of the AKSZ theory constructed in Section \ref{s:pushforward} is strongly equivalent to Einstein--Hilbert theory in the BV formalism.

To do this, we begin by considering the following definitions:
\begin{subequations}\label{reconstruct}\begin{align}
\txi^{} &= -\eta^{-1}\xi^n (\partial_t + \beta) + \xi^\partial \\\label{greconstruct}
\tg &= -\eta^{-2} \partial_t\odot\partial_t - 2 \eta^{-2} \beta\odot\partial_t+ \bg - \eta^{-2} \beta\odot\beta
\end{align}\end{subequations}

\begin{lemma}\label{lem:algebroid}
We have the following relations
\begin{subequations}\begin{align}
\frac12[\txi^{},\txi^{}] & = \UQ \txi^{},\\
L_{\txi^{}}\tg & = \UQ \tg.
\end{align}\end{subequations}
\end{lemma}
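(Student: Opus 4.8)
The plan is to establish both identities by a direct computation, expanding $\txi$ and $\tg$ through the reconstruction formulas \eqref{reconstruct} and applying the graded Leibniz rule, keeping in mind one structural point: $\partial_t$ is the fixed coordinate vector field on $\Sigma\times I$ and is therefore annihilated by $\UQ$, whereas $L_{\txi}$ and $\frac12[\txi,\cdot]$ are the ordinary operations on $\Sigma\times I$, with $L_{\txi}\partial_t=[\txi,\partial_t]=-\partial_t\txi$. Thus $\UQ$ acts on $\txi$ and $\tg$ only through the component fields $\gamma,\eta,\beta,\xi^n,\xi^\partial$, on which it acts by \eqref{algebroida}--\eqref{algebroide} together with the transversal ``de Rham'' part of the AKSZ differential, which adds $-\dot\xi^n$ to $\UQ\eta$ and $-\dot\xi^\partial$ to $\UQ\beta$ (cf. the proof of Proposition \ref{prop:AKSZ-BV-BFV}); on the geometric side every $t$-derivative is produced honestly by the $\partial_t$-component of $\txi$. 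In both identities the comparison is then organised by decomposing along $\partial_t$ and along the $\Sigma$-directions.

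For the first identity, since $\xi^n$ and $\xi^\partial$ are odd, $\txi$ is an odd vector field and $\frac12[\txi,\txi]=\txi^2=\txi^\nu\partial_\nu\txi^\mu\,\partial_\mu$, the potential second-order terms cancelling by graded symmetry. Writing $\txi=-\eta^{-1}\xi^n(\partial_t+\beta)+\xi^\partial$, the $\xi^\partial$--$\xi^\partial$ part reproduces $\frac12[\xi^\partial,\xi^\partial]$, the $\xi^\partial$--lapse crossings produce the $L_{\xi^\partial}$-terms of $\UQ(-\eta^{-1}\xi^n)$ and of $\UQ\beta$, and the lapse--lapse part produces the gradient terms $\nabla_\gamma\eta\,\xi^n-\eta\nabla_\gamma\xi^n$ and $\xi^n\nabla_\gamma\xi^n$ together with the transversal $-\dot\xi^n,-\dot\xi^\partial$ (here one uses $\nabla_\gamma\phi=\frac12\,\mathrm{grad}_\gamma\phi$ from the definitions preceding Theorem \ref{th:AKSZEH}). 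Comparing this with $\UQ\txi=\UQ(-\eta^{-1}\xi^n)(\partial_t+\beta)+\eta^{-1}\xi^n\,\UQ\beta+\UQ\xi^\partial$ (the middle sign coming from moving the odd $\UQ$ past the odd coefficient $-\eta^{-1}\xi^n$) and matching the coefficient of $\partial_t$ and of each $\partial_i$ gives the claim. The second identity is handled in the same spirit: $L_{\txi}\tg$ is expanded from $\tg=-\eta^{-2}(\partial_t+\beta)\odot(\partial_t+\beta)+\gamma$ by $L_{\txi}(f\,a\odot b)=(\txi f)\,a\odot b+f\,(L_{\txi}a)\odot b+f\,a\odot(L_{\txi}b)$, and $\UQ\tg$ by the same Leibniz rule from \eqref{algebroida}--\eqref{algebroidc} with $\partial_t$ inert; the $t$-derivative terms produced by $L_{\txi}$ match the dotted terms of $\UQ\gamma$ (and the transversal parts of $\UQ\eta,\UQ\beta$), while the remaining $\Sigma$-derivative and algebraic terms match the $L_{\xi^\partial}$- and $\nabla_\gamma$-terms. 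Equating the coefficients of $\partial_t\odot\partial_t$, $\partial_t\odot\partial_i$ and $\partial_i\odot\partial_j$ reduces the statement to three scalar/tensor identities, each of which is checked by inspection.

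The main obstacle is not conceptual but combinatorial: the computation repeatedly moves the odd quantities $\xi^n,\xi^\partial$ and $\UQ$ past one another, and several terms that occur (e.g.\ $\xi^n\dot\xi^n$, $\xi^n\partial_i\xi^n$, $\xi^{\partial i}\partial_i\xi^n$) are nonzero even though $\xi^n\xi^n=0$, so the signs must be tracked with care. The only genuinely substantive point is the ``covariantisation'': the gradient terms in $\UQ\beta$ and $\UQ\xi^\partial$, which look ad hoc as operations on $\Sigma$, must assemble exactly into the $\partial_t$-components of $\frac12[\txi,\txi]$ and $L_{\txi}\tg$, and this is precisely where the choice of lapse $-\eta^{-1}\xi^n$ and shift $-\eta^{-1}\xi^n\beta$ built into \eqref{reconstruct} and \eqref{greconstruct} is used. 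It is advisable to carry out the pure-ghost sector and the metric sector separately, and to use the $\Sigma$-diffeomorphism relations $\frac12[\xi^\partial,\xi^\partial]=\UQ\xi^\partial$ and $L_{\xi^\partial}\gamma=\UQ\gamma$ at $\xi^n=0$ as a consistency check before adding the lapse-dependent terms.
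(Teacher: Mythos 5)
Your proposal is correct and follows essentially the same route as the paper: a direct coordinate computation expanding $\txi$ and $\tg$ via \eqref{reconstruct}--\eqref{greconstruct}, applying the graded Leibniz rule with $\partial_t$ inert under $\UQ$ (while the transversal de Rham part of $Q^{\AKSZ}_{EH}$ supplies the $-\dot\xi^n,-\dot\xi^\partial$ terms, cf.\ \eqref{QAKSZEH}), and matching coefficients of $\partial_t$ and the $\Sigma$-directions; the paper likewise notes the cancellation of the algebroid term $\xi^n\nabla_\gamma\xi^n$ against part of $\eta^{-1}\xi^n\,\UQ\beta$, which you flag as the substantive ``covariantisation'' step. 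The only difference is that the paper grinds out every term explicitly, whereas you outline the bookkeeping; your structural observations (oddness of $\txi$, sign tracking past odd quantities, the consistency check at $\xi^n=0$) are exactly what makes that computation go through.
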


\begin{proof}
It is a straightforward calculation to show
\begin{align*}
    \UQ\txi^{} & = \eta^{-2} (\UQ\eta)  \xi^{n}(\partial_t + \beta) + \eta^{-1}\UQ\xi^n (\partial_t + \beta) + \UQ\xi^\partial + \eta^{-1}\xi^n \UQ\beta\\
	& = \left(- \eta^{-2} \dot{\xi}{}^n   \xi^{n}  - \eta^{-2}L_\beta\xi^{n}  \xi^{n} + L_{\xi^\partial} (-\eta^{-1} \xi^{n}) \right)(\partial_t + \beta) \\
	& \quad - \eta^{-1}\xi^n \dot{\xi}{}^\partial + \eta^{-1}\xi^n L_{\xi^\partial}\beta + \frac12[\xi^\partial,\xi^\partial].
\end{align*}
Observe that the ``algebroid term'' (see Remark \ref{rem:algebroid}, below) $\xi^n\nabla_{\bg}\xi^n$ in $\UQ{\xi^\partial}$ cancels out with part of $\eta^{-1}\xi^n \UQ\beta$. On the other hand this coincides with 
\begin{align*}
\frac12[\txi^{},\txi^{}] & = \frac12 [ -\eta^{-1}\xi^n(\partial_t + \beta) + \xi^\partial, - \eta^{-1}\xi^n(\partial_t + \beta) + \xi^\partial] \\
& =  \left(\eta^{-2}\xi^n(\partial_t + \beta)\xi^n + L_{\xi^\partial} (-\eta^{-1}\xi^n)\right)(\partial_t + \beta)\\
	& \quad - \eta^{-1}\xi^n\dot{\xi}{}^\partial +\eta^{-1}\xi^nL_{\xi^\partial}\beta + \frac12[\xi^\partial,\xi^\partial],
\end{align*}
proving the first claim.
We compute
{
\begin{align*}
L_{\txi^{}}\tg & = 
    - 2\eta^{-3}\dot{\xi}{}^n\partial_t\partial_t 
	+ 2\eta^{-3}L_{\xi^\partial}\eta\partial_t\partial_t 
	+ 2\eta^{-2}\dot{\xi}{}\partial_t 
	- 2\eta^{-4}\xi^nL_\beta\eta\partial_t\partial_t \\
& \quad 
	- 2\eta^{-2}\partial_t\left(\eta^{-1}\xi^n\beta\right)\partial_t
    -4\eta^{-4}\xi^n\partial_t\eta\beta\partial_t  
	+ 4\eta^{-3} L_{\xi^\partial}\eta\beta\partial_t 
	- 4\eta^{-4} L_\beta\eta\xi^n \beta\partial_t  \\
& \quad
	- 2\eta^{-2} L_{\xi^\partial}\beta\partial_t 
	- 2\eta^{-2}L_\beta\left(\eta^{-1}\xi^n\right)\partial_t\partial_t  
    + 2\eta^{-2}\partial_t\left(\xi^\partial 
    - \eta^{-1}\xi^n\beta\right)\beta \\
& \quad 
    - 2\eta^{-2}L_\beta\left(\eta^{-1}{\xi^n}{}\right)\beta\partial_t 
	- 2\eta^{-2}\partial_t\left( \eta^{-1}{\xi^n}{}\right) \beta\partial_t 
    + 2\eta^{-2}\xi^n \dot{\beta}\partial_t  
    - \eta^{-1}{\xi^n}{}\dot\gamma^{ab}\partial_a\partial_b \\
& \quad
	+ L_{\xi^\partial}\left(\gamma^{ab}\partial_a\partial_b\right) 
	- \eta^{-1}{\xi^n}{} L_\beta(\gamma^{ab})\partial_a\partial_b 
	+ 2\nabla_{\bg}\left(\eta^{-1}{\xi^n}{}\right) \partial_t 
	+ 2\nabla_{\bg}\left(\eta^{-1}{\xi^n\beta}{}\right)\partial_c \\
	& \quad 
-2\eta^{-4}\xi^n \dot\eta \beta\beta 
	+ 2\eta^{-3}L_{\xi}\eta \beta\beta
	- 2\eta^{-4}\xi^nL_\beta\eta \beta\beta
	+ 2\eta^{-3} \xi^n\dot\beta\beta
	-\eta^{-2} L_{\xi^\partial}\left(\beta\beta\right) \\
	& \quad 
	- 2\eta^{-2}L_\beta\left(\eta^{-1}{\xi^n}{}\right)(\beta\partial_t +\beta\beta)
\end{align*}
}where we recall that expressions like $L_{\xi^\partial}(\beta)$ denote the Lie derivative of the vector field $\beta=\beta^a\partial_a$ along $\xi^\partial$. On the other hand we have
\begin{align*}
\UQ(-\eta^{-2})\partial_t\partial_t 
	&= \left( - 2 \eta^{-3} \dot{\xi}{}^n 
		+ 2 \eta^{-3}L_{\xi^\partial}\eta 
		- 2\eta^{-3}L_\beta\xi^n\right)\partial_t\partial_t\\
\UQ(-2\eta^{-2}\beta)\partial_t 
	& = \left(-4\eta^{-3}\dot{\xi}{}^n\beta 
		+ 4\eta^{-3} L_{\xi^\partial}\eta 
		- 4\eta^{-3} L_\beta\xi^n\beta\right)\partial_t \\
	& +\left( 2\eta^{-2}\dot{\xi}{}^\partial 
		- 2\eta^{-2} L_{\xi^\partial}\beta 
		- 2\eta^{-2}\nabla_{\bg}\eta\xi^n 
		+ 2\eta^{-1} \nabla_{\bg}\xi^n\right)\partial_t\\
\UQ(\gamma^{ab})\partial_a\partial_b 
	&= -\eta^{-1}\xi^n \dot{\bg} 
		- \eta^{-1}\xi^nL_\beta\bg 
		+ L_{\xi^\partial}\bg\\
\UQ(-\eta^{-2}\beta\beta)
	&= - 2\eta^{-3}\dot{\xi}{}^n \beta\beta 
		+ 2\eta^{-3} L_{\xi^\partial}\eta\beta\beta 
		- 2\eta^{-3}L_\beta\xi^n\beta\beta\\
	& + 2\eta^{-2}\beta\dot{\xi}^\partial 
		- 2\eta^{-2} L_{\xi^\partial}\beta \beta
		- 2\eta^{-2}\left(\nabla_{\bg}\eta\xi^n - \eta\nabla_{\bg}\xi^n\right)\beta
\end{align*}
And it is a matter of a straightforward, but lengthy computation to show that the two expressions coincide. Indeed, subtracting one from the other we obtain
\begin{multline}
L_{\txi^{}}\tg - \UQ\tg = 2 \eta^{-3}(-\eta^{-1} \xi^n ( \dot{\eta} + L_\beta(\eta)) \beta\beta + 2\eta^{-3}\xi^n\dot{\beta}\beta - 2\eta^{-2}L_\beta(\eta^{-1})\xi^n (\beta\partial_t + \beta\beta) \\
- 2\eta^{-4}\xi^n L_\beta(\eta) \partial_t^2 + 2\eta^{-4}\dot{\eta}\xi^n \beta\partial_t - 2\eta^{-3}\xi^n \dot{\beta}\xi^n \partial_t - 4\eta^{-4}\xi^n (\dot{\eta} + L_\beta\eta) \beta\partial_t\\
+ 2\eta^{-3}\xi^n \dot{\beta} \partial_t + 2\eta^{-4} L_\beta\eta(\partial_t^2 + \beta\partial_t) + 2\eta^{-4}\dot{\eta}\xi^n(\beta\partial_t + \beta\beta) - 2\eta^{-3}\xi^n\dot{\beta}\beta \equiv 0
\end{multline}

\end{proof}

\begin{remark}\label{rem:algebroid}
Using Lemma \ref{lem:algebroid} we wish to interpret \eqref{reconstruct} as a map of Lie algebroids. Consider the (trivial) vector bundle over 
$$\mathrm{Map}(I,S_{nd}^2(T\Sigma) \times C^\infty(\Sigma)\times\mathfrak{X}(\Sigma))\simeq \mathcal{PR}^\Sigma(\Sigma\times I),$$ 
where $\mathcal{PR}^{\Sigma}(\Sigma\times I)$ denotes pseudo-Riemennian metrics on $\Sigma\times I$ such that their restriction to $\Sigma$ is nondegenerate, with fibre 
$$\mathrm{Map}(I,C^\infty(\Sigma) \times \mathfrak{X}(\Sigma))\simeq \mathfrak{X}(\Sigma\times I).$$
We  consider two different Lie  algebroid  structures on this vector  bundle. One is the action algebroid with bracket given by the bracket of $(d+1)$-vector fields, and anchor given by Lie derivatives on metrics. The other algebroid structure is given by formulas \eqref{algebroidstructure}, with \eqref{algebroida},\eqref{algebroidb} and \eqref{algebroidc} defining the anchor map, and \eqref{algebroidd} and \eqref{algebroide} specifying the bracket of sections. Observe that the morphism of algebroids \eqref{reconstruct} does not preserve constant sections, as the splitting of a generic vector field $\txi$ depends on the so-called \emph{lapse} $\eta$ and \emph{shift} $\beta$, which are coordinates on the base of the fibre bundle.  The latter algebroid encodes the algebraic relations of the constraints of Einstein--Hilbert theory\footnote{We stress that, as it is, the structure one can extract from the BFV differential $Q^\partial$ is that of a curved $L_\infty$ algebroid, due to the dependency on fields of negative degree. We thank A. Weinstein, C. Blohmann and N. L. Delgado for enlightening discussions on this matter.}, and was carefully studied by other means in \cite{BFW}. It was also mentioned as a motivating example for the notion of Hamiltonian Lie Algebroid, introduced in \cite{BW}. It is an interesting question to check whether this construction satisfies the Hamiltonian requirements for an algebroid.
\end{remark}

To proceed, we need to recall the BV data associated with Einstein--Hilbert theory, in the ADM formalism. Given a pseudo-Riemannian (inverse) metric $\tg$ on a manifold $M$, we can perform a $d+1$ decomposition and rewrite it as\footnote{In this paper we will assume that the manifold $M$ has a global product structure $M=\Sigma\times \mathrm{R}$, and the induced metric on $\Sigma$ will be Riemannian, i.e. the leaves $\Sigma_t$ are spacelike submanifolds of $M$. It is straightforward to generalise this to the \emph{timelike} case. The relevant formulas for EH theory in the BV-BFV formalism have been given in \cite{CS2016b}.}
$$
\begin{array}{c}
\tg^{\mu\nu}=\left(\begin{array}{cc} -\eta^{-2} &  -\eta^{-2}\beta^b \\ -\eta^{-2}\beta^a & \gamma^{ab}-\eta^{-2}\beta^a\beta^b\end{array}\right)
\end{array}
$$
In the case where $M$ has a boundary, we can define the second fundamental form of the boundary submanifold $K_{ab}$ and its trace $K$ by means of the boundary covariant derivative $\nabla^\partial$ (the Levi-Civita connection of $\gamma$) as follows
\begin{equation}\label{KdefADM}
K_{ab}=\frac12 \eta^{-1}(2\nabla^\partial_{(a}\beta_{b)} + \partial_t\gamma_{ab}) \qquad K=\gamma^{ab}K_{ab}
\end{equation}
where $t$ denotes a coordinate transverse to the boundary $\partial M$. Finally, notice that
$$
(L_\beta\gamma)^{cd}\gamma_{ac}\gamma_{bd} = - 2\nabla^\partial_{(a}\beta_{b)} \qquad (\dot{\gamma})^{cd}\gamma_{ac}\gamma_{bd} = - \partial_t\gamma_{ab}.
$$

\begin{definition}
Let $(\calF_{EH}(M), \Omega_{EH}(M))$ be the symplectic manifold
$$
\calF_{EH}(M) \coloneqq T^*[-1]\left( \mathcal{PR}^{\partial M}(M) \times \mathfrak{X}[1](M)\right)
$$
with its canonical symplectic structure, and $\mathcal{PR}^{\partial M}(M)$ denotes pseudo-Riemennian metrics on $M$ such that their restriction to $\partial M$ is nondegenerate. Consider the functional 
\begin{equation}\label{BVEHaction}
S_{EH}(M) = \intl_{M} \Big\{-\eta\sqrt{\gamma}(\epsilon(K_{ab}K^{ab} - K^2) +R^\partial -2\Lambda)\Big\} + \tg^\dag L_{\txi} \tg + \frac{1}{2}\iota_{[\txi,\txi]}\txi^{\dag}
\end{equation}
and denote by $Q_{EH}(M)$ the Hamiltonian vector field of $S_{EH}(M)$, up to boundary terms. Then, the assignment of the tuple 
$$\FF_{EH}=(\calF_{EH}(M), S_{EH}(M), \Omega_{EH}(M), Q_{EH}(M)))$$
to every (d+1)-dimensional manifold $M$ that admits a Lorentzian structure will be called Einstein--Hilbert theory in the BV formalism.
\end{definition}

\begin{remark}
The sign convention used above is obtained from the standard ADM decomposition by redefining $(\eta,\beta) \to (-\eta,-\beta)$. This matches our conventions below. This change is due to the choice of using inverse metrics for the first order formulation, instead of metrics (in fact $\Pi_{ab} \partial_t \gamma^{ab} = -\Pi^{ab} \partial_t \gamma_{ab}$).
\end{remark}

\begin{theorem}\label{th:EHstrongequivalence}
Einstein--Hilbert theory in the BV formalism $\FF_{EH}(\Sigma\times I)$ is strongly equivalent to $\uFF{\Sigma\times I}{}$. Explicitly, the isomorphism of the underlying symplectic dg-manifolds reads:
\begin{subequations}\begin{align}
\tg &= -\eta^{-2}\partial_t\partial_t - 2 \eta^{-2}\beta \partial_t + \gamma -\eta^{-2} \beta\beta\\
\txi^{}&= -\eta^{-1} \xi^n \partial_t + \xi^\partial - \eta^{-1} \xi^n \beta\\
\txi^{\dag} & = \xi^\dag_\partial - \left(\eta \xi^\dag_n + \iota_\beta\xi^\dag_\partial\right)dt\\\notag
\tg^\dag & = \left( \frac12 \eta^3 \varphi_n - \eta^2 \varphi_a \beta^a 
    - \gamma^\dag_{ab}\beta^a\beta^b + \eta\beta^a \xi^\dag_a \xi^n 
    + \frac12\eta^{-1}\xi^\dag_n \xi^n\right) dt^2 \\
       & + \left(\frac12 \eta^2 \varphi_a + \gamma^\dag_{ab}\beta^b - \frac12 \eta \xi^\dag_a\xi^n\right) dx^a dt - \gamma^\dag_{ab} dx^a dx^b
\end{align}
\end{subequations}
with inverse:
\begin{subequations}\begin{align}
\eta &=[-\tg^{tt}]^{-\frac{1}{2}}\\
\beta^a & = - [-\tg^{tt}]^{-1} \tg^{ta}\\
\bg^{ab} & = [-\tg^{tt}]^{-1} \tg^{ta} \tg^{tb}\\
\xi^{n} & = - [-\tg^{tt}]^{-\frac12 }\txi^{t}\\
\xi^{a} & = \txi^{a} + [-\tg^{tt}]\tg^{ta}\txi^{t}\\
\bg^\dag_{ab} & = \tg^\dag_{ab}\\
\bphi_a &=  2 [-\tg^{tt}]\tg^\dag_{at}  + 2 \tg_{ab}^\dag \tg^{tb} + \txi^{\dag}_a\txi^{t}\\
\bphi_n & = 2[-\tg^{tt}]^{\frac32 }\tg^\dag_{tt}  - 4[-\tg^{tt}]^{\frac12 } \tg^\dag_{ta} \tg^{ta} + 2[-\tg^{tt}]^{-\frac12 }\tg^\dag_{ab} \tg^{ta}\tg^{tb} \\
& + [-\tg^{tt}]^{\frac12 }\txi^{\dag}_n \txi^{t} -[-\tg^{tt}]^{-\frac12} g^{ta} \txi^{\dag}_a\txi^{a}\\
\xi^\dag_n & = - [-\tg^{tt}]^{\frac12 }\txi^{\dag}_n + [-\tg^{tt}]^{-\frac12}\txi^{\dag}_a \tg^{ta}\\
\xi^{\dag}_a & = \txi^{\dag}_a
\end{align}
\end{subequations}
\end{theorem}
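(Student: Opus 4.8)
The plan is to establish the strong equivalence of Definition \ref{def:strongBVeq} by recognizing the displayed map as the canonical lift to $(-1)$-shifted cotangent bundles of an ADM-type diffeomorphism of graded manifolds, so that the symplectomorphism property comes for free, and then to reduce the matching of the two action functionals to Lemma \ref{lem:algebroid}. Write $M=\Sigma\times I$ and observe that both theories are of the form $T^*[-1]B$: on one side $\calF_{EH}(M)=T^*[-1]B_{1}$ with $B_{1}=\mathcal{PR}^{\partial M}(M)\times\mathfrak{X}[1](M)$, base coordinates $(\tg,\txi)$ and antifields $(\tg^\dag,\txi^\dag)$; on the other side $\uF{\Sigma\times I}=T^*[-1]B_{2}$ with $B_{2}=\mathrm{Map}\left(I,S_{nd}^2(T\Sigma)\times T[1]\left(C^\infty(\Sigma)\times\mathfrak{X}(\Sigma)\right)\right)$, base coordinates $(\bg,\eta,\beta,\xi^n,\xi^\partial)$ and antifields $(\bg^\dag,\bphi_n,\bphi_a,\xi^\dag_n,\xi^\dag_\partial)$. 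The degree-$0$ and degree-$1$ lines of the displayed isomorphism are precisely the reconstruction formulas \eqref{reconstruct}; they define a map $\psi$ between the bases, and the claim is that $\Phi$ is its cotangent lift $T^*[-1]\psi$.

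First I would verify that $\psi$ is a diffeomorphism of graded manifolds. On the degree-$0$ part this is the classical statement that a pseudo-Riemannian metric on $M$ whose restriction to the leaves $\Sigma_t$ is nondegenerate is equivalent to the triple (lapse, shift, spatial inverse metric); the displayed inverse formulas $\eta=[-\tg^{tt}]^{-1/2}$, $\beta^a=-[-\tg^{tt}]^{-1}\tg^{ta}$, $\bg^{ab}=\tg^{ab}-[-\tg^{tt}]^{-1}\tg^{ta}\tg^{tb}$ are checked against \eqref{greconstruct} by direct substitution (on the open locus $\tg^{tt}<0$ where all leaves are spacelike). On the degree-$1$ part, $\txi=-\eta^{-1}\xi^n(\partial_t+\beta)+\xi^\partial$ is fibrewise linear over the degree-$0$ base, with the displayed inverse. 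The cotangent lift $\Phi=T^*[-1]\psi$ is then automatically a symplectomorphism, and even preserves the Liouville one-forms, so $\Phi^*\uV{\Sigma\times I}{}=\Omega_{EH}(M)$ with no computation; this is consistent with the defining relation $\uV{\Sigma\times I}{}=\iota_{EH}^*\Omega^{\AKSZ}_{EH}$ of Theorem \ref{th:pushforward}. What does require a finite, but not short, computation is to check that the cotangent-lift rule --- push covectors forward by the transpose inverse Jacobian of $\psi$ --- reproduces exactly the displayed formulas for $\tg^\dag$ and $\txi^\dag$ (equivalently the displayed inverse formulas for $\bphi_n,\bphi_a,\bg^\dag,\xi^\dag_n,\xi^\dag_\partial$); the $\tg^\dag$ formula is quadratic in lapse and shift and has many terms, so the Jacobian must be assembled carefully.

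It then remains to prove $\Phi^*S_{EH}(M)=\uS{\Sigma\times I}{}$, and I would split $S_{EH}$ of \eqref{BVEHaction} into its classical part $S^{\mathrm{cl}}=\intl_M-\eta\sqrt{\gamma}\left(\epsilon(K_{ab}K^{ab}-K^2)+R^\partial-2\Lambda\right)$, which depends only on the degree-$0$ base coordinates, and its symmetry part $S^{\mathrm{sym}}=\intl_M\tg^\dag L_{\txi}\tg+\tfrac12\iota_{[\txi,\txi]}\txi^\dag$. For $S^{\mathrm{cl}}$ one substitutes the degree-$0$ ADM formulas and uses that the second fundamental form $K_{ab}$ of \eqref{KdefADM} built from $\tg$ agrees, up to an overall sign (immaterial, as $K$ enters quadratically), with $K=\tfrac{\eta^{-1}}{2}(\dot\bg+L_\beta\bg)^\flat$ of \eqref{KdefAKS}: this follows from the identities $(L_\beta\gamma)^{cd}\gamma_{ac}\gamma_{bd}=-2\nabla^\partial_{(a}\beta_{b)}$ and $(\dot\gamma)^{cd}\gamma_{ac}\gamma_{bd}=-\partial_t\gamma_{ab}$ recorded before \eqref{BVEHaction}, together with $K_{ab}K^{ab}-K^2=\langle K^\sharp,K\rangle-\mathrm{Tr}(K)^2$; hence $\Phi^*S^{\mathrm{cl}}$ equals the first line of $\uS{\Sigma\times I}{}$. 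For $S^{\mathrm{sym}}$ I would invoke Lemma \ref{lem:algebroid}, which says exactly that the vector field with components $L_{\txi}\tg$ and $\tfrac12[\txi,\txi]$ along $(\tg,\txi)$ is $\psi$-related to the reduced cohomological vector field $\UQ$ of \eqref{algebroidstructure} on $B_2$ (including its $d_I$-part, i.e.\ the $-\dot\xi^\rho$ terms, as used in the proof of that lemma). Since $S^{\mathrm{sym}}$ is the generating functional on $T^*[-1]B_1$ of that vector field --- it is linear in the antifields --- naturality of such generating functionals under base diffeomorphisms gives that $\Phi^*S^{\mathrm{sym}}$ is the generating functional of $\UQ$ on $T^*[-1]B_2$, namely $\langle\bg^\dag,\UQ\bg\rangle+\bphi_n\,\UQ\eta+\langle\bphi_a,\UQ\beta\rangle+\xi^\dag_n\,\UQ\xi^n+\langle\xi^\dag_\partial,\UQ\xi^\partial\rangle$ up to the signs fixed by the degrees and by the form of $\Omega^{\AKSZ}_{EH}$ (the minus sign in $-\langle\delta\bg,\delta\bg^\dag\rangle$ accounts for the relative sign of the $\bg^\dag$ terms in $\uS{\Sigma\times I}{}$). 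Expanding $\UQ\bg=L_{\xi^\partial}\bg-2K^\sharp\xi^n$ reproduces the $-\langle\bg^\dag,L_{\xi^\partial}\bg\rangle+2\langle K^\sharp,\bg^\dag\rangle\xi^n$ terms, the $-\langle\bphi_\rho,\dot\xi^\rho\rangle$ terms come from the $d_I$-parts of $\UQ\eta,\UQ\xi^n$, and the remaining terms match line by line; all identities hold up to boundary terms on $M$, which is all that Definition \ref{def:strongBVeq} requires here.

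The step I expect to be the main obstacle is the bookkeeping in the two places that cannot be made computation-free: verifying that the cotangent lift of $\psi$ is literally the displayed $\tg^\dag,\txi^\dag$ transformation, and tracking all the Koszul and convention-dependent signs in the identification $\Phi^*S^{\mathrm{sym}}=(\text{ghost part of }\uS{\Sigma\times I}{})$. A brute-force alternative --- substituting the displayed formulas for $\tg,\txi,\tg^\dag,\txi^\dag$ directly into \eqref{BVEHaction} and expanding --- is always available but considerably longer; the route above isolates the only genuinely new ingredient, Lemma \ref{lem:algebroid}, and leaves everything else to standard facts about cotangent lifts and the ADM split.
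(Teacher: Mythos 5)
Your proposal is correct and follows essentially the same route as the paper: the degree-zero match via the identification of the two extrinsic-curvature expressions $K$, the antifield transformation obtained by demanding that the Liouville potentials correspond (which is exactly your cotangent-lift condition; the paper solves the resulting linear system in \eqref{intermediate}), and Lemma \ref{lem:algebroid} to handle the antifield-linear symmetry terms. The only difference is one of packaging --- you invoke naturality of generating functionals of $\psi$-related vector fields under cotangent lifts where the paper carries out the term-by-term expansion explicitly --- and your remark that the $d_I$-part of $\UQ$ must be included when applying the lemma is exactly the subtlety the paper's computation accounts for.
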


\begin{proof}
We begin observing that the definitions of $K$ in \eqref{KdefAKS} and $K$ in \eqref{KdefADM} coincide up to sign, after identifying $\tg$ with the expression of Equation \eqref{greconstruct}. Since the expression $S_{ADM}:=-\eta\sqrt{\gamma}(\epsilon(K_{ab}K^{ab} - K^2) +R^\partial -2\Lambda)$ is quadratic in $K$, we conclude that the degree-zero part of \eqref{BVEHaction} and \eqref{OnshellAKSZEH} coincide. This means that the two theories are classically equivalent\footnote{{Strictly speaking this is only true on an open subset of the moduli space of solutions, due to the nondegeneracy condition on $\gamma(t)$ enforced on the whole cylinder by the AKSZ construction, while Definition \ref{BVEHaction} only requires the nondegeneracy of $\tg\vert_{\partial M}$.}}, and \eqref{greconstruct} is the map between second-order and first-order Einstein--Hilbert theory.

We endeavour now to find the explicit expression for $\tg^\dag$ and $\txi^{\dag}$ so that 
$$
\phi^*(\langle\tg^\dag, \delta \tg\rangle + \langle\txi^{\dag}, \delta \txi^{}\rangle) = - \langle\delta\bg^\dag, \delta\bg\rangle 
	+ \langle \bpi, \delta\bpi^\dag \rangle
	+ \xi^\dag_\rho \delta\xi^{\rho} 
	+ \bphi_n \delta\eta
	+ \bphi_a\delta\beta^a.
$$
It is straightforward to compute 
\begin{multline}
\phi^*(\langle\tg^\dag, \delta \tg\rangle + \langle\txi^{\dag}, \delta \txi^{}\rangle) = 
	- \left[(\phi^*\txi^{\dag})_n\eta^{-1} 
		+ (\phi^*\txi^{\dag})_a\eta^{-1}\beta^a\right] \delta \xi^n 
		+ (\phi^*\txi^{\dag})_a\delta\xi^a\\
	+\left[2(\phi^*\tg^\dag)_{tt} \eta^{-3} 
		+ 4 (\phi^*\tg^\dag)_{at} \eta^{-3} \beta^a 
		+ 2(\phi^*\tg^\dag)_{ab} \eta^{-3}\beta^a\beta^b \right] \delta \eta \\ 
		- \left[(\phi^*\txi^{\dag})_t \eta^{-2} \xi^n - (\phi^*\txi^{\dag})_a \eta^{-2} \beta^a \xi^n\right] \delta \eta  	
	+(\phi^*\tg^\dag)_{ab}\delta\gamma^{ab} \\
		+ \left[2 (\phi^*\tg^\dag)_{at}\eta^{-2} 
		- 2(\phi^*\tg^\dag)_{ab}\eta^{-2}\beta^b 
		+ (\phi^*\txi^{\dag})_a\eta^{-1}\xi^n\right]\delta\beta^a 
\end{multline}
which leaves us with the intermediate expression:
\begin{subequations}\label{intermediate}\begin{align}
	\xi^\dag_a &= (\phi^*\txi^{\dag})_a\\
	\xi^\dag_n &= - \left[(\phi^*\txi^{\dag})_t\eta^{-1} 
		+ (\phi^*\txi^{\dag})_a\eta^{-1}\beta^a\right]\\
	\bg^\dag_{ab}&=-(\phi^*\tg^\dag)_{ab}\\
	\bphi_a&=-2 (\phi^*\tg^\dag)_{at}\eta^{-2} 
		- 2(\phi^*\tg^\dag)_{ab}\eta^{-2}\beta^b 
		+ (\phi^*\txi^{\dag})_a\eta^{-1}\xi^n\\\notag
	\bphi_n&= 2(\phi^*\tg^\dag)_{tt} \eta^{-3} 
		+ 4 (\phi^*\tg^\dag)_{at} \eta^{-3} \beta^a 
		+ 2(\phi^*\tg^\dag)_{ab} \eta^{-3}\beta^a\beta^b \\
		&- (\phi^*\txi^{\dag})_t \eta^{-2} \xi^n
		- (\phi^*\txi^{\dag})_a \eta^{-2} \beta^a \xi^n
\end{align}\end{subequations}
Starting from the top and solving downwards, we easily get 
\begin{align}
(\phi^*\txi^{\dag})_a &= \xi^\dag_a\\
(\phi^*\txi^{\dag})_n & = - \eta \xi^\dag_n - \xi^\dag_a \beta^a\\
(\phi^*\tg^\dag)_{ab} &= -\gamma^\dag_{ab}\\
(\phi^*\tg^\dag)_{at} &= -\frac12\eta^2\bphi_a 
    + \bg^\dag_{ab}\beta^b 
    + \frac12 \eta \xi^\dag_a\xi^n\\
(\phi^*\tg^\dag)_{tt} & = \frac12 \eta^{3} \bphi_n 
    + \eta^{2}\bphi_a\beta^a 
    - \bg^\dag_{ab}\beta^a\beta^b 
    - \eta \xi^\dag_a\beta^a\xi^n 
    - \frac12 \eta\xi^\dag_n\xi^n
\end{align}
Alternatively, from \eqref{intermediate}, observing that the assignment \eqref{reconstruct} can be inverted to yield
$$
\phi^{-1}{}^*\eta= [-\tg^{tt}]^{-\frac12}, \qquad \phi^{-1}{}^*\beta^a = - [-\tg^{tt}]^{-1} \tg^{ta}, \qquad \phi^{-1}{}^*\gamma^{ab} = [-\tg^{tt}]^{-1} \tg^{ta}\tg^{tb}
$$
together with
$$
\phi^{-1}{}^*\xi^n = -[-\tg^{tt}]^{-\frac12}\txi^{t}; \qquad \phi^{-1}{}^*\xi^a = \txi^{a} + [-\tg^{tt}]^{-1}\tg^{ta}\txi^{t}
$$
we can similarly obtain the inverse: 
\begin{align*}
\phi^{-1}{}^*\eta &=[-\tg^{tt}]^{-\frac{1}{2}}\\
\phi^{-1}{}^*\beta^a & = - [-\tg^{tt}]^{-1} \tg^{ta}\\
\phi^{-1}{}^*\bg^{ab} & = [-\tg^{tt}]^{-1} \tg^{ta} \tg^{tb}\\
\phi^{-1}{}^*\xi^{n} & = - [-\tg^{tt}]^{-\frac12 }\txi^{t}\\
\phi^{-1}{}^*\xi^{a} & = \txi^{a} + [-\tg^{tt}]\tg^{ta}\txi^{t}\\
\phi^{-1}{}^*\bg^\dag_{ab} & = -\tg^\dag_{ab}\\\intertext{}
\phi^{-1}{}^*\xi^{\dag}_a & = \txi^{\dag}_a\\
\phi^{-1}{}^*\xi^\dag_n & = - \txi^{\dag}_n[-\tg^{tt}]^{\frac12 } + \txi^{\dag}_a [-\tg^{tt}]^{-\frac12}\tg^{ta}\\
\phi^{-1}{}^*\bphi_a &= - 2 \tg^\dag_{at} [-\tg^{tt}] + 2 \tg_{ab}^\dag \tg^{tb} - \txi^{\dag}_a\txi^{t}\\
\phi^{-1}{}^*\bphi_n & = 2\tg^\dag_{tt} [-\tg^{tt}]^{\frac32 } - 4 \tg^\dag_{ta}[-\tg^{tt}]^{\frac12 } \tg^{ta} + 2\tg^\dag_{ab}[-\tg^{tt}]^{-\frac12 } \tg^{ta}\tg^{tb} \\
& + [-\tg^{tt}]^{\frac12 }\txi^{\dag}_t \txi^{t} -[-\tg^{tt}]^{-\frac12} g^{ta} \txi^{\dag}_a\txi^{t}.
\end{align*}

Now, using again the intermediate expressions \eqref{intermediate} let us consider the following terms, coming from Equation \eqref{OnshellAKSZEH}:
{
\begin{align*}
    \xi^\dag_n L_{\xi^\partial}\xi^n 
        &= -(\phi^*\txi^{\dag})_t \eta^{-1}L_{\xi^\partial}\xi^n 
            - (\phi^*\txi^{\dag})_a\eta^{-1}\beta^a L_{\xi^\partial}\xi^n\\
    \langle \xi^\dag_\partial,(\xi^n\nabla_{\bg}\xi^n 
            + \frac12[\xi^\partial,\xi^\partial]\rangle 
        &= \langle (\phi^*\txi^{\dag})_\partial,(\xi^n\nabla_{\bg}\xi^n 
                + \frac12[\xi^\partial,\xi^\partial]\rangle
    \end{align*}
    \begin{align*}
    \bphi_n & \left( L_{\xi^\partial}\eta - L_{\beta}\xi^n - \dot{\xi}^n\right) \\
        &= \eta^{-3}\left[2(\phi^*\tg^\dag)_{tt}  + 4 (\phi^*\tg^\dag)_{at}\beta^a 
            + (\phi^*\tg^\dag)_{ab}\beta^a\beta^b\right]\left(L_{\xi^\partial}\eta
            -L_\beta \xi^n -\dot{\xi}^n\right)\\
        & \quad - \eta^{-2}\left[(\phi^*\txi^{\dag})_n \xi^n
                + (\phi^*\txi^{\dag})_a \beta^a \xi^n\right] \left(L_{\xi^\partial}\eta 
                -L_{\beta}\xi^n - \dot{\xi}^n\right) 
    \end{align*}
    \begin{align*}
    \langle\bphi_\partial&,\left(\nabla_{\bg}\xi^n 
            -\eta\nabla_{\bg}\xi^n + L_{\xi^\partial}\beta\right)\rangle \\
        &=-2\eta^{-2}\left((\phi^*\tg^\dag)_{at} 
            +(\phi^*\tg^\dag)_{ab}\beta^b\right)\left((\nabla_{\bg}\eta)^a\xi^n 
            - \eta(\nabla_{\bg}\xi^n)^a +(L_{\xi^\partial}\beta)^a 
            - \dot{\xi}^a\right)\\
        & \quad + (\phi^*\txi^{\dag})_a \eta^{-1} \xi^n\left((L_{\xi^\partial}\beta)^a
            -\eta(\nabla_{\bg}\xi^n)^a - \dot{\xi}^a\right) 
            \end{align*}
    \begin{align*}
    \langle \bg^\dag, \eta^{-1} \left(\dot{\bg} + L_\beta\bg\right)\xi^n 
        - L_{\xi^\partial}\bg\rangle 
    & = -(\phi^*\tg^\dag)_{ab}\left(\eta^{-1}\left(\dot{\bg} 
        + L_\beta\bg\right)\xi^n 
        - L_{\xi^\partial}\bg\right)^{ab}.
\end{align*}}
Then, summing all terms on the left hand side and factoring $(\phi^*\txi^{\dag})_t$, $(\phi^*\txi^{\dag})_a$ and $(\phi^*\tg^{\dag})$, we obtain
{\small
\begin{align*}
    & (\phi^*\txi^{\dag})_t\left[
        -L_{\xi^\partial}(\eta^{-1}\xi^n) 
        +\eta^{-2}\xi^n\left( 
        L_{\beta}\xi^n 
        + \dot{\xi}^n\right) \right]\\
    & +(\phi^*\tg^\dag)_{tt}\left[2\eta^{-3}\left(L_{\xi^\partial}\eta 
        - L_\beta \xi^n 
        - \dot{\xi}^n\right)\right] +\left\langle (\phi^*\txi^{\dag})_\partial, 
        \frac12[\xi^\partial,\xi^\partial] \right\rangle\\
    & +\left\langle (\phi^*\txi^{\dag})_\partial,     - L_{\xi^\partial}(\eta^{-1}\xi^n)\beta
        + \eta^{-1}L_{\xi^\partial}\beta\xi^n 
        + \eta^{-2} \beta \xi^n L_\beta \xi^n 
        + \eta^{-2}\beta \xi^n \dot{\xi}^n 
        - \eta^{-1}\xi^n \dot{\xi}^\partial\right\rangle\\    &+(\phi^*\tg^\dag)_{ab}\left[-\eta^{-1}\dot{\bg}^{ab}\xi^n - \eta^{-1}(L_\beta\bg)^{ab}\xi^n + (L_{\xi^\partial}\bg)^{ab} + 4\eta^{-3} \beta^a\left(L_{\xi^\partial}\eta 
        - L_\beta \xi^n 
        - \dot{\xi}^n\right)\right]\\
        &+(\phi^*\tg^\dag)_{ab}\left[-2\eta^{-2}\left((\nabla_{\bg}\eta)^a\xi^n 
        - \eta(\nabla_{\bg}\xi^n)^a 
        +(L_{\xi^\partial}\beta)^a 
        - \dot{\xi}^a\right)\right]\\
    & +(\phi^*\tg^\dag)_{ab}\left[2\eta^{-3}\beta^a\beta^b\left(L_{\xi^\partial}\eta 
        - L_\beta \xi^n 
        - \dot{\xi}^n\right)    \right]
       \\
    & +(\phi^*\tg^\dag)_{ab}\left[ -2\eta^{-2}\beta^b\left((\nabla_{\bg}\eta)^a\xi^n 
        - \eta(\nabla_{\bg}\xi^n)^a 
        +(L_{\xi^\partial}\beta)^a 
        - \dot{\xi}^a\right)\right]
\end{align*}}

Which, using Lemma \ref{lem:algebroid}, can be shown to be
\begin{equation}
    \phi^*\left(\tg^\dag L_{\txi}\tg + \iota_{[\txi,\txi]}\txi^\dag\right)    
\end{equation}
leading to
\begin{equation}
    \phi^*S_{EH}(\Sigma\times I) = \uS{\Sigma\times I}{}.
\end{equation}
\end{proof}

\begin{remark}
{We would like to stress here that the results in this section are a ``strictification'' of the general construction of a solution of the classical master equation for the extended Hamiltonian, as presented by Henneaux and Bunster in \cite[Theorem 18.8]{HT}. Indeed, the Hamiltonian analysis for a field theory relies on a (possibly) non-reduced version of the strict BFV data we consider, where strict indicates that we require all spaces of fields to be smooth symplectic manifolds. The AKSZ construction yields a BV theory (Theorem \ref{th:AKSZEH}) which is effectively equivalent to the natural BV extension of Einstein--Hilbert theory (Theorems \ref{th:pushforward} and \ref{th:EHstrongequivalence}). It could be argued that this effective equivalence preserves the BV cohomology \cite{Henn,DGH,BBH}. However, note that the quantisation procedure outlined in \cite{CMR2} does indeed require the strict version of a BV-BFV structure\footnote{See \cite{MSW2019} for the comparison between strict and lax BV-BFV structures.}, and its existence is not to be taken for granted, as was shown in \cite{CS2017} and \cite{CS2016a}.
}\end{remark}

\section{AKSZ PC}\label{s:AKSZPC}

Following the construction outlined in Section \ref{sec:local_description}, starting from the BFV theory of Palatini--Cartan gravity (see Section \ref{sec:PC-BFV_theory}), we can construct the AKSZ space of fields $\calF_{PC}^{\AKSZ}$.
We will use the following notation:
{
\begin{subequations}\label{e:variables_PC_AKSZ}
\begin{align}
\mathfrak{e}& = e + f^\dag & \mathfrak{w}&= \omega + u^\dag \\
\mathfrak{c}& = c + w & \mathfrak{x} &=\xi + z \\
\mathfrak{l}& = \lambda + \mu & \mathfrak{c}^\dag&= k^\dag + c^\dag\\
\mathfrak{y}^\dag & = e^\dag + y^\dag & &
\end{align}
\end{subequations}
where
\begin{equation}\label{e:AKSZfields}
\begin{aligned}
 e & \in C^{\infty}(I)\otimes \Omega_{nd}^1(\Sigma, \mathcal{V}) 			&
 f^\dag & \in \Omega^1[-1](I) \otimes  \Omega^1(\Sigma, \mathcal{V}) 		\\ 
 \omega & \in  C^{\infty}(I)\otimes \mathcal{A}^{}(\Sigma) 			    & 
 u^\dag & \in \Omega^1[-1](I)\otimes \mathcal{A}^{}(\Sigma) 			\\
 c & \in \Omega^0[1]( I \times \Sigma, \textstyle{\bigwedge^2}\mathcal{V})				&
 w & \in \Omega^1[-1](I)\otimes \Omega^0[1]( \Sigma, \textstyle{\bigwedge^2}\mathcal{V})  \\ 
 \xi & \in C^{\infty}(I)\otimes \mathfrak{X}[1](\Sigma)					&
 z & \in \Omega^1[-1](I)\otimes \mathfrak{X}[1](\Sigma)					\\
 \lambda & \in C^\infty[1](I \times \Sigma)								&
 \mu & \in \Omega^1[-1](I)\otimes C^\infty[1](\Sigma)						\\
 k^\dag & \in C^{\infty}(I)\otimes \Omega^{N-1}[-1](\Sigma,\textstyle{\bigwedge^{N-2}} \mathcal{V}) 	&
 c^\dag & \in \Omega^1[-1](I)\otimes \Omega^{N-1}[-1](\Sigma,\textstyle{\bigwedge^{N-2}} \mathcal{V})\\
 e^\dag & \in C^{\infty}(I)\otimes\Omega^{N-1}[-1](\Sigma,\textstyle{\bigwedge^{N-1}} \mathcal{V})	&
 y^\dag & \in \Omega^1[-1](I)\otimes \Omega^{N-1}[-1](\Sigma,\textstyle{\bigwedge^{N-1}} \mathcal{V})
\end{aligned}
\end{equation}
such that, for some $\sigma \in C^{\infty}(I)\otimes \Omega^1(\Sigma, \mathcal{V})$ and $B\in \Omega^1[-1](I) \otimes  \Omega^1(\Sigma, \mathcal{V})$, they satisfy the \emph{structural AKSZ constraints}:
\begin{subequations}
\begin{align}
    \epsilon_n \left\{(N-4)f^\dag e^{N-5} d_\omega e + e^{N-4} d_{\omega} f^\dag + e^{N-4} [u^\dag, e]\right\} &\label{e:struct_constrAKSZ2}\\ \nonumber
        +\left(\iota_z d_{\omega}\epsilon_n-[w-\iota_\xi u^\dag, \epsilon_n]\right)^{(a)}  k^\dag_a+ X^{(a)}  c^\dag_a + \left(X^{(b)}f^\dag_b\right)^{(a)}k^\dag_a&= f^\dag e^{N-4} \sigma + e^{N-3} B; \\
    \epsilon_n  e^{N-4} d_{\omega} e + X^{(a)}  k^\dag_a  &= e^{N-3}\sigma; \label{e:struct_constrAKSZ1}
\end{align}
\end{subequations}
where $X= \left(L_\xi^\omega \epsilon_n - [c,\epsilon_n]\right)\in\Gamma(M,\mathcal{V})$, while $\epsilon_n\in\Gamma(M,\mathcal{V})$ is a fixed section, and the indices $\{(a),(n)\}$ denote components with respect to a basis $\{e_a, \epsilon_n\}$.

\begin{remark}
Observe that our target for the AKSZ construction for Palatini--Cartan theory is the BFV theory defined in Definition \ref{def:BFVdata}, whose space of fields $\calF_{PC}^\partial$ is defined by the structural constraint \eqref{e:BFVstructuralConstraint}. As a consequence, the BFV constraint must be imposed on the AKSZ fields at every point of $I$. As the AKSZ fields consists of a $0$- and and $1$-form component, along the interval, the structural constraints now has a $0$-form and a $1$- form component corresponding to \eqref{e:struct_constrAKSZ1} and \eqref{e:struct_constrAKSZ2}, respectively. Despite the apparent complexity of these two equations, it is worth noting that they fix certain components of the AKSZ fields $\omega$ and $u^\dag$. See Section \ref{sec:Interpretation} for an interpretation. 
\end{remark}

\begin{remark}
Recall that to define the BFV structure for PC theory we needed a fixed section $\epsilon_n\in\Gamma(\Sigma,\mathcal{V})$ (cf. Definition \ref{def:BFVspaceoffields}). Note that $\epsilon_n$ is not a field of the theory but is part of the structure that defines it (more like a coupling constant). For this reason, in the AKSZ construction $\epsilon_n$ does not depend on the coordinate $x^n$ of the interval $I$. In the following, we will regard $\epsilon_n$ as a given section of $\Gamma(M,V)$ satisfying $d_I(\epsilon_n)=0$.
\end{remark}
}

\begin{theorem}\label{thm:AKSZPC}
The AKSZ data $\mathfrak{F}_{PC}^{\AKSZ}(I; \mathfrak{F}_{PC}^{\partial})$ are given by the quadruple 
$$
\mathfrak{F}_{PC}^{\AKSZ}(I; \mathfrak{F}_{PC}^{\partial})=(\mathcal{F}_{PC}^{\AKSZ}, S_{PC}^{\AKSZ}, \varpi_{PC}^{\AKSZ}, Q_{PC}^{\AKSZ})
$$
where:
\begin{align*}
\mathcal{F}_{PC}^{\AKSZ} \simeq T^*[-1](\mathrm{Map} (I, \mathcal{F}^{\partial}_{PC}))
\end{align*}
\begin{align*}
    \varpi_{PC}^{\AKSZ}= \int_{I\times \Sigma}&  \delta (e^{N-3} f^\dag)  \delta \omega+e^{N-3} \delta e \delta u^\dag + \delta w \delta k^\dag+ \delta c \delta c^\dag + \delta u^\dag \delta(\iota_\xi k^\dag)\\
& + \delta \omega \delta(\iota_z  k^\dag)+ \delta \omega \delta(\iota_\xi c^\dag) - \delta \mu \epsilon_n \delta e^\dag - \delta \lambda \epsilon_n \delta y^\dag  \\
&+\iota_{\delta z}\delta (e e^\dag)	+\iota_{\delta \xi} \delta (f^\dag e^\dag)+\iota_{\delta \xi} \delta (e y^\dag);
\end{align*}
\begin{align}
 S_{PC}^{\AKSZ} =\int_{I\times \Sigma} &  w e^{N-3} d_{\omega} e
	 + (N-3) c e^{N-4} f^\dag d_{\omega} e
	 + c e^{N-3} [u^\dag, e]
	 + c e^{N-3} d_{\omega} f^\dag \nonumber\\
	&+ \iota_{z} e e^{N-3} F_{\omega} 
	 + \iota_{\xi}( e^{N-3} f^\dag) F_{\omega}
	 + \iota_{\xi} e e^{N-3} d_{\omega}u^\dag 
	 + \epsilon_n \mu e^{N-3} F_{\omega} \nonumber \\ 
	&+ (N-3) \epsilon_n \lambda e^{N-4} f^\dag F_{\omega}
	 + \epsilon_n \lambda e^{N-3} d_{\omega}u^\dag 
	 + [w,c] k^{\dag}
	 + \frac{1}{2} [c,c] c^{\dag} \nonumber \\ 
	&- \iota_{z} d_{\omega} c k^{\dag}
	 - [\iota_{\xi}u^\dag ,c] k^{\dag}
	 - \iota_{\xi} d_{\omega}w k^{\dag}
	 - \iota_{\xi} d_{\omega} c c^{\dag}
	 + \iota_{z}\iota_{\xi} F_{\omega}k^{\dag}\nonumber \\ 
	&+ \frac{1}{2} \iota_{\xi}\iota_{\xi} d_{\omega}u^\dag k^{\dag}
	 + \frac{1}{2} \iota_{\xi}\iota_{\xi} F_{\omega}c^{\dag}
	 - [w, \epsilon_n \lambda ]e^{\dag}
	 - [c, \epsilon_n \mu ]e^{\dag} 
	 - [c, \epsilon_n \lambda ]y^{\dag} \nonumber \\
	&+ \iota_{z}d_{\omega} (\epsilon_n \lambda)e^{\dag} 
	 + [\iota_{\xi}u^\dag ,\epsilon_n \lambda]e^{\dag}
	 + \iota_{\xi}d_{\omega} (\epsilon_n \mu )e^{\dag} 
	 + \iota_{\xi}d_{\omega} (\epsilon_n \lambda)y^{\dag}\nonumber \\
	&+ \iota_{[z,\xi]}e e^{\dag}
  	 + \frac{1}{2}\iota_{[\xi,\xi]}f^\dag e^{\dag}
	 + \frac{1}{2}\iota_{[\xi,\xi]}e y^{\dag}  \nonumber  \\
	&+  \frac{1}{N-2} e^{N-2} d_I \omega 
	 + c d_I k^\dag+ d_I \omega \iota_\xi k^\dag 
	 - \iota_{d_I \xi} e  e^\dag 
	 + d_I \lambda \epsilon_n e^\dag \nonumber.
\end{align}
{
and $Q_{PC}^{\AKSZ}$ is defined as $Q_{PC}^{\AKSZ}= Q_{PC}^{\text{dR}} + Q_{PC}^{\text{lift}}$ where ${Q}_{PC}^{\text{lift}}$ is the tangent lift of $Q^{\partial}_{PC}$ to $\mathrm{Map}(T[1]I,\Fp{\Sigma}{PC})$ and $Q_{PC}^{dR}$ is the lift of the de Rham differential $d_I$.

}
\end{theorem}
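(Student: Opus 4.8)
The plan is to follow the scheme of the proof of Theorem~\ref{th:AKSZEH}, now applying Theorem~\ref{AKSZtheorem} (equivalently, the one-dimensional AKSZ machinery of Section~\ref{sec:local_description}) to the Hamiltonian dg-manifold underlying the BFV theory $\FF^\partial_{PC}(\Sigma)$ of Definition~\ref{def:BFVdata}. First I would record that, by Proposition~\ref{prop:symplecticsubspeace} together with \cite{CCS2020,CS2019}, the space $\Fp{\Sigma}{PC}$ cut out of $\check{\calF}_{PC}(\Sigma)$ by the structural constraint \eqref{e:BFVstructuralConstraint} is a genuine $0$-shifted symplectic manifold with exact symplectic form $\varp{\Sigma}{PC}=\delta\alp{\Sigma}{PC}$; a primitive $\alp{\Sigma}{PC}$ can be read off \eqref{e:symplectic_form_C3}, and I would fix the one whose transgression reproduces the ``kinetic'' line of $S^{\AKSZ}_{PC}$. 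Together with the degree-$1$ functional $\Sp{\Sigma}{PC}$ of \eqref{e:action_C3}, which satisfies $\{\Sp{\Sigma}{PC},\Sp{\Sigma}{PC}\}=0$, and its Hamiltonian vector field $\Qp{\Sigma}{PC}$, this is exactly the input required by Theorem~\ref{AKSZtheorem}, so that $\FF^{\AKSZ}_{PC}(I;\FF^\partial_{PC})$ is automatically a BV theory (up to boundary terms); what remains is to identify its four pieces explicitly.

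For the space of fields, $\calF^{\AKSZ}_{PC}=\mathrm{Map}(T[1]I,\Fp{\Sigma}{PC})$; using $\Omega^\bullet(I)=C^\infty(I)\oplus\Omega^1(I)[-1]$, each degree-$0$ BFV field acquires a $1$-form companion of internal degree shifted by $-1$, which is precisely the content of \eqref{e:variables_PC_AKSZ}--\eqref{e:AKSZfields} and gives the identification $\calF^{\AKSZ}_{PC}\simeq T^*[-1](\mathrm{Map}(I,\Fp{\Sigma}{PC}))$. Since the BFV structural constraint \eqref{e:BFVstructuralConstraint} must hold at every point of $I$, I would transgress it: substituting \eqref{e:variables_PC_AKSZ} and splitting into the $0$-form and $1$-form parts in $dt$ --- expanding $\mathfrak{e}^{N-3}=e^{N-3}+(N-3)e^{N-4}f^\dag$, $\mathfrak{c}^\dag=k^\dag+c^\dag$, the term $X$, and the component decomposition relative to the ($f^\dag$-corrected) basis $\{e_a,\epsilon_n\}$, and absorbing the $\Omega^{1,1}_\partial$-valued parameter of $\mathrm{Im}\,W^{1,1}_{e^{N-3}}$ into the auxiliary data $\sigma$ and $B$ --- should reproduce \eqref{e:struct_constrAKSZ1} and \eqref{e:struct_constrAKSZ2}, including the $f^\dag$-cross term $(X^{(b)}f^\dag_b)^{(a)}k^\dag_a$ of \eqref{e:struct_constrAKSZ2} that is absent in the one-form-free case.

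Next I would compute the symplectic form $\varpi^{\AKSZ}_{PC}=\T^{(2)}_I(\varp{\Sigma}{PC})$ by substituting \eqref{e:variables_PC_AKSZ} into \eqref{e:symplectic_form_C3} and keeping the $dt$-component, the signs being dictated by the parities of the fields exactly as in the Einstein--Hilbert computation. For the action $S^{\AKSZ}_{PC}=\T^{(0)}_I(\Sp{\Sigma}{PC})+\iota_{d_I}\T^{(1)}_I(\alp{\Sigma}{PC})$: the first summand comes from substituting \eqref{e:variables_PC_AKSZ} into \eqref{e:action_C3} and keeping the part linear in $dt$, where each term splits by the Leibniz rule into several --- e.g.\ $c\,e^{N-3}d_\omega e$ produces the four terms $w\,e^{N-3}d_\omega e+(N-3)c\,e^{N-4}f^\dag d_\omega e+c\,e^{N-3}[u^\dag,e]+c\,e^{N-3}d_\omega f^\dag$ of the first line --- using $d_I\epsilon_n=0$ and the substitutions $\epsilon_n y^\dag=-\lambda^\dag$, $e_a y^\dag=-\xi_a^{\dag '}$; the second summand amounts to replacing $\delta\leadsto d_I$ in $\alp{\Sigma}{PC}$ and contributes the last line $\frac{1}{N-2}e^{N-2}d_I\omega+c\,d_I k^\dag+d_I\omega\,\iota_\xi k^\dag-\iota_{d_I\xi}e\,e^\dag+d_I\lambda\,\epsilon_n e^\dag$. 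Assembling the pieces gives the stated $S^{\AKSZ}_{PC}$; then $Q^{\AKSZ}_{PC}$ is its Hamiltonian vector field, with $[Q^{\AKSZ}_{PC},Q^{\AKSZ}_{PC}]=0$ by Theorem~\ref{AKSZtheorem}, and, exactly as in the proof of Proposition~\ref{prop:AKSZ-BV-BFV}, the $\iota_{d_I}\T^{(1)}_I(\alp{\Sigma}{PC})$ term contributes the de Rham part $Q^{\mathrm{dR}}_{PC}$ while $\T^{(0)}_I(\Sp{\Sigma}{PC})$ contributes the tangent lift $Q^{\mathrm{lift}}_{PC}$ of $\Qp{\Sigma}{PC}$, so that $Q^{\AKSZ}_{PC}=Q^{\mathrm{dR}}_{PC}+Q^{\mathrm{lift}}_{PC}$.

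The hard part will be the bookkeeping around the structural constraint. Unlike the Einstein--Hilbert case, where the target is (locally) a graded vector space, here the target is a \emph{reduced} space, so one has both to check that the AKSZ hypotheses still apply --- which they do, precisely because $(\Fp{\Sigma}{PC},\varp{\Sigma}{PC})$ is honestly symplectic by \cite{CCS2020,CS2019} --- and to verify that $Q^{\AKSZ}_{PC}$ is tangent to the structural AKSZ constraints \eqref{e:struct_constrAKSZ1}--\eqref{e:struct_constrAKSZ2}; the latter I expect to follow from the tangency of $\Qp{\Sigma}{PC}$ to \eqref{e:BFVstructuralConstraint} (cf.\ Remark~\ref{rem:diffwithCCS}), lifted to the mapping space, together with the tangency of $d_I$ to the transgressed constraint. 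A secondary, purely computational, difficulty is that $\varp{\Sigma}{PC}$ is not written in Darboux form, so one must track the many sign conventions through the Leibniz expansions of \eqref{e:action_C3} and \eqref{e:symplectic_form_C3}; this is routine once organised term-by-term, as in Theorem~\ref{th:AKSZEH}.
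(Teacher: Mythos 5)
Your proposal is correct and follows essentially the same route as the paper: the proof there is exactly a direct application of the one-dimensional AKSZ prescription, substituting the superfield decomposition \eqref{e:variables_PC_AKSZ} into the transgressed symplectic form \eqref{e:symplectic_form_C3} and into $\T^{(0)}_I(\Sp{\Sigma}{PC})+\iota_{d_I}\T^{(1)}_I(\alp{\Sigma}{PC})$, then extracting the $dt$-linear part. The only organisational difference is that the paper defers the tangency of $Q^{\AKSZ}_{PC}$ to the transgressed structural constraints to a remark immediately after the proof, justified exactly as you suggest (by the $Q^\partial$-invariance of \eqref{e:BFVstructuralConstraint} and the functoriality of the AKSZ construction).
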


\begin{proof}
This is a straightforward application of the AKSZ prescription outlined in Section \ref{sec:local_description}. 
Using the transgression map we can build a symplectic form $\calF_{PC}^{\AKSZ}$
\begin{align}\label{e:sympl_form_AKSZ}
\varpi^{\AKSZ}_{PC} = \int_{I\times \Sigma}& \mathfrak{e}^{N-3} \delta \mathfrak{e} \delta \mathfrak{w} + \delta \mathfrak{c} \delta\mathfrak{c}^\dag + \delta \mathfrak{w} \delta (\iota_{\mathfrak{x}} \mathfrak{c}^\dag) - \delta \mathfrak{l} e_n \delta \mathfrak{y}^\dag +  \iota_{\delta\mathfrak{x}} \delta (\mathfrak{e}\mathfrak{y}^\dag)
\end{align}
from which we obtain the claimed expression using \eqref{e:variables_PC_AKSZ}.
Analogously the AKSZ action can be constructed using the transgression map from the boundary one-form $\alpha^{\partial}$ and from the boundary action $S^{\partial}$.  Namely we have 
\begin{align}\label{e:action_AKSZ}
S^{\AKSZ}_{PC}=\int_{I\times \Sigma}&  \frac{1}{N-2} \mathfrak{e}^{N-2} d_I \mathfrak{w} + \mathfrak{c} d_I \mathfrak{c}^\dag+ d_I \mathfrak{w} \iota_{\mathfrak{x}} \mathfrak{c}^\dag - \iota_{d_I \mathfrak{x}} \mathfrak{e}  \mathfrak{y}^\dag + d_I \mathfrak{l} \epsilon_n \mathfrak{y}^\dag \\
 &  \mathfrak{c} \mathfrak{e} d_{\mathfrak{w}} \mathfrak{e} + \iota_{\mathfrak{x}} \mathfrak{e} \mathfrak{e} F_{\mathfrak{w}} + \epsilon_n \mathfrak{l} \mathfrak{e} F_{\mathfrak{w}} +\frac{1}{2} [\mathfrak{c},\mathfrak{c}] \mathfrak{c}^{\dag} - L^{\mathfrak{w}}_{\mathfrak{x}} \mathfrak{c} \mathfrak{c}^{\dag}+ \frac{1}{2} \iota_{\mathfrak{x}}\iota_{\mathfrak{x}} F_{\mathfrak{w}}\mathfrak{c}^{\dag}\nonumber\\ & -[\mathfrak{c}, \epsilon_n \mathfrak{l} ]\mathfrak{y}^{\dag} + L^{\mathfrak{w}}_{\mathfrak{x}} (\epsilon_n \mathfrak{l})\mathfrak{y}^{\dag} + \frac{1}{2}\iota_{[\mathfrak{x},\mathfrak{x}]}\mathfrak{e} \mathfrak{y}^{\dag}. \nonumber
\end{align}
Again the claimed expression can be obtained straightforwardly from \eqref{e:variables_PC_AKSZ}.
\end{proof}
{
\begin{remark}
The invariance of the constraints \eqref{e:struct_constrAKSZ1} and \eqref{e:struct_constrAKSZ2} with respect to $Q_{PC}^{\AKSZ}$ is guaranteed by the invariance of the structural constraint on the boundary \eqref{e:BFVstructuralConstraint} with respect to $Q^{\BFV}_{PC}$, and by the properties of the AKSZ construction. 
\end{remark}
}


From Theorem \ref{AKSZtheorem} we know that $\mathfrak{F}_{PC}^{\AKSZ}(I; \mathfrak{F}_{PC}^{\partial})$ yields a BV theory on the manifold $I \times \Sigma $. Furthermore, by Proposition \ref{prop:AKSZ-BV-BFV} these data satisfy also the BV-BFV axioms of Equation \eqref{BVBFVeqts}.
{
\begin{definition}\label{def:ndPCAKSZ}
We call \emph{nondegenerate} AKSZ PC theory the data $\FF_{PC_\star}^{\AKSZ}$ obtained by restricting the space of fields of $\mathfrak{F}_{PC}^{\AKSZ}(I;\FF^\partial_{PC}(\Sigma))$ to those maps whose $\mu$ component (as defined by Equation \eqref{e:variables_PC_AKSZ}) is nonvanishing.
\end{definition}
}

In \cite{CS2017} two of the authors proved that, using the natural symmetries of PC theory, the resulting BV theory $\FF_{PC}$ does not satisfy the BV-BFV axioms (it is not a $1$-extended BV theory) unless additional requirements on the fields are enforced. Next section will be devoted to the comparison between $\FF_{PC}(\Sigma\times I)$ and $\FF_{PC_\star}^{\AKSZ}(I;\FF^\partial_{PC}(\Sigma))$.

\subsection{Comparison of BV data for PC theory}
We want to compare the AKSZ-BV theory of Theorem \ref{thm:AKSZPC} with the one proposed for PC-gravity by two of the authors \cite{CS2017}, which we briefly recall here. Let $M$ be an $N$-dimensional manifold with $N>2$.

\begin{definition}\label{def:standardPC-BV}
We call \emph{standard} BV theory for PC gravity the BV data 
$$\FF_{PC}(M)= (\calF_{PC}(M), S_{PC}(M), \varpi_{PC}(M), Q_{PC}(M))$$
where 
\begin{equation*}
\calF_{PC}(M)\coloneqq T^*[-1]\left(\Omega_{nd}^1(M, \mathcal{V}) \oplus\mathcal{A}(M) \oplus \mathfrak{X}[1](M) \oplus \Omega^0[1](M,\mathrm{ad}P)\right)
\end{equation*} 
and the fields in the base are denoted by $(\be, \bom, \bxi^{}, \bc)$, while the corresponding variables in the cotangent fibre are denoted by $(\be^{\dag}, \bom^{\dag}, \bxi^{\dag}, \bc^{\dag})$;
\begin{equation*}
\varpi_{PC}(M) = \int_M \delta \be \delta \be^{\dag} + \delta \bom \delta \bom^{\dag}+ \delta \bc \delta \bc^{\dag} + \iota_{\delta \bxi^{}} \delta\bxi^{\dag}; 
\end{equation*}
\begin{align*}
S_{PC}(M) &=\int_M \frac{1}{N-2}	\be^{N-2} F_{\bom}  + \left(\iota_{\bxi^{}} F_{\bom}  - d_{\bom} \bc \right)\bom^\dag - \left(L_{\bxi^{}}^{\bom}\be- [\bc,\be]\right)\be^\dag\\
&+\int_M \frac{1}{2}\left(\iota_{\bxi^{}}\iota_{\bxi^{}} F_{\bom}  - [\bc,\bc]\right)\bc^\dag +\frac12\iota_{[\bxi^{},\bxi^{}]}\bxi^{\dag}.
\end{align*}
\end{definition}

The explicit expression of the cohomological vector field $Q_{PC}$, defined by the equation $\iota_{Q_{PC}}\varpi_{PC}= \delta S_{PC}$, will be useful in the following:

\begin{align*}
Q_{PC} \be &=  L_{\bxi^{}}^{\bom}\be- [\bc,\be] \\
Q_{PC} \bom &= \iota_{\bxi^{}} F_{\bom}  - d_{\bom} \bc \\
Q_{PC} \bc &= \frac{1}{2}\iota_{\bxi^{}}\iota_{\bxi^{}} F_{\bom}  - \frac{1}{2}[\bc,\bc]\\
Q_{PC} \bxi^{} &= \frac{1}{2} [\bxi^{}, \bxi^{}]\\
Q_{PC} \be^{\dag} & = \be^{N-3} F_{\bom} + L_{\bxi^{}}^{\bom} \be^{\dag} - [\bc , \be^{\dag}]\\
Q_{PC} \bom^{\dag} &= \be^{N-3} d_{\bom} \be - d_{\bom} \iota_{\bxi^{}} \bom^{\dag} - [\bc, \bom^{\dag}] + \iota_{\bxi^{}}[\be, \be^{\dag}]-\frac{1}{2} d_{\bom} \iota_{\bxi^{}} \iota_{\bxi^{}} \bc^{\dag}\\
Q_{PC} \bc^{\dag} &= - d_{\bom} \bom^\dag - [\be, \be^{\dag}] - [\bc, \bc^{\dag}] \\
Q_{PC} \bxi^{\dag}_{\bullet} &= F_{\bom\bullet} \bom^\dag -  (d_{\bom\bullet}\be)\be^{\dag}+\iota_{\bxi^{}}F_{\bom\bullet} \bc^\dag + L_{\bxi^{}}^{\bom}\bxi^{\dag}_{\bullet} + (d_{\bom}\iota_{\bxi^{}} \bxi^{\dag})_{\bullet}.
\end{align*}
Here we used the symbol $\bullet$ to remind the reader that $\bxi^\dag$ is a one-form with values in densities, and on the right hand side we highlight the one-form part of the expression.

{
\begin{remark}
Throughout the analysis we should always keep in mind that, while Definition \ref{def:standardPC-BV} is valid for any manifold $M$ (possibly with boundary), the AKSZ theory obtained in Theorem \ref{thm:AKSZPC} is by construction defined on a manifold diffeomorphic to a cylinder: $M=\Sigma\times I$. Furthermore, as we will see  in this section, the fields in $\mathcal{F}^{\AKSZ}_{PC^\star}$ correspond to those in the standard BV theory for PC but with an additional constraint.
\end{remark}

The product structure of $M$ induces a splitting of fields:
\begin{equation}\label{e:PCfieldsplitting}\begin{aligned}
    \be &= \te + {\te}_ndx^n 
        &  \be^\dag 
            &= \te{}^\dag + {\te}^{\dag}_ndx^n \\
    \bom &= \tom + {\tom}_ndx^n 
        &  \bom^\dag 
            &= \tom{}^\dag + {\tom}^{\dag}_ndx^n \\
    \bxi &= \txi + {\txi}{}^n\partial_n
        &  {\bxi}^\dag &= \underline{\txi}{}^\dag + {\txi}^{\dag}_ndx^n 
\end{aligned}\end{equation}
More compactly we can write field components in the $x^n$ direction as $\underline{\te}_n = \te_n dx^n$, $\underline{\txi}{}^n= \txi{}^n\partial_n$ and so on. Observe that $\bxi{}^\dag$ is a one-form with values in densities on $M$, so we can identify two $dx^n$ contributions: we denote by $\underline{\txi}_n{}^{\!\!\!\dag}$ the $dx^n$-component (of the one-form part) of $\bxi{}^\dag$ and by $\underline{\txi}$ the rest, stressing that the image of $\underline{\txi}$ is nontrivial along $dx^n$. This decomposition allows us to define the maps 
\begin{equation}\label{e:W-maps}
{W}_{\te{}^{N-3}}^{i,j}\colon \Omega^i\left(M,\textstyle{\bigwedge^j}\mathcal V\right) \rightarrow \Omega^{i+N-3}\left(M,\textstyle{\bigwedge^{j+N-3}}\mathcal V\right); \qquad {W}_{\te{}^{N-3}}^{i,j}(v) = \te{}^{N-3} \wedge v.
\end{equation}

Let us now fix a nonzero section $\epsilon_n \in \Gamma(M,\mathcal{V})$ such that $d_I\epsilon_n=0$. We will then restrict the field $\te$ not to have components parallel to $\epsilon_n$. This is a restriction on the space of fields (it actually defines an open subspace). The nondegeneracy of $e$ implies that $(i)$ $\Tilde e$ and $\epsilon_n$ form a basis of $V$ at every point, and $(ii)$ $\te_n$ becomes a linear combination of $\te$ and $\epsilon_n$, with nonzero $\epsilon_n$-component. Denote by $X^{\{\mu\}}$ the components of a field $X$ with respect to the basis given by $\te$ and $\epsilon_n$ (i.e. $X = X^{\{b\}}\te_b+X^{\{n\}}{\epsilon}_n$).

Additionally, we consider the quantity
\begin{equation}\label{e:defWdag}
    \mathfrak{W}^\dag:= \underline{\tom}^{\dag}_n - \tom_a{}^{\!\!\!\dag} {\te}_n^{\{a\}}dx^n- \iota_{\txi}\underline{\tc}_n{}^{\!\!\!\dag}+{\tc}_{an}{}^{\!\!\!\dag}{\txi}{}^n{\te}_n^{\{a\}}dx^n.
\end{equation}
Its meaning will become manifest with the following:
\begin{definition}\label{def:PCStructuralConstraints}
We denote by $\iota_R\colon \calF_{PC}^{\res} \to \calF_{PC}$ the subspace of BV Palatini--Cartan fields defined by the following equations, which we call \emph{PC structural constraints}:
\begin{subequations}
    \begin{align}
     {\epsilon}_n \te{}^{(N-4)} d_{\tom} \te - {\epsilon}_n \te{}^{(N-4)} {W}_{\te{}^{N-3}}^{-1}(\mathfrak{W}^\dag) d \txi{}^n &\nonumber\\
    + ( [\tc, \epsilon_n]+ L_{\txi{}}^{\tom}(\epsilon_n)-d_{\tom_n} \epsilon_n]\txi^{n})^{\{a\}}(\tom_a{}^{\!\!\!\dag} -{\tc}_{an}{}^{\!\!\!\dag}{\txi}{}^n) &\in  \mathrm{Im}({W}_{\te{}^{N-3}}^{1,1}) \label{e:PCconstraints2}\\
    \mathfrak{W}^\dag &\in \mathrm{Im}({W}_{\te{}^{N-3}}^{1,1}) \label{e:PCconstraints1}
\end{align}
\end{subequations}
and by the condition that the metric $g_{\text{hor}}\coloneqq \te^*\eta$ is nowhere degenerate\footnote{{Notice that the condition on $g_{\text{hor}}$ will restrict the moduli space of solutions of the theory to an open subset.}}.
\end{definition}

\begin{remark}
The PC structural constraints \eqref{e:PCconstraints1} and \eqref{e:PCconstraints2} are invariant under the action of $Q_{PC}$. Thus they define a BV theory  
\begin{equation}\label{e:defFR}
    \FF_{PC}^{\res}:=\left(\calF_{PC}^{\res},\varpi_{PC}^{\res}=\iota_R^*\varpi_{PC},S_{PC}^{\res}=\iota_R^*S_{PC},Q_{PC}^{\res}\right)
\end{equation}
where $Q_{PC}^{\res}$ is the restriction of $Q_{PC}$ to $\calF_{PC}^{\res}$.
We will call this theory the \emph{restricted BV Palatini--Cartan theory}. 
The direct proof of the invariance of the constraints is lengthy and involved, yet we get this result for free as a corollary of the following theorem, which also specifies the relations between the three BV theories $\FF_{PC}^{\res}$, $\FF_{PC}$ and $\FF_{PC^\star}^{\AKSZ}$.
\end{remark}


\begin{theorem}\label{thm:comparison}
Upon choosing the same section $\epsilon_n\in\Gamma(M,\mathcal{V})$ and the same signature for $g_{\text{hor}}$ in the three theories, the following diagram commutes
\begin{equation}
    \xymatrix{
        &   &   \calF_{PC}\\
    \calF_{PC_\star}^{\AKSZ} \ar[rru]^{\varphi} \ar[rr]^{\underline{\varphi}}  & &   \calF_{PC}^{\res}  \ar[u]^{\iota_R}
    }
\end{equation}
Moreover, $\underline{\varphi}$ is a symplectomorphism and we have
\begin{equation}
    \varpi^{\AKSZ}=(\underline{\varphi}\circ\iota_R)^*\varpi_{PC}; \qquad S^{\AKSZ}_{PC}= (\underline{\varphi}\circ\iota_R)^* S_{PC},
\end{equation}
so that $\underline{\varphi}$ and $\iota$ induce a strong BV equivalence and a BV inclusion, respectively:
\begin{align*}
\FF^{\AKSZ}_{PC_\star} \xrightarrow{\underline{\varphi}} \FF_{PC}^{\res} \qquad  \FF^{\AKSZ}_{PC_\star} \xrightarrow{\varphi} \FF_{PC}.
\end{align*} 
\end{theorem}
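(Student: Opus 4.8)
The plan is to construct the map $\underline{\varphi}$ explicitly from the AKSZ fields of Theorem~\ref{thm:AKSZPC} to the standard PC fields of Definition~\ref{def:standardPC-BV}, following the same strategy used for Einstein--Hilbert in Theorem~\ref{th:EHstrongequivalence}. Concretely, the $0$-form components of the AKSZ fields (i.e.\ $e,\omega,c,\xi,\lambda$ and their conjugates $k^\dag, c^\dag, e^\dag, y^\dag$) and the $1$-form components ($f^\dag, u^\dag, w, z, \mu$ and their conjugates) must be assembled into the $N$-dimensional bulk fields $(\be,\bom,\bxi,\bc)$ and $(\be^\dag,\bom^\dag,\bxi^\dag,\bc^\dag)$ on $M=\Sigma\times I$, via the splitting \eqref{e:PCfieldsplitting}. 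The natural guess, forced by the transgression formulas, is $\te=e$, $\underline{\te}_n = f^\dag$ (up to a $dx^n$), $\tom=\omega$, $\underline{\tom}_n$ built from $u^\dag$ together with the $\lambda,\mu$ data (reflecting the role of $\epsilon_n\lambda$ as the ``normal'' part of the connection ghost), $\txi=\xi$, $\txi^n$ related to $\lambda$, $\tc = c - \iota_\xi u^\dag$ or similar, and then the antifields determined by imposing $\varpi^{\AKSZ}_{PC} = \underline{\varphi}^*(\iota_R^*\varpi_{PC})$. I would first pin down $\underline{\varphi}$ on degree-zero and degree-one fields by matching the \emph{kinetic} terms $\frac1{N-2}e^{N-2}d_I\omega + c\, d_I k^\dag + \dots$ in $S^{\AKSZ}_{PC}$ against the $F_{\bom}$- and $d_{\bom}$-terms in $S_{PC}$ restricted to $M=\Sigma\times I$ (using $F_{\bom} = F_{\tom} + (d_I\tom - d_{\tom}\underline{\tom}_n)$, and similarly for $d_{\bom}\be$), reading off the dictionary; then fix the antifield part of $\underline{\varphi}$ by demanding it be a symplectomorphism, exactly as in \eqref{intermediate} for EH.

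\textbf{Second}, having the candidate $\underline{\varphi}$, I would verify $S^{\AKSZ}_{PC} = \underline{\varphi}^*(\iota_R^*S_{PC})$ by direct substitution: expand $S_{PC}$ on the cylinder using \eqref{e:PCfieldsplitting}, separate the terms with $dx^n$ (which become $d_I$-terms after relabelling) from the purely $\Sigma$-tangential ones, substitute the dictionary, and check term-by-term that one recovers the $29$-line expression for $S^{\AKSZ}_{PC}$. The appearance of $L^\omega_\xi$, $[c,\cdot]$ and $\iota_\xi\iota_\xi F_\omega$ structures on both sides makes this a bookkeeping exercise once the map is correct; the role of Lemma~\ref{lem:algebroid}'s analogue here is played by the fact that the AKSZ $Q$-structure is, by Theorem~\ref{AKSZtheorem}, automatically compatible with the target BFV data, so the tangential part of $S^{\AKSZ}_{PC}$ is literally $\T^{(0)}_I(S^\partial_{PC})$ evaluated on the superfields.

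\textbf{Third}, and this is where the genuinely new content lies, I would show that the image of $\underline{\varphi}$ is exactly $\calF^{\res}_{PC}$, i.e.\ that the \emph{structural AKSZ constraints} \eqref{e:struct_constrAKSZ1}--\eqref{e:struct_constrAKSZ2} are, under the dictionary, equivalent to the \emph{PC structural constraints} \eqref{e:PCconstraints1}--\eqref{e:PCconstraints2}. The $0$-form constraint \eqref{e:struct_constrAKSZ1}, $\epsilon_n e^{N-4}d_\omega e + X^{(a)}k^\dag_a \in e^{N-3}\sigma$, should translate into \eqref{e:PCconstraints2} after identifying $\sigma$ with the $W^{-1}_{\te^{N-3}}$-preimage of suitable data and using that $\underline{\tom}_n$ has been solved for; the $1$-form constraint \eqref{e:struct_constrAKSZ2} should become \eqref{e:PCconstraints1}, with $\mathfrak{W}^\dag$ of \eqref{e:defWdag} being precisely the combination of $u^\dag$, $c^\dag$, $\xi$ appearing there (this explains the otherwise mysterious definition \eqref{e:defWdag}). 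Along the way the nondegeneracy of $\mu$ in Definition~\ref{def:ndPCAKSZ} matches the nondegeneracy of $g_{\text{hor}}=\te^*\eta$: the $\mu$-component controls the $dx^n$-part of the tetrad and hence invertibility of the full $N$-dimensional metric. Once $\underline{\varphi}$ is shown to be a symplectomorphism onto $\calF^{\res}_{PC}$ intertwining the actions, the factorisation $\varphi = \iota_R\circ\underline{\varphi}$ gives the BV-inclusion $\FF^{\AKSZ}_{PC_\star}\to\FF_{PC}$ by Proposition~\ref{prop:equivalence+inclusion} (composition of a strong equivalence with an inclusion), and the $Q$-compatibility $\iota_R^* Q_{PC} = Q^{\res}_{PC}\iota_R^*$ follows from $\iota_R^* S_{PC} = S^{\res}_{PC}$ via Remark~\ref{rem:HamiltonianVF}; in particular the invariance of \eqref{e:PCconstraints1}--\eqref{e:PCconstraints2} under $Q_{PC}$ is obtained for free, as promised in the remark preceding the theorem.

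\textbf{The main obstacle} I anticipate is the third step: disentangling the normal/tangential decomposition of the connection and its ghost. Unlike EH, where the ADM split is classical and well-documented, here one must correctly guess how $\underline{\tom}_n$, $\lambda$, $\mu$, and the antifield $\mathfrak{W}^\dag$ interlock — the structural constraints are solved \emph{for} certain components of $\omega$ and $u^\dag$ (as the remark after Theorem~\ref{thm:AKSZPC} hints), so the dictionary is not a plain relabelling but involves inverting $W^{1,1}_{\te^{N-3}}$ on a subspace. Getting the signs and the $\iota_\xi$-corrections right in \eqref{e:defWdag} so that \eqref{e:struct_constrAKSZ2} maps exactly onto \eqref{e:PCconstraints1} is the delicate part; everything else is the same pattern of transgression-plus-symplectomorphism bookkeeping already rehearsed in Section~\ref{s:AKSZEH}.
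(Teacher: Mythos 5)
Your proposal follows essentially the same route as the paper's proof: write down the explicit field dictionary $\varphi$ adapted to the cylinder splitting \eqref{e:PCfieldsplitting}, verify by direct (lengthy) computation that it preserves the symplectic form and the action, identify the image with $\calF^{\res}_{PC}$ by translating the structural constraints, and deduce the $Q$-compatibility from the Hamiltonian property. The only small inaccuracy is in your anticipated correspondence of constraints: in the paper, \eqref{e:PCconstraints1} holds automatically on the image of the dictionary because $\mathfrak{W}^\dag$ collapses to $e^{N-3}\underline{f}^\dag=\te^{N-3}\underline{f}^\dag$, while \eqref{e:PCconstraints2} requires \emph{both} AKSZ constraints \eqref{e:struct_constrAKSZ1} and \eqref{e:struct_constrAKSZ2} (the latter entering via the $\lambda\mu^{-1}$-correction) --- a detail the computation itself would force you to correct.
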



\begin{remark}[Proof Strategy] In order to prove this, we will first show that there is an injective map $\varphi: \calF_{PC^\star}^{\AKSZ}\to \calF_{PC}$ such that $\varphi^*\varpi_{PC} = \varpi_{PC}^{\AKSZ}$ and $\varphi^*S_{PC} = S_{PC}^{\AKSZ}$. Note that, as a symplectomprhism, $\varphi$ is then an immersion. Then we will show that $\calF_{PC}^{\res}$ is the image of this map, so that the PC structural constraints \eqref{e:PCconstraints1} and \eqref{e:PCconstraints2} are satisfied if and only if the AKSZ structural constraints \eqref{e:struct_constrAKSZ1} and \eqref{e:struct_constrAKSZ2} are. The fact that $\underline{\varphi}$ is a symplectomorphism preserving the action also proves indirectly that $\FF_{PC}^{\res}$ is a BV theory.
\end{remark}

\begin{proof}
Denoting by $\{\be,\bom,\bc,\bxi\}$ the fields in $\calF_{PC}$ (their antifields with a dagger), and following the notation of Equation \eqref{e:variables_PC_AKSZ} for the variables in $\mathcal{F}^{\AKSZ}_{PC_\star}$, we define the map $\varphi: \calF^{\AKSZ}_{PC_\star}\rightarrow \calF_{PC}$ in terms of the splitting \eqref{e:PCfieldsplitting} (with $\varphi^*$ implicit on the right hand sides):
\begin{subequations} \label{e:cyl-variables}
\begin{eqnarray}
\varphi^*\be = \te + \underline{\te}_n  
    &  \varphi^*\bom = \tom + \underline{\tom}_n 
        & \varphi^*\be^{\dag}= \te{}^{\dag} + \underline{\te}_n{}^{\!\!\!\dag} 
 \\  
\varphi^*\bom^{\dag} = \tom{}^{\dag} + \underline{\tom}_n{}^{\!\!\!\dag} 
    & \varphi^*\bc= \tc 
        & \varphi^*\bc^{\dag} = \underline{\tc}_n{}^{\!\!\!\dag} \\
\varphi^*\bxi = \txi  + \underline{\txi}{}^n
    & \varphi^*\bxi^{\dag}=\underline{\txi}{}^{\dag}+ \underline{\txi}_n{}^{\!\!\!\dag}
\end{eqnarray}
\end{subequations}
where, using again the underlined notation to signify that the quantity is contains $dx^n$ or $\partial_n$, and $a\in\{1,2,\dots, N-1\}$:
{\small 
\begin{subequations}\label{e:symplectomorphismNd}
\begin{eqnarray}
 \te = e + \lambda \mu^{-1} f^{\dag}
 & \underline{\te}_n =  \epsilon_n \underline{\mu}+\iota_{\underline{z}} e + \lambda \mu^{-1}z^a \underline{f}^{\dag}_a\\
 \tom = \omega  - \lambda \mu^{-1} u^{\dag} 
 & \underline{\tom}_n = \underline{w} - \iota_\xi \underline{u}^\dag  - \lambda \mu^{-1}z^a \underline{u}^{\dag}_a\\
 \te{}^\dag = e^\dag -\lambda \mu^{-1} y_n^\dag 
 & \underline{\te}_n{}^{\!\!\!\dag} =  e^{N-3} \underline{u}^\dag + \iota_{\underline{z}} e^\dag -\lambda \mu^{-1}z^a \underline{y}_a^\dag + (N-3)e^{N-4}\lambda \mu^{-1}f^{\dag}\underline{u}^\dag \label{e:symplectomorphismNd_edag}\\
 \tom{}^\dag= k^\dag 
 & \underline{\tom}_n{}^{\!\!\!\dag}= e^{N-3} \underline{f}^\dag + \iota_{\underline{z}} k^\dag + \iota_\xi\underline{c}^\dag \label{e:symplectomorphismNd_omdag}\\
 \tc= c - \lambda \mu^{-1} \iota_{\xi}u^{\dag}
 & \underline{\tc}_n{}^{\!\!\!\dag}= \underline{c}^\dag\\
 \txi{}^{a}= \xi^a + \lambda \mu^{-1}z^a 
 &  \underline{\txi}{}^{\dag}= e \underline{y}^\dag + \underline{f}^\dag e^\dag - \underline{u}^\dag k^\dag + \underline{c}^\dag \lambda \mu^{-1} u^{\dag}\label{e:symplectomorphismNd_yadag}\\
 \underline{\txi}{}^{n}= \txi{}^n\partial_n= \lambda \mu^{-1} \partial_n& \underline{\txi}_n{}^{\!\!\!\dag}= e_n \underline{y}^\dag + e^{N-3} f^\dag \underline{u}^\dag+ f^\dag\iota_{\underline{z}} e^\dag +u^\dag \iota_{\underline{z}} k^\dag + c^\dag \lambda \mu^{-1}z^a \underline{u}^{\dag}_a \label{e:symplectomorphismNd_yndag}
\end{eqnarray}
\end{subequations}}
The explicit, long but straightforward calculation needed to prove that $\varphi$ is an inclusion of symplectic manifolds preserving the action functionals, i.e. such that $\varpi_{PC}^{\AKSZ}=\varphi^*\varpi_{PC}$ and $S_{PC}^{\AKSZ}=\varphi^*S_{PC}$, is given in Appendix \ref{appendix:Nd-PC-comparison}.

We then need to prove that $\mathrm{Im}(\varphi) = \calF_{PC}^{\res}$. In other words we want to prove that the map defined in  \eqref{e:symplectomorphismNd} will map a solution of the constraints \eqref{e:struct_constrAKSZ1} and \eqref{e:struct_constrAKSZ2} into a solution of \eqref{e:PCconstraints1} and \eqref{e:PCconstraints2}. Applying \eqref{e:symplectomorphismNd} to the definition of $\mathfrak{W}^\dag$ as given in Equation \eqref{e:defWdag} we get: 
\begin{align*}
    \mathfrak{W}^\dag & = \underline{\tom}^{\dag}_n - \tom_a{}^{\!\!\!\dag} {\te}_n^{\{a\}}dx^n- \iota_{\txi}\underline{\tc}_n{}^{\!\!\!\dag}+{\tc}_{an}{}^{\!\!\!\dag}{\txi}{}^n{\te}_n^{\{a\}}dx^n \\
    &=  e^{N-3} \underline{f}^\dag + \iota_{\underline{z}} k^\dag + \iota_\xi\underline{c}^\dag - k_a^\dag \underline{z_a} - \iota_{\xi}\underline{c}^\dag - \underline{c}_a^\dag\lambda \mu^{-1}z^a +  {c}_a^\dag\lambda \mu^{-1}\underline{z_a} \\
    &= e^{N-3} \underline{f}^\dag,
\end{align*}
which is \eqref{e:PCconstraints1}.
On the other hand, constraint \eqref{e:PCconstraints2} is satisfied if \eqref{e:struct_constrAKSZ1} and \eqref{e:struct_constrAKSZ2} are, as it can be seen by direct inspection: using \eqref{e:symplectomorphismNd} we get\footnote{Note that $X^{\{a\}}=X^{(a)}-(X^{(b)} \lambda \mu^{-1}f_b^\dag)^{(a)}$.}
\begin{align*}
    {\epsilon}_n \te{}^{(N-4)} d_{\tom} \te & - {\epsilon}_n \te{}^{(N-4)} {W}_{\te{}^{N-3}}^{-1}(\mathfrak{W}^\dag) d \txi{}^n \\
    &+ ( [\tc, \epsilon_n]+ L_{\txi{}}^{\tom}(\epsilon_n)-d_{\tom_n} \epsilon_n]\txi^{n})^{\{a\}}(\tom_a{}^{\!\!\!\dag} -{\tc}_{an}{}^{\!\!\!\dag}{\txi}{}^n)\\
    &= {\epsilon}_n e^{N-4} d_{\omega} e + (N-4){\epsilon}_n e^{N-5} \lambda \mu^{-1}f^\dag d_{\omega} e + {\epsilon}_n e^{N-4} [\lambda \mu^{-1}u^\dag, e] \\
    &+ \epsilon_n e^{N-4} d_{\omega} (\lambda \mu^{-1}f^\dag)+ (N-4)\epsilon_n e^{N-5} \lambda \mu^{-1}f^\dag d_{\omega} (\lambda \mu^{-1}f^\dag) \\
    & +  \epsilon_n f^\dag e^{N-4}  d_{\omega} (\lambda \mu^{-1})+ (N-4) \epsilon_n f^\dag e^{N-5} \lambda \mu^{-1}f^\dag d_{\omega} (\lambda \mu^{-1})\\
    &-\left([c, \epsilon_n]-[c, \epsilon_n]^{(b)}\lambda \mu^{-1}f_b^\dag +[\lambda \mu^{-1}\iota_\xi u^\dag, \epsilon_n]\right)^{(a)}(k_a^\dag + c_a^\dag \lambda \mu^{-1})\\
      &+\left(L_{\xi}^{\omega}(\epsilon_n)-L_{\xi}^{\omega}(\epsilon_n)^{(b)}\lambda \mu^{-1}f_b^\dag +[\lambda \mu^{-1}\iota_\xi u^\dag, \epsilon_n]\right)^{(a)}(k_a^\dag + c_a^\dag \lambda \mu^{-1})\\
      &-[w-\iota_\xi u^\dag, \epsilon_n]^{(a)}k_a^\dag\lambda \mu^{-1}\\
      &= \epsilon_n  e^{N-4} d_{\omega} e + \left(L_{\xi}^{\omega}(\epsilon_n) -[c, \epsilon_n]\right)^{(a)}  k^\dag_a \\
      &+ \lambda \mu^{-1}\Big(\epsilon_n \left\{(N-4)f^\dag e^{N-5} d_\omega e + e^{N-4} d_{\omega} f^\dag + e^{N-4} [u^\dag, e]\right\} \\
   &+ \left(\iota_z d_{\omega}\epsilon_n-[w-\iota_\xi u^\dag, \epsilon_n]\right)^{(a)}  k^\dag_a\\
   &+ \left(L_{\xi}^{\omega}(\epsilon_n) -[c, \epsilon_n]\right)^{(a)}  c^\dag_a + \left(\left(L_{\xi}^{\omega}(\epsilon_n) -[c, \epsilon_n]\right)^{(b)}f^\dag_b\right)^{(a)}k^\dag_a\Big) = (\spadesuit).
\end{align*}
Using now the AKSZ constraints \eqref{e:struct_constrAKSZ1} and \eqref{e:struct_constrAKSZ2} we obtain 
\begin{align*}
    (\spadesuit) &= e^{N-3} \sigma + \lambda \mu^{-1} (f^\dag e^{N-4} \sigma + e^{N-3} B)\\
    & =\te^{N-3} ( \sigma + \lambda \mu^{-1} B).
\end{align*}
Comparing the first and the last line of this computation we get the desired constraint \eqref{e:PCconstraints2}.
Hence $\varphi$ defines a diffeomorphism $\underline{\varphi}\colon \calF^{\AKSZ}_{PC_\star} \to \calF_{PC}^{\res}$.
Indeed, the inverse of this map is readily found, as follows. 

It is easy to find $k^\dag=\tom{}^\dag$, $\underline{c}^\dag=\underline{\tc}_n{}^{\!\!\!\dag}$, and $\txi{}^n=\lambda\mu^{-1}$. Then we can write $e = \te - {\txi}{}^{n}f^\dag$, so that ${\te}_n = \epsilon_n\mu + \iota_z \te$, and taking $\{\te_a,\epsilon_n\}$ as a basis, we have $z^a = \te_n{}^{a}$ and $\mu = \te_n{}^n$, which also implies $\lambda = \te_n{}^n \txi{}^n $ and $\xi^a = \txi{}^a - \te_n{}^a\txi{}^n$.

We now turn to equation \eqref{e:symplectomorphismNd_omdag} which can be rewritten as 
$$
e^{N-3} \underline{f}^\dag = \underline{\tom}_n{}^{\!\!\!\dag} - \iota_{\underline{z}}k^\dag - \iota_\xi\underline{c}^\dag
$$
Let us denote the known piece by $\underline{\tilde{A}}:=- \iota_{\underline{z}}k^\dag - \iota_\xi\underline{c}^\dag$, so that we have
$$
\begin{cases}
e = \te - \txi{}^n f^\dag\\
e^{N-3} \underline{f}^\dag = \underline{\tom}_n{}^{\!\!\!\dag} + \underline{\tilde{A}}
\end{cases} \Longrightarrow \te{}^{N-3} \underline{f}^\dag= \underline{\tom}_n{}^{\!\!\!\dag} + \underline{\tilde{A}}
$$
where we used that $f^\dag f^\dag = 0$. We see here that this equation can be solved only when 
$$
\underline{\tom}_n{}^{\!\!\!\dag} - \iota_{\underline{z}}\tom{}^\dag - \iota_\xi\underline{\tc}{}^\dag \in \mathrm{Im}({W}^{1,1}_{\te{}^{N-3}}).
$$
From the equations 
$$
\te^{\dag} = e^\dag - \txi{}^n y^\dag, \qquad \underline{\te}_n{}^{\!\!\!\dag}=e^{N-3}\underline{u}^\dag + \iota_{\underline{z}}e^\dag - \lambda \mu^{-1}z^a\underline{y}_a^\dag + (N-3)e^{N-4}\lambda \mu^{-1}f^{\dag}\underline{u}^\dag,
$$
using again $e = \te - \txi{}^n f^\dag$, we get 
$$
\te{}^{N-3}\underline{u}^\dag = \underline{\te}_n{}^{\!\!\!\dag} - \iota_{\underline{z}}\te{}^\dag.
$$
Since $\underline{\te}_n{}^{\!\!\!\dag} - \iota_{\underline{z}}\te{}^\dag\in \Omega^{(N-2,N-1)}$, on which the map ${W}_{\te{}^{N-3}}$ is surjective, we conclude that, up to components $p(\underline{u}^\dag)$ in the kernel of ${W}_{\te{}^{N-3}}$, we can find 
$$u^\dag = {W}_{\te{}^{N-3}}^{-1}(\underline{\te}_n{}^{\!\!\!\dag} - \iota_{\underline{z}}\te{}^\dag) + p\underline{u}^\dag$$
However, we know that $u^\dag$ must satisfy the constraint \eqref{e:struct_constrAKSZ2}, which (impliclty but uniquely) fixes $p\underline{u}^\dag$ as a function of $\te, \tom, f^\dag$. We can use this directly to solve 
$$
\omega = \tom + \txi{}^n u^\dag = \tom + \txi{}^n\left( {W}_{\te{}^{N-3}}^{-1}(\underline{\te}_n{}^{\!\!\!\dag} - \iota_{\underline{z}}\te{}^\dag) + p\underline{u}^\dag\right)
$$
Analogously we can find $\underline{w}$  and $c$ as follows
$$
\underline{w} = \underline{\tom}_n + \iota_{\txi}\left( {W}_{\te{}^{N-3}}^{-1}(\underline{\te}_n{}^{\!\!\!\dag} - \iota_{\underline{z}}\te{}^\dag) + p\underline{u}^\dag\right),
$$
$$
c = \tc +\txi{}^n\iota_{\txi}\left( {W}_{\te{}^{N-3}}^{-1}({\te}_n{}^{\!\!\!\dag} - \iota_{\underline{z}}\te{}^\dag) + p{u}^\dag\right).
$$
Finally, we can conclude the calculation with $y^\dag$ and $e^\dag$ by inverting \eqref{e:symplectomorphismNd_edag}, \eqref{e:symplectomorphismNd_yadag} and \eqref{e:symplectomorphismNd_yndag}: it is useful to notice that it is possible to invert an equation of the form $e^\dag(1+ \lambda X)=Y$ for some $X,Y$ as $e^\dag=Y(1-\lambda X)$. However, we will not write down in full these last equations as we will not need them in what follows.

The BV theory $\FF^{\AKSZ}_{PC_\star}$ is obviously strongly equivalent to its image under the symplectomorphism $\varphi$, which is $\FF_{PC}^{\res}$. Furthermore, since up to boundary terms $Q_{PC}$ is the Hamiltonian vector field of $S_{PC}$, and the same holds for $Q^{\AKSZ}_{PC}$ and $S_{PC}^{\AKSZ}$, we have that in the interior $M^\circ = M \backslash \partial M$ the compatibility $\varphi^* Q_{PC} = Q_{PC}^{\AKSZ} \varphi^*$ is a consequence of $\varphi^*\varpi_{PC}=\varpi^{\AKSZ}_{PC}$ and $\varphi^*S_{PC}=S^{\AKSZ}_{PC}$. However, this is a local condition that then extends to the whole of $(M,\partial M)$ and $\varphi$ is a BV inclusion.
\end{proof}

\begin{remark} \label{rmk:nondegenerateAKSZe}
The defining condition $\mu \neq 0$ and ${g}_{\text{hor}}=\te^*\eta$ nondegenerate given in Definition \ref{def:ndPCAKSZ}, used in Theorem \ref{thm:comparison}, are necessary in order to make $\be$ non degenerate in the bulk, to build the symplectomorphism \eqref{e:symplectomorphismNd} (since $\epsilon_n^{[n]} = \mu^{-1}$).
\end{remark}
}

\begin{remark} \label{rmk:freecomponents}
The number of free components of $\iota_R^*\tom$ is $\frac{3 N(N-1)}{2}$, since $\omega$ and $w$ have respectively $N(N-1)$ and $\frac{N(N-1)}{2}$ free components. The $\frac{N(N-1)(N-3)}{2}$ missing components are those fixed by the condition in Equation \eqref{e:PCconstraints2}. Correspondingly, also $\iota_R^*\tom{}^\dag$ has $\frac{3 N(N-1)}{2}$ independent components: $\frac{N(N-1)}{2}$ coming from $k^\dag$ and $ N(N-1)$ from $f^\dag$. 
\end{remark}

{
\subsection{An interpretation of the restricted theory} \label{sec:Interpretation}
We now want to shed some light on the interpretation of the restricted theory $\FF^{\res}_{PC}$ defined in Theorem \ref{thm:comparison}.  

Recall that among the Euler--Lagrange equations of the classical PC theory we have\footnote{We use boldface letters to denote fields in PC theory.} $\be^{N-3}d_{\bom} \be =0$, which, thanks to the nondegeneracy of $\be$, is equivalent to $d_{\bom} \be=0$, i.e., the torsion-free condition for $\bom$. Imposing this condition forces $\bom$ to correspond to the Levi-Civita connection for the metric $g_{\mu\nu}=\eta(\be_\mu,\be_\nu)$, which is used to recover the Einstein--Hilbert formulation of the theory. Note that this yields only a classical equivalence of the two theories, as the fluctuations might violate the condition $d_{\bom} \be=0$ at the quantum level. Only by forcing this condition on the space of fields (i.e., by freezing the fluctuations that might violate it) may one recover the quantum Einstein--Hilbert theory.

However, one can consider a whole family of theories between PC and EH where only some part of the condition $d_{\bom} \be=0$ is imposed on the fields, looking for a compromise.\footnote{Imposing too few conditions out of $d_\omega e=0$ would not solve the compatibility problem with the boundary. Imposing too many generates other problems (see, e.g., \cite[Section 4.3]{CS2019}, where the whole of $d_\omega e=0$ is imposed manually).} that retains the good feature of PC of dealing with differential forms but yields a compatible boundary BFV theory as in EH \cite{CS2016b}.

In particular, working on a cylinder $I\times\Sigma$, we may use the decomposition $\be= \underline{ \te}_n+ \te$,
 $\bom= \underline{\tom}_n + \tom$. By choosing once and for all a nonzero section $\epsilon_n\in\Gamma(M,\mathcal{V})$ and requiring the components of $\te$ to span a transversal hyperplane in $\mathcal{V}$ at each point, we may expand $\underline{\te}_n$ in the basis $(\epsilon_n,\te)$; moreover, we require $\Tilde e$ to define a nondegenerate metric $\eta(\te,\te)$ at each point.\footnote{A more physical requirement, as one would like the two boundary components of $I\times\Sigma$ to be space-like Cauchy surfaces, consists in choosing $\epsilon_n$ to be a time-like section and $\te$ to define a positive definite metric.} Observe that the splitting of fields $\be,\bom$ induced by the cylinder structure also allows the definition of the maps $W^{i,j}_{\te{}^{N-3}}$ given in Equation \eqref{e:W-maps}.
 
With these notations we may impose the constraint 
\begin{equation}\label{e:classical constraint}
\epsilon_n\te^{N-3}d_{\tom}\te\in\mathrm{Im}({W}_{\te{}^{N-3}}^{1,1}),
\end{equation}
which implements only some of the conditions in $d_{\bom} \be=0$. 

Another interpretation of the constraints goes through considering a reduction of the fields instead of a restriction. Indeed we can also think of Equation \eqref{e:classical constraint} as a classical constraint that freezes certain components of the connection. We need the following
\begin{definition}\label{def:reducedPCfield}
We define the space of reduced connections on a cylinder to be the quotient 
\begin{equation}
    \mathcal{A}^{\mathrm{red}}(\Sigma\times I)\coloneqq \mathcal{A}(\Sigma\times I)/\mathrm{ker}({W}_{\te{}^{N-3}}^{1,2}),
\end{equation}
and denote by $F_{PC}^{\res}$ the fiber bundle
\begin{equation}
    F_{PC}^{\res} \longrightarrow \Omega^1_{nd}(\Sigma\times I,\mathcal{V})
\end{equation}
with typical fiber $\mathcal{A}^{\mathrm{red}}(\Sigma\times I)$ obtained by reducing the fibers of the trivial bundle 
$$
\mathcal{A}(\Sigma\times I)\times\Omega^1_{nd}(\Sigma\times I,\mathcal{V}) \longrightarrow \Omega^1_{nd}(\Sigma\times I,\mathcal{V})
$$ 
by $\mathrm{ker}(W_{\te{}^{N-3}}^{1,2})$.
\end{definition}

\begin{proposition}\label{prop:PCstructural}
Consider the splitting $\be=\te + \underline{\te}_n$, with ${g}_{\text{hor}}\coloneqq \te{}^*\eta$ nondegenerate. Then for every $([\omega],\te,\underline{\te}_n)$ there exists a unique $\omega \in \mathcal{A}(\Sigma\times I)$ such that 
\begin{equation}\label{e:PCreductionconstaint}
(N-3)\epsilon_n \wedge \te{}^{N-4} \wedge d_{\omega} \te \in \mathrm{Im}(W_{\te{}^{N-3}}^{1,1}),
\end{equation}
which induces a section of the fibration:
\begin{equation}\label{e:PCfibsect}
    \mathcal{A}(\Sigma\times I)\times\Omega^1_{nd}(\Sigma\times I,\mathcal{V}) \longrightarrow F^{\res}_{PC},
\end{equation}
\end{proposition}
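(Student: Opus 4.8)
The plan is to reduce the statement, via the product structure of the cylinder, to a fibrewise (over $I$) application of the symplectic reduction result behind Proposition \ref{prop:symplecticsubspeace} (see \cite{CCS2020,CS2019}). Two preliminary observations organise the argument. First, since $\te$ has no transverse ($dx^n$) leg, \eqref{e:PCreductionconstaint} is to be read as the horizontal equation $\epsilon_n\wedge\te^{N-4}\wedge d_{\tom}\te\in\mathrm{Im}(W_{\te^{N-3}}^{1,1})$ inside $\Omega^{N-2}(\Sigma,\bigwedge^{N-2}\mathcal V)$, parametrised by $t\in I$ --- consistently with the graded structural constraint \eqref{e:PCconstraints2}, which involves $d_{\tom}$ --- so that it constrains only the $\Sigma$-component $\tom$ of $\omega$. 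Second, the reduction defining $\mathcal A^{\mathrm{red}}(\Sigma\times I)$ in Definition \ref{def:reducedPCfield} quotients only the horizontal part of the connection: writing $v\in\Omega^1(\Sigma\times I,\bigwedge^2\mathcal V)$ as $v=v_\Sigma+dx^n\wedge v_n$, one has $\te^{N-3}\wedge v=0$ iff $v_\Sigma\in\ker(W^{1,2}_{\te^{N-3}}|_\Sigma)$ and $v_n\in\ker(W^{0,2}_{\te^{N-3}}|_\Sigma)$, and a pointwise computation shows $\ker(W^{0,2}_{\te^{N-3}})=0$ for every $N\ge 3$ whenever $\te$ is injective with image complemented by $\epsilon_n$ (equivalently, $g_{\text{hor}}$ nondegenerate). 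Hence the transverse component $\underline{\tom}_n$ of a representative of $[\omega]$ is already uniquely determined by the class.

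Given these two reductions, I would argue as follows. Let $([\omega],\te,\underline{\te}_n)$ be a point of $F^{\res}_{PC}$ with $g_{\text{hor}}$ nondegenerate, and fix any representative $\omega_0$, with horizontal part $\tom_0$ and (intrinsically fixed) transverse part $\underline{\tom}_n$. For each $t\in I$ the pair consisting of the connection $\tom_0(t)$ and the nondegenerate tetrad $\te(t)$ on $\Sigma$ satisfies the hypotheses of the reduction lemma underlying Proposition \ref{prop:symplecticsubspeace} --- which asserts that $v\mapsto[\epsilon_n\wedge\te^{N-4}\wedge[v,\te]]$ is an isomorphism from $\ker(W^{1,2}_{\te^{N-3}}|_\Sigma)$ onto $\Omega^{N-2}(\Sigma,\bigwedge^{N-2}\mathcal V)/\mathrm{Im}(W^{1,1}_{\te^{N-3}}|_\Sigma)$ --- so there is a unique $v(t)\in\ker(W^{1,2}_{\te^{N-3}}|_\Sigma)$ with $\tom(t):=\tom_0(t)+v(t)$ satisfying the horizontal constraint. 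Assembling over $t\in I$ and leaving $\underline{\tom}_n$ unchanged yields the unique $\omega=\tom+\underline{\tom}_n\in\mathcal A(\Sigma\times I)$ that represents $[\omega]$ and satisfies \eqref{e:PCreductionconstaint}.

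Finally I would verify smoothness and the section property. The assignment $([\omega],\te,\underline{\te}_n)\mapsto(\omega,\te+\underline{\te}_n)$ is obtained by inverting, fibrewise in $t$, the isomorphism of the previous step; its inverse depends smoothly on $\te$ and is defined precisely on the open locus where $g_{\text{hor}}$ is nondegenerate, so the assignment is a smooth map into $\mathcal A(\Sigma\times I)\times\Omega^1_{nd}(\Sigma\times I,\mathcal V)$. Since $\omega$ represents $[\omega]$ by construction, composing it with the projection of \eqref{e:PCfibsect} returns the identity, so it is the announced section (defined over that locus). The only genuine content is the pointwise linear algebra --- that $\ker(W^{0,2}_{\te^{N-3}})=0$, and that $v\mapsto[\epsilon_n\wedge\te^{N-4}\wedge[v,\te]]$ is an isomorphism onto $\mathrm{coker}(W^{1,1}_{\te^{N-3}})$, the latter being exactly what is established for the boundary theory in \cite{CCS2020,CS2019}. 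The anticipated main obstacle is therefore not analytic but a matter of bookkeeping: checking that nondegeneracy of $g_{\text{hor}}$ is precisely what makes these $W_{\te^{N-3}}^{\bullet,\bullet}$-statements applicable fibrewise over $I$, and keeping the transverse/horizontal split of the connection straight.
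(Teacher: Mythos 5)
Your proof is correct and follows the same route as the paper's, which simply invokes \cite[Theorem 17]{CCS2020} fibrewise over $I$: your pointwise-in-$t$ application of the boundary isomorphism $v\mapsto[\epsilon_n\te{}^{N-4}[v,\te]]$ on $\ker(W^{1,2}_{\te{}^{N-3}})$ is exactly that adaptation. The two points you make explicit --- that the constraint must be read through its horizontal component (consistently with \eqref{e:PCconstraints2}, since the literal transverse component would impose a condition on $\underline{\tom}_n$ that no choice of representative could fix) and that $\ker(W^{0,2}_{\te{}^{N-3}})=0$ so the transverse part of the connection is already determined by the class --- are precisely the bookkeeping the paper compresses into the word ``straightforward.''
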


\begin{proof}
This is a straightforward adaptation of \cite[Theorem 17]{CCS2020}, which holds at every point in $I$.
\end{proof} 

Hence, imposing only some part of the equation $d_{\bom}\be=0$ produces an intermediate theory, that in view of Proposition \ref{prop:PCstructural} can be alternatively thought of as Palatini--Cartan theory for a tetrad and a \emph{reduced} connection. However, in both interpretations, fixing a condition only on the classical fields does not produce a symplectic submanifold of the space of BV fields.

If we want to consistently restrict the BV theory of Definition \ref{def:standardPC-BV} we first have to impose some condition on the antifields as well, in order to ensure that we have a nondegenerate BV form. One can show that \eqref{e:classical constraint} actually fixes the components of $\tom$ in the kernel of ${W}^{1,2}_{\te{}^{N-3}}$. As a consequence, we can get a symplectic submanifold if, in addition to \eqref{e:classical constraint}, we impose\footnote{This condition requires that the antifield of $\tom$ be in the dual of the complement of the kernel of $W_{\te{}^{N-3}}^{1,2}$, which is the image of $W_{\te{}^{N-3}}^{1,1}$ because $\tom$ is tangent to the slice $\Sigma\times \{t\}$, and its antifield is of the form $\tom^\dag_n dx^n$, with $\tom^\dag_n\in\Omega^{N-2}(M,\wedge^{N-2}\mathcal{V})$.}
\begin{equation}\label{e:complementary constraint}
{\tom}^{\dag}_n\in \mathrm{Im}({W}_{\te{}^{N-3}}^{1,1}). 
\end{equation}

The problem, though, is that \eqref{e:classical constraint} and \eqref{e:complementary constraint} do not define a $Q$-submanifold, which is needed to have a BV theory. However, one can easily check that condition \eqref{e:classical constraint} is compatible with gauge transformations and diffeomorphisms upon using the Euler--Lagrange equations. This implies that it should be possible to correct \eqref{e:classical constraint}---and concurrently \eqref{e:complementary constraint} because we want to preserve the condition that we get a symplectic submanifold---so as to obtain a $Q$-submanifold. The explicit solution to this problem is actually given by \eqref{e:PCconstraints1} and \eqref{e:PCconstraints2}.

\begin{remark}
Observe that this solution might not be unique, as the choice of a structural constraint we made in Definition \ref{def:BFVspaceoffields} was made to render the invariance of \eqref{e:BFVstructuralConstraint} more manifest. However, Theorem \ref{thm:comparison} tells us that a different choice of BFV structural constraint will provide a different extension of the constraint \eqref{e:PCreductionconstaint} in Palatini--Cartan theory.
\end{remark}



}

\subsection{Three dimensional case}
When $N=3$ some simplifications occur. Indeed, in this case the inclusion is actually an identity since there are no additional constraints on the field. Furthermore we know that the theory is strongly BV-equivalent, both in the bulk and on the boundary, to the topological $BF$ theory, denoted here by $\FF^{\AKSZ}_{BF'}$. Hence we can summarize the results in the following theorem.
\begin{corollary}\label{cor:3dcase}
The theories $\mathfrak{F}^{\AKSZ}_{PC_\star}$ and $\mathfrak{F}^{\AKSZ}_{BF'}$ are strongly BV equivalent.
\end{corollary}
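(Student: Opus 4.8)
The plan is to deduce Corollary~\ref{cor:3dcase} as an immediate consequence of the functoriality of the AKSZ construction under strong BFV-equivalence, established in the Corollary to Theorem~\ref{strongequivalenceAKSZ}, together with the already-known strong BV/BFV equivalence of three-dimensional Palatini--Cartan theory and $BF$ theory.

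First I would observe that when $N=3$ the map $W^{1,1}_{\te^{N-3}}=W^{1,1}_{\te^{0}}$ is simply multiplication by $1$, hence an isomorphism, so $\mathrm{Im}(W^{1,1}_{\te^{N-3}})$ is everything and the AKSZ structural constraints \eqref{e:struct_constrAKSZ1}--\eqref{e:struct_constrAKSZ2}, as well as the PC structural constraints \eqref{e:PCconstraints1}--\eqref{e:PCconstraints2}, are vacuous. Consequently, by Theorem~\ref{thm:comparison} the BV-inclusion $\varphi\colon\FF^{\AKSZ}_{PC_\star}\to\FF_{PC}$ has full image $\calF^{\res}_{PC}=\calF_{PC}$, so $\varphi$ is a strong BV-equivalence between $\FF^{\AKSZ}_{PC_\star}(I;\FF^\partial_{PC}(\Sigma))$ and the standard BV theory $\FF_{PC}(\Sigma\times I)$ of Definition~\ref{def:standardPC-BV}. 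Alternatively, and more in the spirit of reconstruction, one notes that $\FF^{\AKSZ}_{PC_\star}=\FF^{\AKSZ}(I;\FF^\partial_{PC}(\Sigma))$ already equals $\FF^{\AKSZ}_{PC}$ since the nondegeneracy restriction $\mu\neq0$ of Definition~\ref{def:ndPCAKSZ} is the only remaining content.

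Next I would invoke the results of \cite{CSS2017,CaSc2019}: in three space-time dimensions, PC theory is strongly BV-equivalent to $BF$ theory in the bulk, and the corresponding boundary BFV theories $\FF^\partial_{PC}(\Sigma)$ and $\FF^\partial_{BF}(\Sigma)$ are strongly BFV-equivalent, i.e.\ there is a symplectomorphism of the underlying Hamiltonian dg-manifolds intertwining the BFV actions. Applying the Corollary to Theorem~\ref{strongequivalenceAKSZ} (equivalently, Lemma~\ref{lemmaAKSZ} with $A=T[1]I$ and $B,C$ the respective mapping spaces) to this strong BFV-equivalence yields that $\FF^{\AKSZ}(I;\FF^\partial_{PC}(\Sigma))$ and $\FF^{\AKSZ}(I;\FF^\partial_{BF}(\Sigma))=\FF^{\AKSZ}_{BF'}$ are strongly BV-equivalent. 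Composing, when necessary, with Proposition~\ref{prop:equivalence+inclusion} to handle the fact that in general one links $\FF^{\AKSZ}_{PC_\star}$ to $\FF_{PC}$ via an inclusion which is here an isomorphism, one concludes that $\FF^{\AKSZ}_{PC_\star}$ and $\FF^{\AKSZ}_{BF'}$ are strongly BV-equivalent.

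The only genuinely delicate point is making precise which ``$BF$ theory'' is meant by $\FF^{\AKSZ}_{BF'}$ and checking that the AKSZ construction applied to the BFV data of $3$-dimensional $BF$ theory indeed reproduces (the strongly-equivalent) standard BV $BF$ theory on $\Sigma\times I$; but this is exactly the content of the fact, recalled in the introduction, that AKSZ topological field theories satisfy $\BV(\BFV(\FF))=\FF$, together with the strong equivalences of \cite{CSS2017,CaSc2019}. Everything else is bookkeeping: tracking the field redefinitions of the Remark following Definition~\ref{def:BFVdata} and the $N=3$ specialization, and noting that strong BV-equivalence is transitive (composition of symplectomorphisms preserving the action). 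I would therefore present the proof as a two-line chain of strong equivalences, citing Theorem~\ref{strongequivalenceAKSZ}, its Corollary, Theorem~\ref{thm:comparison}, and \cite{CSS2017,CaSc2019}.
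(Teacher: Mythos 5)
Your proposal is correct and follows essentially the same route as the paper: the paper's proof likewise combines Theorem \ref{strongequivalenceAKSZ} (AKSZ applied to strongly equivalent Hamiltonian dg-manifolds yields strongly equivalent BV theories) with the strong equivalence at all codimensions of nondegenerate $BF$ and PC gravity in three dimensions from \cite{CaSc2019}, together with Theorem \ref{thm:comparison} and the observation that the structural constraints are vacuous for $N=3$. The extra detail you supply (that $W^{1,1}_{\te^{0}}$ is the identity, and the remark on which $BF$ theory is meant) is a harmless elaboration of the same argument.
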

\begin{proof}
The claim follows directly from Theorem \ref{strongequivalenceAKSZ} given the results of Theorem \ref{thm:comparison} and of \cite{CaSc2019}, which proves the strong equivalence (at all codimensions) of non-degenerate BF theory and PC gravity in three dimensions. 
\end{proof}
Pictorially we can describe the content of Corollary \ref{cor:3dcase} as follows
\begin{center}
\begin{equation}\label{Commdiag}
\begin{tikzcd}[ row sep= 3 em, column sep= 4 em]
\calF_{PC} \arrow[dd, xshift= -1em, bend right= 50, "B", dashed]\arrow[d, leftrightarrow, "\phi"] \arrow[r, leftrightarrow, "\psi"] &\calF_{BF'} \arrow[d, equal]\\
\calF_{PC^\star}^{\AKSZ} \arrow[d, bend right= 50, "B", dashed]&\calF_{BF'}^{\AKSZ} \arrow[d, bend right= 50, "B", dashed]\\
\calF_{PC}^{\partial}\arrow[u, bend right= 50, "A", dashed]\arrow[r, leftrightarrow, "\psi^{\partial}"]&\calF_{BF'}^{\partial}\arrow[u, bend right= 50, "A" , dashed]
\end{tikzcd}
\end{equation}
\end{center}
where the arrows $A$ represent the AKSZ constructions, the arrows $B$ represent the BV-BFV reductions, while $\psi$, $\psi^{\partial}$ and $\phi$ are the symplectomorphisms mentioned above. 

\appendix
\section{Lengthy calculations} \label{appendix:Nd-PC-comparison}
We prove here that the transformation \eqref{e:symplectomorphismNd} is a symplectomorphism between $\calF^{\AKSZ}$ and $\calF_R$ that preserves the action. In the computation we will use multiple times the following useful relation:
\begin{align*}
\epsilon^{[a]}_n = - z^a \epsilon^{[n]}_n, \qquad \epsilon^{[n]}_n = \mu^{-1}.
\end{align*}

We now prove that $\varphi^*\varpi_{PC} =  \varpi_{PC}^{\AKSZ}.$
\begin{align}\label{e:intermediate_symplNd}
\varphi^*\varpi&_{PC}= \varphi^*\int_M \delta \be \delta \be {}^{\dag} + \delta \bom \delta \bom {}^{\dag}+ \delta \bc \delta \bc {}^{\dag} + \iota_{\delta \bxi} \delta\bxi {}^{\dag}  \nonumber\\
	& =  \int_M \delta \te \delta \underline{\te}_n{}^{\!\!\!\dag} + \delta \underline{\te}_n \delta \te{}^{\dag}+ \delta \tom \delta \underline{\tom}_n{}^{\!\!\!\dag}+ \delta \underline{\tom}_n \delta \tom{}^{\dag}+ \delta \tc \delta \underline{\tc}{}^{\dag} + \delta \txi{}^{a} \underline{\txi}_a{}^{\!\!\!\dag} + \delta \txi{}^{n} \underline{\txi}_n{}^{\!\!\!\dag } \\
& =  \int_M   \delta e  \delta (e^{N-3} \underline{u}^\dag)
	 + \delta e\delta(\iota_{\underline{z}} e^\dag) 
	 - \delta e\delta(\lambda \epsilon_n^{[a]} \underline{y}_a^\dag) 
	 +\delta e\delta((N-3)e^{N-4} \lambda \epsilon_n^{[n]}f^{\dag} \underline{u}^\dag ) \nonumber\\ 
	& \qquad +\delta(\lambda \epsilon_n^{[n]} f^{\dag})\delta (e \underline{u}^\dag)  
	 + \delta(\lambda \epsilon_n^{[n]} f^{\dag})\delta(\iota_{\underline{z}} e^\dag) 
	 -\delta(\lambda \epsilon_n^{[n]} f^{\dag})\delta(\lambda \epsilon_n^{[a]} \underline{y}_a^\dag) \nonumber\\
	& \qquad +\delta(\lambda \epsilon_n^{[n]} f^{\dag})\delta(\lambda \epsilon_n^{[n]}f^{\dag}\underline{u}^\dag )  
	 + \delta(\iota_{\underline{z}} e)  \delta e^\dag 
	 -\delta(\iota_{\underline{z}} e)\delta(\lambda \epsilon_n^{[n]} y^\dag)
	 + \delta( \epsilon_n \underline{\mu})  \delta e^\dag \nonumber\\
	& \qquad -\delta( \epsilon_n \underline{\mu})\delta(\lambda \epsilon_n^{[n]} y^\dag) 
	 + \delta(\lambda \epsilon_n^{[a]} \underline{f}^{\dag}_a)\delta e^\dag 
	 -\delta(\lambda \epsilon_n^{[a]} \underline{f}^{\dag}_a)\delta(\lambda \epsilon_n^{[n]} y^\dag)\nonumber\\
	& \qquad + \delta \omega  \delta(e^{N-3} \underline{f}^\dag) 
 	 + \delta \omega\delta(\iota_{\underline{z}} k^\dag) 
 	 + \delta \omega\delta(\iota_\xi \underline{c}^\dag)  
 	 - \delta(\lambda \epsilon_n^{[n]} u^{\dag}) \delta(e^{N-3} \underline{f}^\dag) \nonumber\\
	& \qquad - \delta(\lambda \epsilon_n^{[n]} u^{\dag})\delta(\iota_{\underline{z}} k^\dag) 
	 - \delta(\lambda \epsilon_n^{[n]} u^{\dag})\delta(\iota_\xi \underline{c}^\dag)  
	 + \delta \underline{w} \delta k^\dag
	 - \delta (\iota_\xi \underline{u}^\dag )\delta k^\dag \nonumber\\
	& \qquad - \delta(\lambda \epsilon_n^{[a]} \underline{u}^{\dag}_a)\delta k^\dag 
	 + \delta c \delta \underline{c}^\dag
	 - \delta(\iota_{\xi} \lambda \epsilon_n^{[n]} u^{\dag})\delta \underline{c}^\dag  
	 + \iota_{\delta \xi} \delta (e \underline{y}^\dag) \nonumber\\
 	& \qquad + \iota_{\delta \xi}\delta (\underline{f}^\dag e^\dag)
 	 - \iota_{\delta \xi}\delta (\underline{u}^\dag k^\dag) 
 	 + \iota_{ \delta \xi}\delta(\underline{c}^\dag \lambda \epsilon_n^{[n]} u^{\dag}) 
 	 + \delta (\lambda \epsilon^{[a]}_n) \delta (e_a \underline{y}^\dag) \nonumber\\
 	 \intertext{}
	& \qquad + \delta (\lambda \epsilon^{[a]}_n)\delta (\underline{f}_a^\dag e^\dag)
	 + \delta (\lambda \epsilon^{[a]}_n)\delta (\underline{u}_a^\dag k^\dag) 
	 + \delta (\lambda \epsilon^{[a]}_n) \delta(\underline{c}_a^\dag \lambda \epsilon_n^{[n]} u^{\dag})\nonumber\\
	& \qquad + \delta (\lambda \epsilon^{[n]}_n) \delta (\be_n \underline{y}^\dag )
	 + \delta (\lambda \epsilon^{[n]}_n) \delta (e^{N-3} f^\dag \underline{u}^\dag)
	 + \delta (\lambda \epsilon^{[n]}_n) \delta ( f^\dag\iota_{\underline{z}} e^\dag)\nonumber\\
	& \qquad + \delta (\lambda \epsilon^{[n]}_n) \delta (u^\dag \iota_{\underline{z}} k^\dag) 
	 + \delta (\lambda \epsilon^{[n]}_n) \delta (c^\dag \lambda \epsilon_n^{[a]} \underline{u}^{\dag}_a)\nonumber
\end{align}

This expression should be compared with the symplectic form coming from the AKSZ construction:
\begin{align}\label{e:sympl_form_AKSZ_Nd_ul}\setcounter{terms}{0}
\varpi^{\AKSZ}_{PC} = \int_{I\times \partial M}& 
	   \delta (e^{N-3} f^\dag)  \delta \omega
	 + e^{N-3} \delta e \delta u^\dag 
	 + \delta w \delta k^\dag
	 + \delta c \delta c^\dag 
	 + \delta u^\dag \delta(\iota_\xi k^\dag)\nonumber \\
	& + \delta \omega \delta(\iota_z  k^\dag)
	 + \delta \omega \delta(\iota_\xi c^\dag) 
	 - \delta \mu \epsilon_n \delta e^\dag
	 - \delta \lambda \epsilon_n \delta y^\dag \nonumber \\
	& +\iota_{\delta z}\delta (e e^\dag)	
	 +\iota_{\delta \xi} \delta (f^\dag e^\dag)
	 +\iota_{\delta \xi} \delta (e y^\dag).
\end{align}
Almost all the terms in \eqref{e:sympl_form_AKSZ_Nd_ul} can be direclty found in $\varphi^*\varpi_{PC}$.
The remaining terms can be identified using the following relations:
\begin{align*}
\delta \underline{u}^\dag \delta(\iota_\xi k^\dag) &=- \delta (\iota_\xi \underline{u}^\dag) \delta k^\dag - \iota_{\delta \xi}\delta( \underline{u}^\dag  k^\dag);\\
-\delta( \lambda \epsilon_n )\delta \underline{y}^\dag & = \delta(e_a \lambda \epsilon^{[a]}_n )\delta 
\underline{y}^\dag+\delta(\te_n \lambda \epsilon^{[n]}_n )\delta \underline{y}^\dag \\
 & = \delta( \lambda \epsilon^{[a]}_n )\delta (e_a\underline{y}^\dag) - \delta e \delta( \lambda \epsilon^{[a]}_n \underline{y}_a^\dag) \\
 & \; +  \delta( \lambda \epsilon^{[n]}_n )\delta (\te_n \underline{y}^\dag) - \delta \underline{\te}_n \delta( \lambda \epsilon^{[n]}_n y^\dag);\\
\iota_{\delta \underline{z}}\delta (e e^\dag) &=  \delta e  \delta ( \iota_{\underline{z}}e^\dag) +  \delta ( \iota_{\underline{z}}e)\delta e^\dag
\end{align*}

All the other terms in \eqref{e:intermediate_symplNd} sum to zero because of the following identities:

\begin{align*}
& \delta(\lambda \epsilon_n^{[n]} f^{\dag})\delta (e^{N-3} \underline{u}^\dag) +(N-3)\delta e\delta(e^{N-4}\lambda \epsilon_n^{[n]}f^{\dag} \underline{u}^\dag ) \\
 & -\delta(\lambda \epsilon_n^{[n]} u^{\dag}) \delta(e^{N-3} \underline{f}^\dag)+ \delta (\lambda \epsilon^{[n]}_n) \delta (e^{N-3} f^\dag \underline{u}^\dag)=0;\\
& \delta(\lambda \epsilon_n^{[n]} f^{\dag})\delta(\iota_{\underline{z}} e^\dag)+ \delta(\lambda \epsilon_n^{[a]} \underline{f}^{\dag}_a)\delta e^\dag+ \delta (\lambda \epsilon^{[a]}_n)\delta (\underline{f}_a^\dag e^\dag)+ \delta (\lambda \epsilon^{[n]}_n) \delta ( f^\dag\iota_{\underline{z}} e^\dag)=0;\\
&-\delta(\lambda \epsilon_n^{[n]} u^{\dag})\delta(\iota_{\underline{z}} k^\dag)- \delta(\lambda \epsilon_n^{[a]} \underline{u}^{\dag}_a)\delta k^\dag+ \delta (\lambda \epsilon^{[a]}_n)\delta (\underline{u}_a^\dag k^\dag)+ \delta (\lambda \epsilon^{[n]}_n) \delta (u^\dag \iota_{\underline{z}} k^\dag ) =0;\\
&-\delta(\lambda \epsilon_n^{[n]} u^{\dag})\delta(\iota_\xi \underline{c}^\dag)-\delta(\iota_{\xi} \lambda \epsilon_n^{[n]} u^{\dag})\delta \underline{c}^\dag+ \iota_{ \delta \xi}\delta(\underline{c}^\dag \lambda \epsilon_n^{[n]} u^{\dag})=0;\\
&\delta (\lambda \epsilon^{[a]}_n) \delta(\underline{c}_a^\dag \lambda \epsilon_n^{[n]} u^{\dag})+ \delta (\lambda \epsilon^{[n]}_n) \delta (c^\dag \lambda \epsilon_n^{[a]} \underline{u}^{\dag}_a)=0;\\
&-\delta(\lambda \epsilon_n^{[n]} f^{\dag})\delta(\lambda \epsilon_n^{[a]} \underline{y}_a^\dag)-\delta(\lambda \epsilon_n^{[a]} \underline{f}^{\dag}_a)\delta(\lambda \epsilon_n^{[n]} y^\dag) =0;\\
&(N-3)\delta(\lambda \epsilon_n^{[n]} f^{\dag})\delta(e^{N-4}\lambda \epsilon_n^{[n]}f^{\dag}\underline{u}^\dag )=0. 
\end{align*}

We go on to show that the symplectomorphism $\varphi$ preserves the action i.e. {$\varphi^* S_{PC} = S^{\AKSZ}_{PC}$.} We do it by direct inspection\footnote{We denote with $\partial_n$ the de Rham differential on $I$ (previously denoted with $d_I$) in order to be uniform with the notation of the fields \eqref{e:cyl-variables}.}:
\begin{align}\label{e:phiS_AKSZ_Nd}
\varphi^* S_{PC}  = & \varphi^*\int_{M} \frac{1}{N-2}  \be{}^{N-2} F_{\bom} + \left( d_{\bom} \bc - \iota_{\bxi} F_{\bom} \right) \bom{}^{\dag} + \left(L_{\bxi}^{\bom} \be - [\bc, \be]\right) \be{}^{\dag}\\ 
 & + \frac{1}{2}\left([\bc,\bc] - \iota_{\bxi}\iota_{\bxi} F_{\bom} \right) \bc{}^{\dag} + \frac{1}{2}\iota_{[\bxi,\bxi]} \bxi {}^{\dag} \nonumber\\
 \intertext{}
= & \int_{M}  \te{}^{N-3} \underline{\te}_n F_{\tom} + \frac{1}{N-2} \te{}^{N-2} \underline{F_{\tom_n}}-  \left(\iota_{\txi} F_{\tom}+ F_{\tom_n}\txi_{n}  - d_{\tom} \tc \right)\underline{\tom}_n{}^{\!\!\!\dag}\nonumber\\
 & - \left(\iota_{\txi{}} \underline{F_{\tom_n}}  - \underline{d_{\tom_n}} \tc \right)\tom{}^{\!\!\!\dag} 
+ \left(L_{\txi}^{\tom}\te + d_{\tom_n} \te \txi_{n} + \te_n d \txi_{n}- [\tc,\te]\right)\underline{\te}_n{}^{\!\!\!\dag}\nonumber\\ 
&+ \left(\iota_{\txi}d_{\tom}\underline{\te}_n + \iota_{\underline{\partial}_n \txi}\te - \underline{d_{\tom_n}}(\te_n \txi_{n})- [\tc,\underline{\te}_n]\right)\te{}^\dag\nonumber\\
&- \left(\frac{1}{2}\iota_{\txi}\iota_{\txi} F_{\tom} + \iota_{\txi{}} F_{\tom_n}\txi_{n}  - \frac{1}{2}[\tc,\tc]\right)\underline{\tc}{}^\dag 
+\frac12 \iota_{[\txi,\txi]}\underline{\txi{}^\dag}+ \frac12 \iota_{[\txi,\txi]^n}\underline{\txi}{}^{\!\!\!\dag}\nonumber
\end{align}
\begin{align*}
= & \int_{M}   e^{N-3}  \epsilon_n \underline{\mu} F_{\omega} 
	 +  e^{N-3} \iota_{\underline{z}} e F_{\omega}
	 +  e^{N-3}  \lambda \epsilon_n^{[a]} \underline{f}^{\dag}_a F_{\omega} 
	 + (N-3) e^{N-4}\lambda \epsilon_n^{[n]} f^{\dag} \underline{e_n}  F_{\omega}  \\ 
	& - e^{N-3}  \underline{e_n}  d_{\omega}(\lambda \epsilon_n^{[n]} u^{\dag})
	 - e^{N-3}\lambda \epsilon_n^{[a]} \underline{f}^{\dag}_a d_{\omega}(\lambda\epsilon_n^{[n]} u^{\dag}) \\
	& - (N-3)e^{N-4}\lambda\epsilon_n^{[n]}f^{\dag}\underline{e_n}d_{\omega}(\lambda\epsilon_n^{[n]} u^{\dag})  
     +  \frac{1}{N-2} e^{N-2} \left(\underline{\partial}_n  \omega 
	     -  \underline{\partial}_n  \lambda \epsilon_n^{[n]} u^{\dag}   
	     +  d_{\omega} \underline{w} \right) \\
	& -  \frac{1}{N-2} e^{N-2} \left( d_{\omega}( \iota_\xi \underline{u}^\dag)  
	     + d_{\omega}(\lambda \epsilon_n^{[a]} \underline{u}^{\dag}_a) 
 	     + [\lambda \epsilon_n^{[n]} u^{\dag}, \underline{w} - \iota_\xi \underline{u}^\dag]   \right)
     +  e^{N-3} \lambda \epsilon_n^{[n]} f^{\dag} \underline{F_{\omega_n}} \\
\shortintertext{}
    & - \iota_{\xi}F_{\omega}(e^{N-3} \underline{f}^\dag
     + \iota_{\underline{z}} k^\dag +  \iota_\xi \underline{c}^\dag)  
     + \iota_{\xi} d_{\omega}(\lambda \epsilon_n^{[n]} u^{\dag})(e^{N-3} \underline{f}^\dag
     + \iota_{\underline{z}} k^\dag +  \iota_\xi \underline{c}^\dag) 
\\ & -  F_{\omega_a}\lambda \epsilon^{[a]}_n  (e^{N-3} \underline{f}^\dag + \iota_{\underline{z}} k^\dag +  \iota_\xi \underline{c}^\dag)
 	 + d_{\omega}(\lambda \epsilon_n^{[n]} u^{\dag})_a \lambda \epsilon^{[a]}_n (e^{N-3} \underline{f}^\dag + \iota_{\underline{z}} k^\dag +  \iota_\xi \underline{c}^\dag)   
\\ & -  F_{\tom_n}  \lambda \epsilon_n^{[n]}e^{N-3} \underline{f}^\dag 
 	 -  F_{\tom_n}   \lambda \epsilon_n^{[n]} \iota_{\underline{z}} k^\dag 
 	 -  F_{\tom_n}   \lambda \epsilon_n^{[n]} \iota_\xi \underline{c}^\dag   
\\ & + d_{\omega}c (e^{N-3} \underline{f}^\dag + \iota_{\underline{z}} k^\dag +  \iota_\xi \underline{c}^\dag) 
	 -  [\lambda \epsilon_n^{[n]} u^{\dag},c](e^{N-3} \underline{f}^\dag + \iota_{\underline{z}} k^\dag +  \iota_\xi \underline{c}^\dag) 
\\ & -  d_{\omega}(\iota_{\xi} \lambda \epsilon_n^{[n]} u^{\dag})(e^{N-3} \underline{f}^\dag + \iota_{\underline{z}} k^\dag +  \iota_\xi \underline{c}^\dag) 
 \shortintertext{}& -  \iota_{\xi}\underline{\partial}_n  \omega  k^\dag 
 +   \iota_{\xi}\underline{\partial}_n ( \lambda \epsilon_n^{[n]} u^{\dag}) k^\dag 
 -  \iota_{\xi}d_{\omega} \underline{w} k^\dag  +  \iota_{\xi}d_{\omega}(\iota_\xi \underline{u}^\dag) k^\dag 
 \\ &+   \iota_{\xi}d_{\omega}(\lambda \epsilon_n^{[a]} \underline{u}^{\dag}_a) k^\dag 
 +  \iota_{\xi}[\lambda \epsilon_n^{[n]} u^{\dag}, \underline{w} 
 -  \iota_\xi \underline{u}^\dag]  k^\dag  
 -  \underline{F_{\tom_{an}}} \lambda \epsilon^{[a]}_n k^\dag   
 + \underline{\partial}_n c k^{\dag} 
 \\ &-  \underline{\partial}_n (\iota_{\xi} \lambda \epsilon_n^{[n]} u^{\dag}) k^{\dag} 
 +  [ \underline{w},c]k^{\dag} -  [\iota_\xi \underline{u}^\dag,c]k^{\dag}
 -  [\lambda \epsilon_n^{[a]} \underline{u}^{\dag}_a,c]k^{\dag} 
 \\ &+  [ \underline{w} - \iota_\xi \underline{u}^\dag,\iota_{\xi} \lambda \epsilon_n^{[n]} u^{\dag}]k^{\dag} 
\shortintertext{} 
    & +  L_{\xi}^{\omega} e e^{N-3} \underline{u}^{\dag} 
     +   L_{\xi}^{\omega} e \iota_{\underline{z}} e^\dag 
     -  L_{\xi}^{\omega} e  \lambda \epsilon_n^{[a]} \underline{y}_a^\dag 
     + (N-3)  L_{\xi}^{\omega} e e^{N-4}\lambda \epsilon_n^{[n]}f^{\dag}\underline{u}^\dag\\
    & +  ((d_{\omega}e)_a \lambda \epsilon_n^{[a]}  
     -  d_{\omega}(e_a \lambda \epsilon_n^{[a]})) e^{N-3} \underline{u}^{\dag} 
     + ((d_{\omega}e)_a \lambda \epsilon_n^{[a]} 
     -  d_{\omega}(e_a \lambda \epsilon_n^{[a]})) \iota_{\underline{z}} e^\dag \\
    & +  d_{\omega}(e_a \lambda\epsilon_n^{[a]})\lambda\epsilon_n^{[a]} \underline{y}_a^\dag
     - (N-3) d_{\omega}(e_a \lambda \epsilon_n^{[a]})e^{N-4}\lambda \epsilon_n^{[n]}f^{\dag}\underline{u}^\dag 
     + L_{\xi}^{\omega} (\lambda \epsilon_n^{[n]} f^{\dag}) e^{N-3} \underline{u}^{\dag}\\ 
    & + L_{\xi}^{\omega} (\lambda \epsilon_n^{[n]} f^{\dag}) \iota_{\underline{z}} e^\dag 
     - L_{\xi}^{\omega} (\lambda \epsilon_n^{[n]} f^{\dag}) \lambda \epsilon_n^{[a]} \underline{y}_a^\dag 
     +  (d_{\omega}(\lambda \epsilon_n^{[n]} f^{\dag}))_a \lambda \epsilon_n^{[a]} e^{N-3} \underline{u}^{\dag} \\  
    & +  (d_{\omega}(\lambda \epsilon_n^{[n]} f^{\dag}))_a \lambda \epsilon_n^{[a]}\iota_{\underline{z}} e^\dag)  
     - [\iota_{\xi}\lambda \epsilon_n^{[n]} u^{\dag} ,e] (e^{N-3} \underline{u}^{\dag} + \iota_{\underline{z}} e^\dag) 
     + \partial_n e \lambda \epsilon_n^{[n]} e^{N-3} \underline{u}^{\dag} \\
    & + \partial_n e \lambda \epsilon_n^{[n]} \iota_{\underline{z}} e^\dag 
     +  [w  - \iota_{\xi}u^{\dag}, e] \lambda \epsilon_n^{[n]} e^{N-3} \underline{u}^{\dag} 
 +  [w  - \iota_{\xi}u^{\dag}, e] \lambda \epsilon_n^{[n]}\iota_{\underline{z}} e^\dag 
 \\ &+   \partial_n (\lambda \epsilon_n^{[n]} f^{\dag}) \lambda \epsilon_n^{[n]}(e^{N-3} \underline{u}^{\dag} + \iota_{\underline{z}} e^\dag) 
 +  \underline{e_n}d(\lambda \epsilon_n^{[n]}) e^{N-3} \underline{u}^{\dag} 
 +  \underline{e_n}d(\lambda \epsilon_n^{[n]})\iota_{\underline{z}} e^\dag
 \\ &+  \lambda \epsilon_n^{[a]} f^{\dag}_a d(\lambda \epsilon_n^{[n]})  e^{N-3} \underline{u}^{\dag} 
 +  \lambda \epsilon_n^{[a]} f^{\dag}_a d(\lambda \epsilon_n^{[n]})  \iota_{\underline{z}} e^\dag
 -  \underline{e_n}d(\lambda \epsilon_n^{[n]}) \lambda \epsilon_n^{[a]} \underline{y}_a^\dag 
\\ & + (N-3) \underline{e_n}d(\lambda \epsilon_n^{[n]})e^{N-4}\lambda \epsilon_n^{[n]}f^{\dag}\underline{u}^\dag
 -  [c,e]e^{N-3} \underline{u}^{\dag} 
 -  [c,e] \iota_{\underline{z}} e^\dag
 + [c,e]\lambda \epsilon_n^{[a]} \underline{y}_a^\dag 
 \\ &- (N-3)[c,e]e^{N-4}\lambda \epsilon_n^{[n]}f^{\dag}\underline{u}^\dag 
 -  [c, \lambda \epsilon_n^{[n]} f^{\dag}] e^{N-3} \underline{u}^{\dag}
 -  [c, \lambda \epsilon_n^{[n]} f^{\dag}]\iota_{\underline{z}} e^\dag 
 \\ &+  [\iota_{\xi} \lambda \epsilon_n^{[n]} u^{\dag},e](e^{N-3} \underline{u}^{\dag}
 + \iota_{\underline{z}} e^\dag)
\shortintertext{} &+ \iota_{\xi} d_{\omega}(  \epsilon_n \underline{\mu})e^\dag
 +  \iota_{\xi} d_{\omega}(\iota_{\underline{z}} e )e^\dag 
 +   \iota_{\xi} d_{\omega}(\lambda \epsilon_n^{[a]} \underline{f}^{\dag}_a)e^\dag
 +  d_{\omega_a}( \underline{e_n})\lambda \epsilon_n^{[a]} e^\dag
 \\ &+   d_{\omega_a}(\lambda \epsilon_n^{[a]} \underline{f}^{\dag}_a)\lambda \epsilon_n^{[a]} e^\dag 
 -  \iota_{\xi} [\lambda \epsilon_n^{[n]} u^{\dag}, \underline{e_n}] e^\dag 
 -  \iota_{\xi} d_{\omega}( \underline{e_n} )\lambda \epsilon_n^{[n]} y_n^\dag 
\\ & -  \iota_{\xi} d_{\omega}(\lambda \epsilon_n^{[a]} \underline{f}^{\dag}_a)\lambda \epsilon_n^{[n]} y_n^\dag
 -  \iota_{\underline{\partial}_n \xi } e e^\dag 
 +    e_a \underline{\partial}_n (\lambda \epsilon_n^{[a]}) e^\dag 
 - \iota_{\underline{\partial}_n \xi{} } \lambda \epsilon_n^{[n]} f^{\dag} e^\dag
 \\ &+ \iota_{\underline{\partial}_n \xi{} } e \lambda \epsilon_n^{[n]} y_n^\dag
 -  \underline{\partial}_n ( \te_n \lambda \epsilon_n^{[n]} )e^\dag 
 -  [\underline{w}  -  \iota_\xi \underline{u}^\dag, \te_n \lambda \epsilon_n^{[n]}]e^\dag 
 \\ &+  \underline{\partial}_n ( \te_n \lambda \epsilon_n^{[n]} )\lambda \epsilon_n^{[n]} y_n^\dag
 -  [c,  \epsilon_n \underline{\mu} ]e^\dag 
 -  [c,\iota_{\underline{z}} e]e^\dag 
 -  [c, \lambda \epsilon_n^{[a]} \underline{f}^{\dag}_a]e^\dag 
 \\ &+  [\iota_{\xi} \lambda \epsilon_n^{[n]} u^{\dag}, \underline{e_n} ]e^\dag  
 + [c, \underline{e_n} ]\lambda \epsilon_n^{[n]} y_n^\dag 
\shortintertext{} &- \frac{1}{2} \iota_{\xi}\iota_{\xi} F_{\omega} \underline{c}^{\dag} 
 +  \frac{1}{2} \iota_{\xi}\iota_{\xi}d_{\omega}(\lambda \epsilon_n^{[n]} u^{\dag})\underline{c}^{\dag} 
 -  \iota_{\xi} F_{\omega_a} \lambda \epsilon_n^{[a]} \underline{c}^{\dag} 
 \\ &+  \iota_{\xi}d_{\omega}(\lambda \epsilon_n^{[n]} u^{\dag})_a \lambda \epsilon_n^{[a]} \underline{c}^{\dag} 
 -  \iota_{\xi} F_{\tom_n}  \lambda \epsilon_n^{[n]} \underline{c}^{\dag} 
 +  \frac{1}{2}[c,c]\underline{c}^{\dag}
 -  [\iota_{\xi} \lambda \epsilon_n^{[n]} u^{\dag},c]\underline{c}^{\dag}
\shortintertext{}& -  \xi^b \partial_b \xi^a e_a \underline{y}^\dag
 -   \xi^b \partial_b (\lambda \epsilon_n^{[a]})e_a \underline{y}^\dag 
 -  \lambda \epsilon_n^{[b]}\partial_b \xi^a e_a \underline{y}^\dag  
 -  \lambda \epsilon_n^{[b]}\partial_b (\lambda \epsilon_n^{[a]}) e_a \underline{y}^\dag  
\\ & - \lambda \epsilon_n^{[n]}\partial_n \xi{a} e_a \underline{y}^\dag
 - \xi^b \partial_b \xi^a\underline{f}_a^\dag e^\dag -   \xi^b \partial_b (\lambda \epsilon_n^{[a]})\underline{f}_a^\dag e^\dag 
 -  \lambda \epsilon_n^{[b]}\partial_b \xi^a \underline{f}_a^\dag e^\dag
 \\ &-  \lambda \epsilon_n^{[b]}\partial_b (\lambda \epsilon_n^{[a]}) \underline{f}_a^\dag e^\dag   
 - \lambda \epsilon_n^{[n]}\partial_n \xi{a} \underline{f}_a^\dag e^\dag 
 - \xi^b \partial_b \xi^a  \underline{u}_a^\dag k^\dag
 -   \xi^b \partial_b (\lambda \epsilon_n^{[a]})\underline{u}_a^\dag k^\dag 
\\ & -  \lambda \epsilon_n^{[b]}\partial_b \xi^a \underline{u}_a^\dag k^\dag 
 +  \lambda \epsilon_n^{[b]}\partial_b (\lambda \epsilon_n^{[a]})\underline{u}_a^\dag k^\dag 
 - \lambda \epsilon_n^{[n]}\partial_n \xi^a\underline{u}_a^\dag k^\dag 
 \\ &-  \lambda \epsilon_n^{[n]}\partial_n (\lambda \epsilon_n^{[a]})\underline{u}_a^\dag k^\dag 
 -  \xi^b \partial_b \xi^a \underline{c}_a^\dag \lambda \epsilon_n^{[n]} u^{\dag}
 -  \xi^b \partial_b (\lambda \epsilon_n^{[a]})\underline{c}_a^\dag \lambda \epsilon_n^{[n]} u^{\dag}
 \\ &-  \xi^a \partial_a (\lambda \epsilon_n^{[n]}) \te_n \underline{y}^\dag
 -  \lambda \epsilon_n^{[a]}\partial_a (\lambda \epsilon_n^{[n]})\te_n \underline{y}^\dag
 - \lambda \epsilon_n^{[n]}\partial_n (\lambda \epsilon_n^{[n]})\te_n \underline{y}^\dag
 \\ &-  \xi^a \partial_a (\lambda \epsilon_n^{[n]})f^\dag e^{N-3} \underline{u}^\dag 
 -  \lambda \epsilon_n^{[a]}\partial_a (\lambda \epsilon_n^{[n]})f^\dag e^{N-3} \underline{u}^\dag 
 - \lambda \epsilon_n^{[n]}\partial_n (\lambda \epsilon_n^{[n]})f^\dag e^{N-3} \underline{u}^\dag
 \\ &-  \xi^a \partial_a (\lambda \epsilon_n^{[n]})f^\dag\iota_{\underline{z}} e^\dag 
 -  \lambda \epsilon_n^{[a]}\partial_a (\lambda \epsilon_n^{[n]}) f^\dag\iota_{\underline{z}} e^\dag
 - \lambda \epsilon_n^{[n]}\partial_n (\lambda \epsilon_n^{[n]})f^\dag\iota_{\underline{z}} e^\dag
\\ & +  \xi^a \partial_a (\lambda \epsilon_n^{[n]})u^\dag \iota_{\underline{z}} k^\dag 
 +  \lambda \epsilon_n^{[a]}\partial_a (\lambda \epsilon_n^{[n]})u^\dag \iota_{\underline{z}} k^\dag 
 + \lambda \epsilon_n^{[n]}\partial_n (\lambda \epsilon_n^{[n]})u^\dag \iota_{\underline{z}} k^\dag
 \\ &-  \xi^a \partial_a (\lambda \epsilon_n^{[n]})c^\dag \lambda \epsilon_n^{[a]} \underline{u}^{\dag}_a 
\end{align*}
We want to compare this with the AKSZ action:
\begin{align}
S^{\AKSZ}_{PC}=\int_{I\times \partial M} &  \underline{w} e^{N-3} d_{\omega} e
	+(N-3) c e^{N-4} \underline{f^\dag} d_{\omega} e
	+c e^{N-3} [\underline{u^\dag}, e]
	+c e^{N-3} d_{\omega} \underline{f^\dag} \nonumber\\
	+& \iota_{\underline{z}} e e^{N-3} F_{\omega} 
 	+ \iota_{\xi}( e^{N-3} \underline{f^\dag}) F_{\omega}
	+ \iota_{\xi} e e^{N-3} d_{\omega}\underline{u^\dag} 
	+ \epsilon_n \underline{\mu} e^{N-3} F_{\omega}\nonumber \\
	+&(N-3) \epsilon_n \lambda e^{N-4} \underline{f^\dag} F_{\omega}
	+ \epsilon_n \lambda e^{N-3} d_{\omega}\underline{u^\dag}  
	+ [\underline{w},c] k^{\dag}
	+\frac{1}{2} [c,c] \underline{c^{\dag}} \nonumber \\
	-& \iota_{\underline{z}} d_{\omega} c k^{\dag}
	- [\iota_{\xi}\underline{u^\dag} ,c] k^{\dag}
	- \iota_{\xi} d_{\omega}\underline{w} k^{\dag} 
	- \iota_{\xi} d_{\omega} c \underline{c^{\dag}}
	+ \iota_{\underline{z}}\iota_{\xi} F_{\omega}k^{\dag}\nonumber\\
	+& \frac{1}{2} \iota_{\xi}\iota_{\xi} d_{\omega}\underline{u^\dag} k^{\dag}
	+ \frac{1}{2} \iota_{\xi}\iota_{\xi} F_{\omega}\underline{c^{\dag}}
	-[\underline{w},  \epsilon_n \lambda]e^{\dag}
	-[c,  \epsilon_n \underline{\mu}]e^{\dag} 
	-[c,  \epsilon_n \lambda]\underline{y^{\dag}} \nonumber \\
	+& \iota_{\underline{z}}d_{\omega} ( \epsilon_n\lambda)e^{\dag} 
	+ [\iota_{\xi}\underline{u^\dag} , \epsilon_n\lambda]e^{\dag}
	+ \iota_{\xi}d_{\omega} ( \epsilon_n\underline{\mu})e^{\dag} 
	+ \iota_{\xi}d_{\omega} ( \epsilon_n\lambda)\underline{y^{\dag}}
	+ \iota_{[\underline{z},\xi]}e e^{\dag}\nonumber \\
  	+& \frac{1}{2}\iota_{[\xi,\xi]}\underline{f^\dag} e^{\dag}
	+ \frac{1}{2}\iota_{[\xi,\xi]}e \underline{y^{\dag}}  
	+ \frac{1}{N-2} e^{N-2} \underline{\partial_n} \omega  
	+ c \underline{\partial_n} k^\dag
	+ \underline{\partial_n} \omega \iota_\xi k^\dag \nonumber \\ 
	-& \iota_{\underline{\partial_n} \xi} e  e^\dag 
	+ \underline{\partial_n} \lambda \epsilon_n e^\dag. \label{e:S_AKSZ-compare_Nd}
\end{align}

We proceed as follows: we first check that all terms in \eqref{e:S_AKSZ-compare_Nd} can be found in \eqref{e:phiS_AKSZ_Nd}, then we show that all other terms in \eqref{e:phiS_AKSZ_Nd} sum to zero. \\
We can easily recognized many terms identically repeated in both expressions. 
Some other terms in \eqref{e:S_AKSZ-compare_Nd} can be recovered in \eqref{e:phiS_AKSZ_Nd}  using Leibniz rule and Cartan calculus.

\begin{align*}
(N-3) c e^{N-4} \underline{f}^\dag d_{\omega} e + c ^{N-3} d_{\omega} \underline{f}^\dag =  d_{\omega} c  ( e^{N-3} \underline{f}^\dag); 
\end{align*}
\begin{align*}
-\frac{1}{N-2} e^{N-2} d_{\omega}( \iota_\xi \underline{u}^\dag) +L_{\xi}^{\omega} e e^{N-3} \underline{u}^{\dag}= +\iota_{\xi} e e^{N-3} d_{\omega} \underline{u}^{\dag}; 
\end{align*}
\begin{align*}
 - \frac{1}{2}\iota_{[\xi,\xi]}\underline{u}^\dag k^\dag = - \iota_{\xi} d_{\omega}\iota_{\xi}\underline{u}^\dag k^\dag + \frac{1}{2} \iota_{\xi} \iota_{\xi}d_{\omega} \underline{u}^\dag k^\dag;
\end{align*}
\begin{align*}
\iota_{[\underline{z},\xi]}e e^{\dag} =  \iota_{\xi} d_{\omega} \iota_{\underline{z}} e e^{\dag} + L_{\xi}^{\omega} e \iota_{\underline{z}} e^{\dag}
\end{align*}

All the other relations involving terms of 	\eqref{e:phiS_AKSZ_Nd} are based on the expansion 
\begin{align*}
 \epsilon_n = e_a  \epsilon_n^{(a)} +e_n  \epsilon_n^{(n)}.
\end{align*}

It is a long but rather easy computation to show that the remaining terms in \eqref{e:S_AKSZ-compare_Nd} sum to zero.  This is done by making repeated use of Cartan calculus and Leibniz rule. Notice also that some terms containing expressions of the form $\epsilon_{n}^{[a]}\epsilon_{n}^{[b]} X_{ab}$ vanish by antisymmetry.
	
\printbibliography
\end{document}